\documentclass[a4paper,twocolumn,superscriptaddress,accepted=2024-12-17]{quantumarticle}
\pdfoutput=1
\usepackage[T1]{fontenc}
\usepackage[utf8]{inputenc}
\usepackage{graphicx}
\graphicspath{{Figs}}
\usepackage{float}
\usepackage[plainpages=false,pdfpagelabels,hypertexnames=false]{hyperref}
\usepackage[usenames,dvipsnames,table]{xcolor}
\usepackage{braket}
\usepackage{amsmath,amssymb}
\usepackage{bbm}
\usepackage{physics}
\usepackage{amsthm}
\usepackage{bm}
\usepackage[square,numbers]{natbib}
 
\usepackage{ulem}
\normalem

\usepackage{tikz}
\usepackage{pgfplots}
\usetikzlibrary{arrows}
\usetikzlibrary{calc,shapes.geometric}
\usetikzlibrary{decorations.pathmorphing}

\usepackage{diagbox}

\makeatletter
\newcommand{\pushright}[1]{\ifmeasuring@#1\else\omit\hfill$\displaystyle#1$\fi\ignorespaces}
\makeatother

\newtheorem{lem}{Lemma}

\newtheorem{ub}{Upper bound}

\newcommand{\JM}{\textit{JM}}

\theoremstyle{definition}

\usepackage[caption=false]{subfig}

\tikzset{
  antenna/.style={
    isosceles triangle,
    fill=black,
    minimum width=0.2cm,
    inner sep=0pt,
    shape border rotate=90
  }
}

\pgfplotsset{
  myplotset/.style={
    tick style = {thick},
    grid = major,
    major grid style = {gray!25},
  }
}




\begin{document}

\title{Joint-measurability and quantum communication with untrusted devices}

\author{Michele Masini}
\email{michele.masini@ulb.be}
\affiliation{Laboratoire d'Information Quantique, Universit\'e libre de Bruxelles (ULB), Belgium}

\author{Marie Ioannou}
\affiliation{Department of Applied Physics, University of Geneva, Switzerland}

\author{Nicolas Brunner}
\email{Nicolas.Brunner@unige.ch}
\affiliation{Department of Applied Physics, University of Geneva, Switzerland}

\author{Stefano Pironio}
\email{stefano.pironio@ulb.be}
\affiliation{Laboratoire d'Information Quantique, Universit\'e libre de Bruxelles (ULB), Belgium}

\author{Pavel Sekatski}
\email{pavel.sekatski@gmail.com}
\affiliation{Department of Applied Physics, University of Geneva, Switzerland}

\date{21 March 2024}

\begin{abstract}
Photon loss represents a major challenge for the implementation of quantum communication protocols with untrusted devices, e.g. in the device-independent (DI) or semi-DI approaches. Determining critical loss thresholds is usually done in case-by-case studies. In the present work, we develop a general framework for characterizing the admissible levels of loss and noise in a wide range of scenarios and protocols with untrusted measurement devices. In particular, we present general bounds that apply to prepare-and-measure protocols for the semi-DI approach, as well as to Bell tests for DI protocols. A key step in our work is to establish a general connection between quantum protocols with untrusted measurement devices and the fundamental notions of channel extendibility and joint-measurability, which capture essential aspects of the communication and measurement of quantum information. In particular, this leads us to introduce the notion of partial joint-measurability, which naturally arises within quantum cryptography.
\end{abstract}

\maketitle

\section{Introduction}
Photon loss, which results from unavoidable absorption and scattering in optical channels, as well as limited detector efficiency, represents one of the key challenges for the implementation of long-distance quantum communication. This issue becomes even more critical in experiments and applications where quantum states are transmitted through untrusted channels and measured using untrusted apparatuses. This is common in quantum cryptography, most notably within the device-independent (DI) \cite{Acin2007,diqkd_review} or semi-device-independent (SDI) \cite{pawlowski2011semi,woodhead2015secrecy} approaches, but also in fundamental tests of quantum physics such as quantum Bell nonlocality \cite{Bell,non-locality} and steering \cite{wiseman2007,steering}. 

In these situations, certifying quantum properties such as entanglement or randomness, and achieving quantum advantages, such as information-theoretic security, becomes impossible when photon losses exceed a certain threshold. The underlying reason is that the untrusted channels and measurement devices could potentially implement processes in which losses are not merely passive phenomena, but are influenced by the choice of the experimenter's measurement settings in a way that could, for instance, allow for blinding attacks \cite{lydersen2010hacking,gerhardt2011full,jain2011device,bugge2014laser}. In such attacks, the eavesdropper might selectively and remotely cause detection events based on whether their own measurement matches that of the receiver. More generally, the most famous example of this is arguably the so-called ``detection loophole'' \cite{eberhard_efficiency}, that plagued experimental tests of quantum nonlocality for decades, before experiments could finally close it, and, eventually led to the celebrated loophole-free Bell experiments. 

The level of admissible losses is generally highly dependent on the specific quantum communication protocol that is considered. 
Moreover, this tolerance also depends on the detailed aspects of the experimental setup, including the specific states that are intended to be transmitted through the channel and the expected measurements to be performed on them. An intense effort has been devoted to characterizing critical loss thresholds. This started in the context of Bell experiments (see e.g. \cite{eberhard_efficiency,massar_violation_2003,Vertesi2010,Branciard2011} and \cite{Larsson2014,non-locality} for reviews) as well as in steering test (see e.g. \cite{Wittmann2012,Bennet2012,Srivastav2022}). In turn, critical efficiencies for DI and SDI protocols have also been discussed (see e.g. \cite{acin2016necessary,ioannou2022receiver}). While all these works have played a significant role towards the first loophole-free Bell tests \cite{Hensen2015,Shalm2015,Giustina2015} and proof-of-principle implementations of DI protocols \cite{Pironio2010,Nadlinger2022,Zhang2022}, it is fair to say that our understanding is currently limited to few specific cases. 

In this paper, we develop a general framework for characterizing critical losses in quantum communication setups. This allows us to establish lower-bounds on admissible losses that are applicable to a wide range of experiments and protocols involving untrusted measurement devices. The key point of our work is to establish a connection with the notion of ``joint-measurability'', which captures the incompatibility of quantum measurements \cite{Heinosaari2016,Busch2016,uola2023}.

Our starting point is to introduce the concept of a ``channel-measurement unit'' (CMU), which represents a fundamental component of quantum communication protocols with untrusted devices. Specifically, the CMU consists of an untrusted communication channel $C$ followed by an untrusted measuring device $M$, as depicted in Fig.~\ref{fig:CMU}. The CMU serves as the basic building block for a broad range of setups. The simplest examples are SDI prepare-and-measure scenarios \cite{Gallego2010,Wang2019}, as in Fig.~\ref{fig:exp}.a), and bipartite Bell experiment where a pair of entangled photons are sent to two separated CMUs (see Fig.~\ref{fig:exp}.b.). These setups serve as a basis for SDI QKD protocols \cite{pawlowski2011semi,woodhead2015secrecy,ioannou2022receiver}) and DI protocols \cite{Acin2007,Nadlinger2022} respectively. More generally, the CMU can also be seen as a part of more complex scenarios, e.g. involving quantum networks or quantum circuits (see Fig.~\ref{fig:exp}.c). 

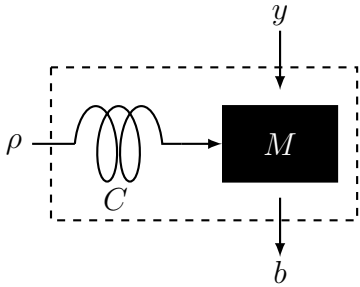
\begin{figure}[t]
    \centering
    \begin{tikzpicture}[auto, thick, node distance=2.5cm, >=latex]
    
    \node [draw, fill=black, minimum width=1.5cm, minimum height=1cm] (box) {\large \textcolor{white}{$M$}};
    \node at (box.center) {};
    
    \node [above of=box, node distance=1cm] (input) {};
    \node [below of=box, node distance=1cm] (output) {};
    
    \node [left of=box, node distance=3.5cm] (rho) {\large $\rho$};
    
    \draw [dashed] ($(box.north west)+(-2.25,0.5)$) rectangle node {} ($(box.south east)+(0.25,-0.5)$);
    
    \node at ($(box.north west)+(-1.4,-1.25)$) {\large $C$};
    
    \draw [->] ($(input)+(0,0.5)$) -- ($(box.north)+(0,0.2)$);
    \draw [->] ($(box.south)+(0,-0.2)$) -- ($(output)-(0,0.5)$);
    \draw [-] (rho) -- (-2.7,0);
    \draw [->] (-1.3,0) -- (box.west);
    
    \draw [decorate, decoration={coil,     aspect=0.4, amplitude=5 mm,  segment length=3mm}, black, thick] (-2.7,0) -- (-1.3,0);
    
    \node [above of=box, node distance=1.7cm] {\large $y$};
    \node [below of=box, node distance=1.7cm] {\large $b$};
    
    \end{tikzpicture}
    \caption{A channel-measurement unit (CMU) is composed of an untrusted channel $C$ and an untrusted measurement device $M$. The CMU receives as input a quantum state $\rho$ that is transmitted through the channel $C$ and then measured by $M$. The classical input $y$ denotes the choice of measurement (performed e.g. by the honest user), while its output is denoted $b$. }\label{fig:CMU}
\end{figure}

We aim to determine the admissible level of loss in the CMU before it loses its ability to exhibit `quantumness', i.e. when it becomes classically simulable. We provide several natural definitions for this, which are shown to be all equivalent to the condition that the effective POMVs implemented by the CMU, representing the combined effect of the channel and the measurement device, are jointly measurable. An equivalent condition, based on the channel extendibility of the CMU, is also particularly convenient to work with. Using these definitions, we provide several upper bounds on the loss threshold. Notably, these bounds typically depend only on the number of measurements implemented by the CMU, and on the noise level, but not on the details of the channel and/or the measurements. 

More generally, our work motivates the investigation of joint measurability for sets of measurements that are not only noisy, the case most existing works have focused on \cite{Quintino2014,uola_joint_2014,Heinosaari2015,Bavaresco2017,Designolle2019,Designolle2019njp}, but that also feature losses, which has been considered only in a limited number of previous works \cite{Skrzypczyk2015,Ioannou2022,sekatski2023unlimited,sekatski2023compatibility}. 

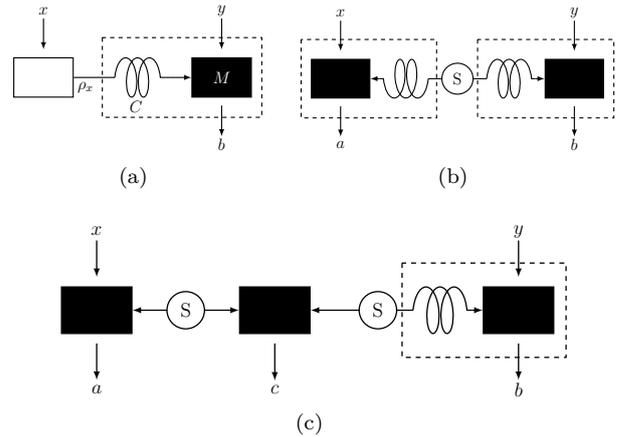
\begin{figure}[t!]
\centering
\subfloat[]{
\resizebox{0.2\textwidth}{!}{
\begin{tikzpicture}[auto, thick, node distance=4.5cm, >=latex]

\node [draw, fill=black, minimum width=1.5cm, minimum height=1cm] (box) {\large \textcolor{white}{$M$}};
\node at (box.center) {};

\node [above of=box, node distance=1cm] (input) {};
\node [below of=box, node distance=1cm] (output) {};

\node [draw, left of=box, fill=white, minimum width=1.5cm, minimum height=1cm] (prep) {};
\node [above of=prep, node distance=1.7cm] {\large $x$};
\node [above of=prep, node distance=1cm] (inputx) {};
\draw [->] ($(inputx)+(0,0.5)$) -- ($(prep.north)+(0,0.2)$);

\draw [dashed] ($(box.north west)+(-2.25,0.5)$) rectangle node {} ($(box.south east)+(0.25,-0.5)$);

\node at ($(box.north west)+(-1.4,-1.25)$) {\large $C$};

\draw [->] ($(input)+(0,0.5)$) -- ($(box.north)+(0,0.2)$);
\draw [->] ($(box.south)+(0,-0.2)$) -- ($(output)-(0,0.5)$);
\draw [-] (prep) -- (-2.7,0);
\draw [->] (-1.3,0) -- (box.west);
\node at ($(prep.east)+(0.3,-0.25)$) {\large $\rho_x$};

\draw [decorate, decoration={coil,     aspect=0.4, amplitude=5 mm,  segment length=3mm}, black, thick] (-2.7,0) -- (-1.3,0);

\node [above of=box, node distance=1.7cm] {\large $y$};
\node [below of=box, node distance=1.7cm] {\large $b$};

\end{tikzpicture}
}}\hfill
\subfloat[]{
\resizebox{0.25\textwidth}{!}{
\begin{tikzpicture}[auto, thick, node distance=3cm, >=latex]

\node (rho) {\large S};
\draw (rho) circle (0.4);

\node [draw, fill=black, minimum width=1.5cm, minimum height=1cm, right of=rho, node distance=3cm] (box) {};
\node at (box.center) {};

\node [above of=box, node distance=1cm] (input) {};
\node [below of=box, node distance=1cm] (output) {};

\draw [dashed] ($(box.north west)+(-1.72,0.5)$) rectangle node {} ($(box.south east)+(0.25,-0.5)$);

\draw [->] ($(input)+(0,0.5)$) -- ($(box.north)+(0,0.2)$);
\draw [->] ($(box.south)+(0,-0.2)$) -- ($(output)-(0,0.5)$);
\draw [-] (0.4,0) -- (0.75,0);
\draw [->] (2,0) -- (box.west);

\draw [decorate, decoration={coil, aspect=0.4, amplitude=5 mm,  segment length=3mm}, black, thick] (0.75,0) -- (2,0);

\node [draw, fill=black, minimum width=1.5cm, minimum height=1cm, left of=rho, node distance=3cm] (whitebox) {};
\node at (whitebox.center) {};

\draw [dashed] ($(whitebox.north west)+(-0.25,0.5)$) rectangle node {} ($(whitebox.south east)+(1.7,-0.5)$);

\node [above of=whitebox, node distance=1cm] (input2) {};
\node [below of=whitebox, node distance=1cm] (output2) {};

\draw [->] ($(input2)+(0,0.5)$) -- ($(whitebox.north)+(0,0.2)$);
\draw [->] ($(whitebox.south)+(0,-0.2)$) -- ($(output2)-(0,0.5)$);
\draw [-] (-0.4,0) -- (-0.75,0);
\draw [->] (-2,0) -- (whitebox.east);

\draw [decorate, decoration={coil, aspect=0.4, amplitude=5 mm,  segment length=3mm}, black, thick] (-0.75,0) -- (-2,0);

\node [above of=box, node distance=1.7cm] {\large $y$};
\node [below of=box, node distance=1.7cm] {\large $b$};
\node [above of=whitebox, node distance=1.7cm] {\large $x$};
\node [below of=whitebox, node distance=1.7cm] {\large $a$};

\end{tikzpicture}
}}\hfill
\subfloat[]{
\resizebox{0.4\textwidth}{!}{
\begin{tikzpicture}[auto, thick, node distance=3cm, >=latex]

\node (rho) {\large S};
\draw (rho) circle (0.4);

\node [draw, fill=black, minimum width=1.5cm, minimum height=1cm, right of=rho, node distance=3cm] (box) {};
\node at (box.center) {};

\node [above of=box, node distance=1cm] (input) {};
\node [below of=box, node distance=1cm] (output) {};

\draw [dashed] ($(box.north west)+(-1.72,0.5)$) rectangle node {} ($(box.south east)+(0.25,-0.5)$);

\draw [->] ($(input)+(0,0.5)$) -- ($(box.north)+(0,0.2)$);
\draw [->] ($(box.south)+(0,-0.2)$) -- ($(output)-(0,0.5)$);
\draw [-] (0.4,0) -- (0.75,0);
\draw [->] (2,0) -- (box.west);

\draw [decorate, decoration={coil, aspect=0.4, amplitude=5 mm,  segment length=3mm}, black, thick] (0.75,0) -- (2,0);

\node [draw, fill=black, minimum width=1.5cm, minimum height=1cm, left of=rho, node distance=6cm] (whitebox) {};
\node at (whitebox.center) {};

\node [right of=whitebox, node distance=1.9cm] (s2) {\large S};
\draw [->] (-4.5,0) -- (whitebox);
\draw (s2) circle (0.4);

\node [draw, fill=black,minimum width=1.5cm, minimum height=1cm, left of=rho, node distance=2.2cm] (node) {};
\draw [->] (-3.7,0) -- (node);
\draw [->] ($(node.south)+(0,-0.2)$) -- ($(node.south)-(0,1.)$);

\node [above of=whitebox, node distance=1cm] (input2) {};
\node [below of=whitebox, node distance=1cm] (output2) {};

\draw [->] ($(input2)+(0,0.5)$) -- ($(whitebox.north)+(0,0.2)$);
\draw [->] ($(whitebox.south)+(0,-0.2)$) -- ($(output2)-(0,0.5)$);
\draw [->] (-0.4,0) -- (node);

\node [above of=box, node distance=1.7cm] {\large $y$};
\node [below of=box, node distance=1.7cm] {\large $b$};
\node [above of=whitebox, node distance=1.7cm] {\large $x$};
\node [below of=whitebox, node distance=1.7cm] {\large $a$};
\node [below of=node, node distance=1.7cm] {\large $c$};

\end{tikzpicture}
}}\hfill
    \caption{The CMU is a basic component of many experiments with untrusted devices, such as (a) a simple prepare-and-measure scenario, (b) A Bell test, (c) a quantum network featuring several sources and measurements}\label{fig:exp}
\end{figure}

We consider two different scenarios for taking losses into account in protocols with untrusted devices. The first consists of attributing a specific measurement outcome (an inconclusive `no-click' outcome) to events where the photons are lost. We refer to this as the \emph{no-click scenario}, corresponding e.g. to experiments involving single-photon detectors. The second consists of general optical scenarios where measurements may always produce a conclusive outcome, as, e.g., in continuous variable setups based on homodyne (or heterodyne) measurements. In this case, we consider a standard model of loss for the optical channel, corresponding to mixing the input system with a thermal state through a beam splitter of limited transmissivity. We refer to this as the \emph{thermal-noise channel scenario}.

In the no-click scenario, we introduce general bounds on critical loss thresholds, which, in their simplest form, recover or improve on certain known bounds \cite{massar_violation_2003,acin2016necessary}. In particular, we prove a bound on $N$-extendibility of channels combining white noise and loss. In the thermal-noise channel scenario, we point out that it is possible to apply directly existing bounds for channel extendibility \cite{lami2019} and joint-measurability \cite{rahimi-keshari_verification_2021}, but we also provide in addition explicit strategies for achieving these bounds. 

Finally, we examine in more detail the applications of our results to QKD. In particular, analogously to \cite{acin2016necessary}, we point out that in a QKD protocol where one of the parties uses a CMU, no key rate can be extracted already when the subset of measurements used to generate the key is jointly measurable. This leads us to introduce a weaker notion of joint measurability which we call `partial joint-measurability'.

The paper is structured as follows. In Section~\ref{sec:JM}, we introduce the CMU and 
make it clear that the concept of joint-measurability is appropriate for defining at what point a CMU loses its utility in a quantum protocol with untrusted devices and we relate it to other properties of the CMU, such as channel extendibility. In Section~\ref{sec:DV}, we derive upper-bounds in the no-click scenario and in Section~\ref{sec:saa} in the thermal-noise channel scenario. In Section~\ref{sec:qkd}, we discuss the applications of our bounds to QKD, define the notion of `partial joint-measurability', and provide as an illustration upper-bounds on the key rate for various classes of DI QKD protocols. Finally, we conclude in Section~\ref{sec:discussion}.

\section{\label{sec:JM} Joint-measurability of a CMU}
A CMU consists of a quantum channel $C$ followed by a measurement device $M$, as in Fig.~1. The channel is described by CPTP map $\Phi$, while the measurement is given by a set of POVMs $M_y =\{M_{b|y}\}_b$, where $M_{b|y} \geq 0$ and $\sum_b M_{b|y} = \openone$ for all $y,b$. Note that we use the notation $M_y$ to denote the POVMs and $M_{b|y}$ their elements.

From a black-box perspective, the CMU is characterized by the probabilities $P(b|y,\rho)$ of obtaining a measurement outcome $b$, given the measurement input $y$ and the incoming quantum state $\rho$. Formally we have that
\begin{equation}
    P(b|y,\rho) = \Trace{\left[\Phi(\rho)\, M_{b|y}\right]}\,.    
\end{equation}
The CMU can also be described through a set of effective POVMs with elements $M^\Phi_{b|y} = \Phi^*\left(M_{b|y}\right)$, where $\Phi^*$ is the dual channel of $\Phi$. These POVMs describe the combined effect of the channel and of the measurements. The probabilities $P(b|y,\rho)$ can then also be written as
\begin{equation}
    P(b|y,\rho) = \Trace{\left[\rho\,  M^\Phi_{b|y}\right]}\,.    
\end{equation}

We would like to determine the admissible level of loss in the CMU, regardless of its usage in a larger experiment, and in a way that is independent of the state preparation, before it loses its ability to exhibit `quantumness'. Since most experiments and protocols with untrusted devices require the distribution of entanglement and/or incompatible measurements, we could define this threshold using one of the two following natural conditions:
\begin{enumerate}
    \item[(1)] The CMU can be replaced by an equivalent CMU for which the channel $\Phi$ is entanglement-breaking.
    \item[(2)] The CMU can be replaced by an equivalent CMU for which the $N$ measurements $M_{y}$ are compatible, i.e., jointly measurable.
   \end{enumerate}
By `equivalent' CMU, we mean one that may differ in implementation from the original CMU, but yields the same output probabilities $P(b|y,\rho)$ for all $\rho$, since from an operational and blackbox perspective, only these output probabilities are relevant.

As an example of the above definition, consider a standard bipartite nonlocality test.  If either Alice's or Bob's CMU satisfies condition (1) or (2), it becomes clear that the entire experiment admits a local hidden-variable theory and no violation of a Bell inequality is possible. Similar conclusions can be drawn for steering experiments. Moreover, even protocols that do not directly rely on entanglement may require condition (1) to hold. For example, in a prepare-and-measure quantum key distribution (QKD) scheme where the channel connecting Alice to Bob is entanglement-breaking, no key rate can be extracted \cite{curty_entanglement_2004}.

More generally, the purpose of a CMU is to distribute quantum information from the entry of the channel to a remote measurement device where one of several measurements should be performed. Then an alternative, and seemingly more stringent, way to define the threshold at which a CMU becomes ineffective is the following:
\begin{enumerate}
    \item[(3)] The CMU can be replaced by an equivalent CMU for which the channel $\Phi$ is a quantum-classical channel and the measurements $M_{y}$ represent classical measurements on the output of this channel.
\end{enumerate}
In a quantum experiment involving a CMU that satisfies condition (3), the quantum part of the experiment can be truncated just before the CMU, since all information processing that happens in the CMU is purely classical, see Fig.~\ref{fig:classical}. This criterion is used, for instance, in \cite{lobo2023certifying} to determine when a routed Bell experiment can establish only short-range quantum correlations. Furthermore, in any cryptographic protocol where the CMU channel is a public channel and satisfies condition (3), an eavesdropper could have a perfect copy of the information sent in that channel since it is purely classical.

\begin{figure}[t]
\centering
    \begin{tikzpicture}[auto, thick, node distance=2.5cm, >=latex]
    
    \node [draw, minimum width=1.5cm, minimum height=0.8cm] (laptop) {};
    \node at (laptop.center) {};
    \draw ([xshift=-0.62cm]laptop.south) rectangle ([xshift=0.65cm,yshift=-0.5cm]laptop.south);
    
    \draw ([xshift=-0.7cm,yshift=0.35cm]laptop) rectangle ([xshift=0.7cm,yshift=-0.35cm]laptop);
    
    \foreach \x in {0,...,5}
        \foreach \y in {0,...,2}
            \draw ([xshift=-0.55cm + \x * 0.2cm,yshift=-0.15cm - \y * 0.15cm]laptop.south) rectangle ++(0.15, 0.1);
    
    \node [above of=laptop, node distance=1cm] (input) {};
    \node [below of=laptop, node distance=1cm] (output) {};
    
    \node [left of=laptop, node distance=6cm] (rho) {\large $\rho$};
    
    \draw [dashed] ($(laptop.north west)+(-4.5,0.5)$) rectangle node {} ($(laptop.south east)+(0.25,-0.8)$);
    
    \node [draw, fill=black, minimum width=0.9cm, minimum height=0.9cm, right of=rho, node distance=1.5cm] (meas) {\textcolor{white}{E}};
    
    \draw [->] ($(input)+(0,0.5)$) -- ($(laptop.north)+(0,0.2)$);
    \draw [->] ($(laptop.south)+(0,-0.5)$) -- ($(output)-(0,0.8)$);
    \draw [->] (rho) -- (meas.west);
    \draw [->] (-1.8,0) -- (laptop.west);
    
    \node[antenna,xscale=.8] (antenna) at (-2.5,-0.25){};
    \fill[black] ($(antenna.north)$) circle[radius=.7mm];
    \foreach \r in {2mm,3.5mm,5mm}{
        \draw ($(antenna.north)+(-30:\r)$) arc[start angle=-30,end angle=50,radius=\r];
    }
    \foreach \r in {2mm,3.5mm,5mm}{
        \draw ($(antenna.north)+(130:\r)$) arc[start angle=130,end angle=210,radius=\r];
    }
    
    \node at ($(meas.south east)+(0.45,-0.45)$) (lambda) {$\lambda$};
    \draw[->] (meas.south) to [out=-90,in=-180] (lambda.west);
    \draw[->] (lambda.east) to [out=0,in=-90] ($(antenna.south)-(0,0.1)$);
    
    \node [above of=laptop, node distance=1.7cm] {\large $y$};
    \node [below of=laptop, node distance=2cm] {\large $b$};
    
    \end{tikzpicture}
    \caption{A CMU that satisfies conditions (3) and (4) can be implemented through a (single) quantum measurement $E_{\lambda}$ at the entrance of the channel, whose classical output $\lambda$ is then transmitted and  processed to generate the output $b$ depending on $y$.}\label{fig:classical}
\end{figure}
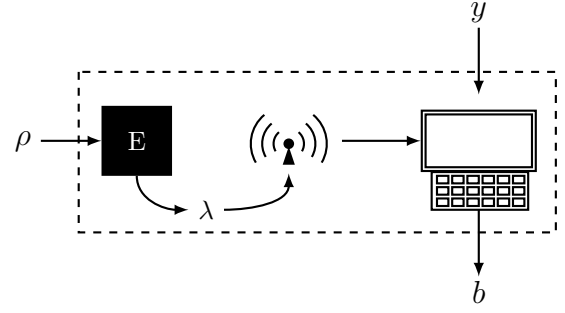

Although conditions (1), (2), (3) may appear distinct, they are actually all equivalent  in an untrusted scenario,  and equivalent to the following condition:
\begin{enumerate}
    \item[(4)] The set of effective POVMs $M^\Phi_{b|y}$ of the CMU are jointly measurable.
\end{enumerate}

The set of effective POVMs $M^\Phi_{b|y}$ is said to be jointly measurable \cite{uola2023} if there exists a \emph{parent POVM} $E$ with elements $E_\lambda$, and probability distributions $p(b|y,\lambda)$ such that
\begin{equation}\label{eq:jm}
    M^\Phi_{b|y} = \int_\lambda d\lambda\, p(b|y,\lambda) E_\lambda\,.
\end{equation}
Operationally, and as illustrated in Fig.~\ref{fig:classical}, this means that the CMU can in principle be implemented via the following strategy. First, at the entrance of the channel, upon receiving the quantum state $\rho$, one measures the parent POVM $E$; note that this is a single (fixed) quantum measurement, hence independent of the input $y$. The outcome $\lambda$ of the parent POVM is then transmitted through a classical channel to the remote measurement device, where a simple classical device generates the output $b$ using the probability distribution $p(b|y,\lambda)$, which depends on $y$ and on the classical outcome $\lambda$.

From the above operational interpretation of (4), it is immediate that conditions (3) and (4) are equivalent. It is also clear that conditions (2) and (4) are equivalent, since if the measurements $M_{y}$ are jointly measurable, the same holds for the effective measurements $M^\Phi_y$, and vice versa. Finally, conditions (1) and (4) are also equivalent, as shown in \cite{pusey_verifying_2015}. Indeed,  $\Phi$ is entanglement-breaking if $\Phi(\rho) = \sum_\lambda\sigma_\lambda \Trace{\left[E_\lambda \rho\right]} $. But then $P(b|y,\rho) = \sum_\lambda \Trace{\left[E_\lambda \rho\right]} \Trace{\left[\sigma_\lambda M_{b|y}\right]} = \Trace{\left[\rho M^\Phi_{b|y}\right]}$, for $M^\Phi_{b|y} = \sum_\lambda \Trace{\left[\sigma_\lambda M_{b|y}\right]} E_\lambda$ which is of the form \eqref{eq:jm} if we define $P(b|y,\lambda) = \Trace{\left[\sigma_\lambda M_{b|y}\right]}$. 
This argument can be extended to infinite-dimensional quantum systems; in this case, an entanglement breaking channel is of the form $\Phi(\rho) = \int \dd \lambda\, \sigma_\lambda \Tr \rho E_\lambda $ where the POVM is in general not atomic~\cite{holevo2005separability}.

Note that condition (4) is not only sufficient for guaranteeing that a CMU is ``useless'', but also a necessary one; indeed, any set of measurements that is incompatible (in the sense of not being jointly measurable) can be used to demonstrate the effect of quantum steering \cite{Quintino2014,uola_joint_2014}. 

In the following, we will thus take condition (4) as our definition of when a CMU does not exhibit quantumness. We say that a CMU is \emph{jointly-measurable} (\JM{}) whenever condition (4) holds.

Finally, note that there are other possible conditions than (1) on the channel of a CMU that are equivalent to (4). For instance, a channel is said to be \emph{incompatibility-breaking} if the set of measurements $\{\Phi^*(M_{y})\}$ are compatible for any $M_y$ \cite{heinosaari2015incompatibility}. It is said to be 
$N$-\textit{extendable} if there exists another 
channel $\Phi_{1\to N}$, called the parent channel, which outputs $N$ quantum system $S_1\dots,S_N$ such that
\begin{equation}\label{eq: marginal channels}
    \Tr_{S_\mu|\mu\neq k}\Phi_{1\to N}(\rho) = \Phi(\rho)
\end{equation}
for any output system $S_k$ and any input state $\rho$. Let us recall that $N$ is the number of measurements performed by our CMU. These notions give rise to the following additional conditions.
\begin{enumerate}
    \item[(5)] The CMU can be replaced by an equivalent CMU for which the channel $\Phi$ is incompatibility-breaking.
    \item[(6)] The CMU can be replaced by an equivalent CMU for which the channel $\Phi$ is $N$-extendable.
\end{enumerate}

The entanglement-breaking, incompatibility-breaking, and $N$-extendibility of a channel are in principle distinct properties. While every entanglement-breaking channel is $N$-extendable, and any $N$-extendable channel is incompatibility breaking, there exist incompatible channels that are not $N$-extendable, and $N$-extendable channels that are not entanglement-breaking. However, in the context of an untrusted CMU, the conditions (1), (5), and (6) are all equivalent to (4). This is because they are not statements about the properties of the channel in the actual realization of the CMU, but they are statements about an operationally equivalent CMU that reproduces the \emph{combined} effect of the \emph{channels} and of the \emph{measurements} of the original CMU.

Specifically, since a classical channel is both incompatibility-breaking and $N$-extendable , we clearly have (3)$\Rightarrow$(5) and (3)$\Rightarrow$(6), and thus (4)$\Rightarrow$(5) and (4)$\Rightarrow$(6). It is also clear that (5)$\Rightarrow$(4). On the other hand, if (6) holds, i.e., the channel $\Phi$ is $N
$-extendable, the effective POVMs $\Phi^*(M_{b|y})$ are jointly-measurable because they can be obtained by applying the channel $\Phi_{1\to N}$ on the input state, and then performing each of the $N$ different measurement $M_{y}$ for $y=1,\dots N$ on each of the $N$ output systems, as illustrated in Fig.~\ref{fig:cmu5}. Thus (6)$\Rightarrow$(4). We will make extensive use of this equivalence between the \JM{} property of the CMU and $N$-extendibility in the remainder of the paper.
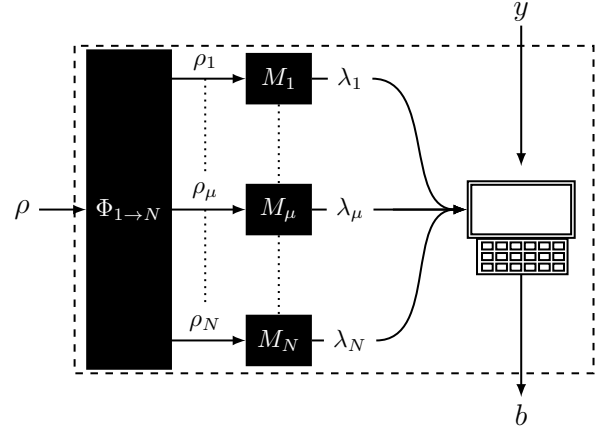
\begin{figure}[t]
    \centering
    \resizebox{.95\linewidth}{!}{\begin{tikzpicture}[auto, thick, node distance=2.5cm, >=latex]
    
    \node [draw, minimum width=1.5cm, minimum height=0.8cm] (laptop) {};
    \node at (laptop.center) {};
    \draw ([xshift=-0.62cm]laptop.south) rectangle ([xshift=0.65cm,yshift=-0.5cm]laptop.south);
    
    \draw ([xshift=-0.7cm,yshift=0.35cm]laptop) rectangle ([xshift=0.7cm,yshift=-0.35cm]laptop);
    
    \foreach \x in {0,...,5}
        \foreach \y in {0,...,2}
            \draw ([xshift=-0.55cm + \x * 0.2cm,yshift=-0.15cm - \y * 0.15cm]laptop.south) rectangle ++(0.15, 0.1);
    
    \node [above of=laptop, node distance=1cm] (input) {};
    \node [below of=laptop, node distance=1cm] (output) {};
    
    \node [left of=laptop, node distance=7cm] (rho) {\large $\rho$};
    
    \draw [dashed] ($(laptop.north west)+(-5.5,1.9)$) rectangle node {} ($(laptop.south east)+(0.25,-1.9)$);
    
    \node [draw, fill=black, minimum width=0.9cm, minimum height=4.5cm, right of=rho, node distance=1.5cm] (meas) {\textcolor{white}{$\Phi_{1\rightarrow N}$}};

    \node[draw, fill=black, minimum width=0.9cm, minimum height=0.7cm] (m1) at (-3.4,1.85) {\textcolor{white}{$M_1$}};
    \node[draw, fill=black, minimum width=0.9cm, minimum height=0.7cm] (m2) at (-3.4,0) {\textcolor{white}{$M_\mu$}};
    \node[draw, fill=black, minimum width=0.9cm, minimum height=0.7cm] (m3) at (-3.4,-1.85) {\textcolor{white}{$M_N$}};
    
    \node (l1) at (-2.4,1.85) {$\lambda_1$};
    \node (l2) at (-2.4,0) {$\lambda_\mu$};
    \node (l3) at (-2.4,-1.85) {$\lambda_N$};
    
    \draw [->] ($(input)+(0,1.6)$) -- ($(laptop.north)+(0,0.2)$);
    \draw [->] ($(laptop.south)+(0,-0.5)$) -- ($(output)-(0,1.65)$);
    \draw [->] (rho) -- (meas.west);
    \draw [->] (-1.8,0) -- (laptop.west);

    \draw[->] (-5,1.85) -- (m1.west) node[midway,above] (r1) {$\rho_1$};
    \draw[->] (-5,0) -- (m2.west) node[midway,above] (r2) {$\rho_\mu$};
    \draw[->] (-5,-1.85) -- (m3.west) node[midway,above] (r3) {$\rho_N$};
    
    \draw[-] (m1.east) -- (l1.west);
    \draw[-] (m2.east) -- (l2.west);
    \draw[-] (m3.east) -- (l3.west);
    
    \draw[-,dotted] (m1.south) -- (m2.north);
    \draw[-,dotted] (m2.south) -- (m3.north);
    \draw[-,dotted] (r1.south) -- (r2.north);
    \draw[-,dotted] (r2.south) -- (r3.north);
    
    \draw[->] (l1.east) to [out=0,in=180] (laptop.west);
    \draw[->] (l2.east) -- (laptop.west);
    \draw[->] (l3.east) to [out=0,in=180] (laptop.west);
    
    \node [above of=laptop, node distance=2.8cm] {\large $y$};
    \node [below of=laptop, node distance=2.9cm] {\large $b$};
    
    \end{tikzpicture}}
    \caption{A CMU satisfying condition (6) can be implemented by splitting the incoming system in $N$ systems, measuring each of the output systems with one of the $N$ different measurements, and sending their classical outcomes to Bob.}
    \label{fig:cmu5}
\end{figure}

\section{No-click CMUs}\label{sec:DV}

We first analyze a large class of CMUs taking as input quantum states $\rho$ in a finite $d$-dimensional Hilbert space $\mathcal{H}$ and for which the $N$ measurements $y=1,\ldots,N$ yield a finite, discrete set of $L+1$ outcomes $b=1,\ldots,L,\varnothing$, where the last outcome $\varnothing$ is a `no-click' outcome happening when the measurement device fails to register an actual outcome, e.g. because the photon is lost. In addition to the loss, the measurements can also be affected by another type of noise.

In general, such a CMU is described by effective POVMs of the form 
\begin{equation}\label{eq:effective}
     M_{b|y}^{\eta,\Phi}\!  = 
    \begin{cases}
    \eta\, \Phi^*\!\left(M_{b|y}\right) & \text{if } b \neq \varnothing \\
    (1-\eta) \mathbbm{1} & \text{if } b=\varnothing\,,
    \end{cases}
\end{equation}
where $M_{b|y}$ describe the ideal POVMs, $\eta$ is the \emph{detecting efficiency}, and the CPTP map $\Phi:\mathcal{L}(\mathcal{H})\rightarrow \mathcal{L}(\mathcal{H})$ models the additional noise channel ($\Phi^*$ is its adjoint). This CMU can be understood as a concatenation of the channel $\Phi$ and the \emph{loss channel}
\begin{equation}\label{eq: loss channel}
\Lambda_\eta( \rho ) = \eta \rho + (1-\eta)\ketbra{\varnothing}
\end{equation}
acting on the state $\rho$ before the measurements $\tilde M_y$ with elements 
\begin{equation}\label{eq:ext-povms}
     \tilde M_{b|y}\!  =   
    \begin{cases} M_{b|y}&\text{if } b \neq \varnothing \\
    \ketbra{\varnothing} & \text{if } b=\varnothing
    \end{cases} 
\end{equation}
are performed. Here, $\ket{\varnothing}$ is an additional state orthogonal to $\mathcal{H}$ describing the loss of a photon, and (\ref{eq:ext-povms}) extends the ideal POVMs so as to produce the no-click outcome whenever the state $\ket{\varnothing}$ is received. This allows one to write 
$M_{b|y}^{\eta,\Phi}\! = (\Lambda_\eta \circ \Phi)^*(\tilde M_{b|y})$. 
Note that whether the no-click outcome is deterministically assigned to another outcome or treated as an additional outcome does not impact the results, as this is a classical post-processing step. We will discuss again this scenario in Section~\ref{sec:binning} in the context of QKD.

For most of our applications, we will consider the noise channel $\Phi$ to be the \emph{white-noise} channel 
\begin{equation}\label{eq:wn-channel}
    W_v(\rho) = v\, \rho + (1-v)\frac{\mathbbm{1}}{d}\,,
\end{equation}
with \emph{visibility} $v$. In this case, the effective POVMs of the CMU read
\begin{equation}\label{eq:wn-povms}
    M_{b|y}^{\eta,v} = 
    \begin{cases}
        \eta\,v\,M_{b|y}+\eta\,(1-v) q_{b|y}\,\mathbbm{1} & \text{if } b\neq \varnothing\\
        (1-\eta)\mathbbm{1} & \text{if } b=\varnothing\,,
    \end{cases}
\end{equation}
where $q_{b|y} = \Tr\left[M_{b|y}\right]/d$.\\

Given the ideal POVMs $M_{b|y}$, the noise channel $\Phi$ and the efficiency $\eta$, the question of whether the effective POVMs (\ref{eq:effective}) are \JM{} can be formulated as a semidefinite program (SDP) \cite{Wolf2009,uola2023}.  Furthermore, the threshold efficiency $\eta_*$ below which they are \JM{} can readily be determined by solving a single SDP since the above effective POVMs depend linearly on $\eta$.

In the following, we are interested in providing bounds on the critical values of $\eta$ that hold for a large class of measurements $M_{y}$ and noise models $\Phi$. 

\subsection{Bounds on $\eta$ for arbitrary sets of measurements implied by channel extendibility}
In this section, we first collect some results on the $N$-extendibility of some families of channels. As discussed in the introduction, we then use them to imply that any CMU featuring such a channel and composed of $N$ POVMs is \JM{}. We start by introducing two general parametric families of channels. 

A  channel $R_q$ with $q\in[0,1]$ is called  \emph{replacement} if it is of the form 
\begin{equation}\label{eq: remp}
        R_{q}(\rho) = q\, \rho +(1-q)\sigma,
    \end{equation}
where $\sigma$ is any fixed state (possibly orthogonal to $\mathcal{H}$).
The loss channel $\Lambda_\eta$ in Eq.~\eqref{eq: loss channel} and the white-noise channel $W_v$ in Eq.~\eqref{eq:wn-channel} are instances of replacement channels. 

A channel $\Gamma_v$ with $v\in [0,1]$  is called \emph{convex noise} if it is of the form
\begin{equation}\label{eq: noise-convex}
    \Gamma_v(\rho) = v \rho + (1-v) \Gamma_0(\rho),
\end{equation}
for some $\Gamma_0$. An equivalent definition is to impose the properties $\Gamma_1 = \text{id}$ and $ p \, \Gamma_{v_1} + (1-p) \, \Gamma_{v_2} = \Gamma_{\bar v}$, where $\bar v = p \, v_1 +(1-p)v_2$. It is easy to see that all the replacement channels are convex noise channels. However, there are other prominent examples of convex noise channels, e.g. the dephasing channel
\begin{equation}\label{eq: dephasing}
    \Gamma_v(\rho) = v \rho + (1-v) \text{diag}\,(\rho)\,,
\end{equation}
where $\text{diag}(\rho) = \sum_i \rho_{ii} \ketbra{i}$ is the diagonal part of $\rho$ in the computational basis.\\

With these definitions, we now formulate a few results on $N$-extendibility. The first one holds for all replacement channels.

\begin{lem}[\cite{heinosaari2015incompatibility}]\label{Lemma:remp}
    A replacement channel $R_{q}(\rho)$ as in Eq.~\eqref{eq: remp} 
    is $N$-extendable if $q\leq \frac{1}{N}$.
\end{lem}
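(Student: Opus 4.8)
The plan is to establish $N$-extendability by exhibiting an explicit parent channel $\Phi_{1\to N}$ whose every single-output marginal reproduces $R_q$. The guiding intuition is that a replacement channel keeps the input with weight $q$ and discards it in favour of $\sigma$ with weight $1-q$. To share the genuine input among $N$ outputs while leaving each marginal untouched, I would route $\rho$ to only one output at a time, fill the remaining $N-1$ outputs with the fixed state $\sigma$, and, if necessary, mix in a global ``replace everything'' branch to tune the keep-probability down.

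Concretely, I would define
\begin{equation}
\Phi_{1\to N}(\rho) = q\sum_{j=1}^{N} \sigma^{\otimes(j-1)}\otimes \rho \otimes \sigma^{\otimes(N-j)} + (1-Nq)\,\sigma^{\otimes N}\,\Tr[\rho]\,.
\end{equation}
This is a convex combination, with weights $q$ (repeated $N$ times) and $1-Nq$, which sum to one, of the maps that insert $\rho$ at a fixed position and fill the remaining registers with $\sigma$, together with the full-replacement map $\rho\mapsto \sigma^{\otimes N}\,\Tr[\rho]$. Each insertion map is completely positive and trace-preserving because $\Tr[\sigma]=1$, and the replacement map is manifestly CPTP; hence $\Phi_{1\to N}$ is CPTP precisely when every weight is nonnegative, i.e. when $1-Nq\geq 0$. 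This is exactly where the hypothesis $q\leq \tfrac1N$ enters.

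It then remains to compute the marginal on an arbitrary output $S_k$. Tracing out all systems $S_\mu$ with $\mu\neq k$, the term $j=k$ contributes $q\,\rho$; each of the $N-1$ terms with $j\neq k$ contributes $q\,\sigma$, since there $\rho$ is traced away and $\Tr[\rho]=1$; and the replacement term contributes $(1-Nq)\,\sigma$. Summing gives $q\rho + \bigl[q(N-1)+1-Nq\bigr]\sigma = q\rho+(1-q)\sigma = R_q(\rho)$, independently of $k$, which is the defining property \eqref{eq: marginal channels}. I do not expect a genuine analytic obstacle here: the only points deserving care are verifying that the convex weights stay nonnegative, which is the sole source of the threshold $q=1/N$, and, when $\sigma$ is supported outside $\mathcal{H}$ as for the loss channel $\Lambda_\eta$, letting each output register live in the enlarged space $\mathcal{H}\oplus\mathrm{span}\{\ket{\varnothing}\}$ so that the tensor products and partial traces are well defined.
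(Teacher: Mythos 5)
Your proposal is correct and uses exactly the same construction as the paper: the parent channel that places $\rho$ in each of the $N$ output slots with weight $q$, fills the rest with $\sigma$, and adds the $(1-Nq)\,\sigma^{\otimes N}$ branch, which is CPTP precisely when $q\leq \tfrac{1}{N}$. Your explicit verification of the marginals and the remark about enlarging the output space to $\mathcal{H}\oplus\mathrm{span}\{\ket{\varnothing}\}$ only spell out details the paper leaves implicit.
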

\begin{proof}
For $q\leq \frac{1}{N}$ the channel
\begin{align}\label{eq:global}
    R_{1\to N}(\rho) &= q \,\rho \otimes \sigma^{\otimes(N-1)}+ q\, \sigma\otimes \rho\otimes \sigma^{\otimes(N-2)}\nonumber\\
    &+\ldots+q\, \sigma^{\otimes(N-1)}\otimes \rho
     + (1-qN)\sigma^{\otimes N}\,.
\end{align}
is well defined and is the parent of $R_q$.
\end{proof}

This directly implies that the loss channel $\Lambda_\eta$ in Eq.~\eqref{eq: loss channel} is $N$-extendable for $\eta\leq \frac{1}{N}$. A similar implication holds for the white-noise channel $W_v$ in Eq.~\eqref{eq:wn-channel}, however for this channel the following result gives a tighter bound. 

\begin{lem}[\cite{werner1998optimal}]\label{Lemma:wn}
    The white-noise channel $W_v$ in Eq.~\eqref{eq:wn-channel} is $N$-extendable if
    \begin{equation}\label{eq:vis}
    v \leq \frac{N+d}{N(d+1)}.
\end{equation}
\end{lem}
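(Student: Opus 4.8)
The plan is to exhibit an explicit parent channel $\Phi_{1\to N}$ realizing the $N$-extension at the extremal visibility $v_{\max}=\frac{N+d}{N(d+1)}$, and then obtain all smaller $v$ by post-processing. The natural candidate is the universal covariant ($1\to N$) cloner. Writing $P_+$ for the orthogonal projector onto the symmetric subspace $\mathrm{Sym}^N(\mathcal H)\subset\mathcal H^{\otimes N}$, of dimension $d_+=\binom{N+d-1}{N}$, I would take
\begin{equation}
\Phi_{1\to N}(\rho)=\frac{d}{d_+}\,P_+\,\big(\rho\otimes\mathbbm{1}^{\otimes(N-1)}\big)\,P_+ .
\end{equation}
The task then splits into three checks: that this is a legitimate CPTP map, that it is symmetric under permutations of the $N$ outputs (so that all its single-output marginals coincide), and that each marginal equals $W_{v_{\max}}$.

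Complete positivity is immediate: expanding $\mathbbm{1}^{\otimes(N-1)}=\sum_{\vec k}\ketbra{\vec k}$ in the computational basis of the last $N-1$ factors exhibits a Kraus decomposition with operators $K_{\vec k}=\sqrt{d/d_+}\,P_+(\mathbbm{1}\otimes\ket{\vec k})$. Trace preservation follows from the identity $\Tr\big[P_+(A\otimes\mathbbm{1}^{\otimes(N-1)})\big]=\tfrac{d_+}{d}\Tr[A]$, which I would derive from $P_+=\frac1{N!}\sum_{\pi}V_\pi$ and the cycle expansion $\Tr[V_\pi(A\otimes\mathbbm{1}^{\otimes(N-1)})]=\Tr[A]\,d^{\,c(\pi)-1}$ together with $\sum_\pi d^{c(\pi)}=d(d+1)\cdots(d+N-1)$. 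Permutation symmetry follows from the observation that $P_+(\rho\otimes\mathbbm{1}^{\otimes(N-1)})P_+=\bar\rho\,P_+$, where $\bar\rho=\frac1N\sum_{j}\rho_{(j)}$ is the symmetrized single-particle insertion; since $\bar\rho$ commutes with every $V_\pi$, the output is manifestly permutation invariant and all marginals are equal. Finally, covariance $\Phi_{1\to N}(U\rho U^\dagger)=U^{\otimes N}\Phi_{1\to N}(\rho)(U^\dagger)^{\otimes N}$ forces each marginal to be a covariant single-system channel, i.e. of white-noise form $\sigma_1=v\rho+(1-v)\frac{\mathbbm{1}}{d}$; it only remains to pin down $v$.

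To compute $v$, I would trace out all but the first output. Using $\Tr_{2\ldots N}[(\rho\otimes\mathbbm{1})X]=\rho\,\Tr_{2\ldots N}[X]$ for operators acting on the kept factor, and the partial-trace identity $\Tr_{2\ldots N}[P_+]=\frac{d_+}{d}\mathbbm{1}$ (which iterates the recursion $\Tr_N[P_+^{(N)}]=\frac{N+d-1}{N}P_+^{(N-1)}$), the $j=1$ term of $\bar\rho$ contributes $\frac{d_+}{d}\rho$, while the $N-1$ remaining terms each yield $a_0\mathbbm{1}+b_0\rho$ on the first factor. The two scalars are fixed by taking traces and by evaluating the diagonal overlap for a pure input, which reduces to the symmetric-subspace two-point function $g_2=\Tr[P_+(\ketbra{\psi}^{\otimes2}\otimes\mathbbm{1}^{\otimes(N-2)})]$. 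Evaluating $g_2$ is the crux of the argument; I would do it through the cycle count $g_2=\frac1{N!}\sum_\pi d^{z(\pi)}$, where $z(\pi)$ is the number of cycles avoiding the first two slots, which a short recursion in $N$ resolves to $g_2=\frac{2}{d+1}\cdot\frac{d_+}{d}$. Substituting back gives $a_0=b_0=\frac{d_+}{d(d+1)}$ and hence
\begin{equation}
v=\frac1N+\frac{N-1}{N(d+1)}=\frac{N+d}{N(d+1)},
\end{equation}
as claimed.

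This settles the boundary case $v=v_{\max}$. For any $v\le v_{\max}$ I would append a white-noise channel to each of the $N$ outputs: applying $W_{v/v_{\max}}$ to every clone of the symmetric parent preserves permutation symmetry, and since white-noise channels compose as $W_a\circ W_b=W_{ab}$, every marginal becomes $W_{v/v_{\max}}\circ W_{v_{\max}}=W_v$. Thus $W_v$ is $N$-extendable throughout the stated range. The main obstacle is not any of the structural checks but the explicit evaluation of the marginal visibility, i.e. the combinatorial identity for $g_2$ (equivalently, the symmetric-subspace partial-trace identities); everything else is bookkeeping once the covariant cloner is identified as the parent channel.
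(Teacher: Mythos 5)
Your proof is correct and follows essentially the same route as the paper: both take Werner's optimal $1\to N$ universal cloner as the parent channel, identify its single-output marginals (by symmetry and covariance) as white-noise channels at the saturating visibility $v=\frac{N+d}{N(d+1)}$, and handle smaller $v$ by composing each output with additional white noise, as in Lemma~\ref{Lemma:conc general}. The only difference is that the paper cites \cite{werner1998optimal} for the marginal-visibility computation, whereas you verify it explicitly through the symmetric-projector identities; your combinatorics, in particular $\Tr_{2\ldots N}[P_+]=\tfrac{d_+}{d}\mathbbm{1}$, $g_2=\tfrac{2}{d+1}\tfrac{d_+}{d}$ and $a_0=b_0=\tfrac{d_+}{d(d+1)}$, checks out and reproduces the claimed threshold.
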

\begin{proof}
This follows from results on universal quantum cloning. An optimal one-to-$N$ universal cloning machine is a channel $\Phi_{1\to N}^\text{UCM}$ from one to $N$ qudits whose output resembles $N$ copies of the input state as closely as possible for all possible input states~\cite{werner1998optimal,keyl1999}. Furthermore, for each of the output systems, the marginal channels of Eq.~\eqref{eq: marginal channels} induced by the optimal universal cloning machine are precisely given by a white-noise channel $W_{v}$ (implied by the symmetry of the task) with visibility saturating the inequality~\eqref{eq:vis}~\cite{werner1998optimal}. More noise can always be added if $v$ is lower (see Lemma ~\ref{Lemma:conc general} below). 
\end{proof}

Now, following our original motivation we proceed to study the $N$-extendibility of concatenated channels. The next result is a very simple but useful observation.

\begin{lem}\label{Lemma:conc general}
    Consider an $N$-extendable channel $\Phi$ and another arbitrary channel $\Gamma$. The concatenated channels 
    \begin{equation} \Gamma\circ \Phi \quad \text{and} \quad \Phi\circ \Gamma
    \end{equation}
    are $N$-extendable.
\end{lem}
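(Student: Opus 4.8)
The plan is to construct explicit parent channels for each concatenation directly out of the parent channel $\Phi_{1\to N}$ whose existence is guaranteed by the $N$-extendability of $\Phi$. Recall that $\Phi_{1\to N}$ outputs $N$ systems $S_1,\dots,S_N$ and satisfies $\Tr_{S_\mu|\mu\neq k}\Phi_{1\to N}(\rho)=\Phi(\rho)$ for every output label $k$ and every input state $\rho$. The two cases are handled by, respectively, pre-composing $\Gamma$ before the parent, and post-composing a copy of $\Gamma$ on each of the $N$ output wires.

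For $\Phi\circ\Gamma$ the natural candidate is $\Phi_{1\to N}\circ\Gamma$. To verify it is a parent, I would compute, for any $k$ and any $\rho$,
\[
\Tr_{S_\mu|\mu\neq k}\big[(\Phi_{1\to N}\circ\Gamma)(\rho)\big]=\Tr_{S_\mu|\mu\neq k}\big[\Phi_{1\to N}(\Gamma(\rho))\big]=\Phi(\Gamma(\rho)),
\]
where the last equality is simply the defining property of $\Phi_{1\to N}$ applied to the state $\Gamma(\rho)$. Since $\Phi_{1\to N}\circ\Gamma$ is a composition of CPTP maps it is itself CPTP, hence a legitimate parent, and $\Phi\circ\Gamma$ is $N$-extendable.

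For $\Gamma\circ\Phi$ the candidate is $\Gamma^{\otimes N}\circ\Phi_{1\to N}$, i.e.\ the parent of $\Phi$ followed by an independent copy of $\Gamma$ on each output system. The key step is to push the partial trace over the discarded systems through the tensor-product channel: because each factor of $\Gamma^{\otimes N}$ acting on a traced-out system is trace-preserving, one gets for every $k$
\[
\Tr_{S_\mu|\mu\neq k}\big[\Gamma^{\otimes N}(\Phi_{1\to N}(\rho))\big]=\Gamma\Big(\Tr_{S_\mu|\mu\neq k}\Phi_{1\to N}(\rho)\Big)=\Gamma(\Phi(\rho)),
\]
again using $N$-extendability of $\Phi$ in the last step. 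Thus $\Gamma^{\otimes N}\circ\Phi_{1\to N}$ reproduces $\Gamma\circ\Phi$ on each marginal, establishing $N$-extendability.

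The only point requiring any care — and the closest thing to an obstacle in an otherwise elementary argument — is the commutation identity used in the second case, namely $\Tr_B[(\mathrm{id}\otimes\Gamma_B)(X_{AB})]=\Tr_B[X_{AB}]$ for a trace-preserving $\Gamma_B$. This follows immediately from trace preservation and is applied factor by factor to all systems $S_\mu$ with $\mu\neq k$. Notably, no property of $\Gamma$ beyond being CPTP (in particular trace preservation) is invoked anywhere, which is precisely why the statement holds for an arbitrary channel $\Gamma$.
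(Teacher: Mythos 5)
Your proof is correct and uses exactly the same construction as the paper: the parent channels $\Phi_{1\to N}\circ\Gamma$ and $\Gamma^{\otimes N}\circ\Phi_{1\to N}$. The only difference is that you spell out the verification of the marginal conditions (in particular the identity $\Tr_B[(\mathrm{id}\otimes\Gamma_B)(X_{AB})]=\Tr_B[X_{AB}]$ for trace-preserving $\Gamma_B$), which the paper states without proof.
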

\begin{proof}
The parent channels are given by $ \Gamma^{\otimes N}\circ \Phi_{1\to N}$ and $\Phi_{1\to N}\circ \Gamma$  respectively. 
\end{proof}

Now consider a concatenation of channels with known properties.

\begin{lem}\label{Lemma: conc structure}
    Consider a convex noise  channel $\Gamma_v$ (Eq.~\ref{eq: noise-convex}) with $v\geq v_*$ and a replacement channel $R_q$ (Eq.~\ref{eq: remp}) with $q\geq q_*$, where above the threshold values the channels $\Gamma_{v}$ and $R_{q}$ are $N$-extendable. The concatenated channel 
    \begin{equation}
    \Phi_{q, v} = R_q\circ \Gamma_v
    \end{equation}
    is $N$-extendable for  
\begin{equation}\label{eq:gb}
    q\leq \frac{1-v_*}{(1-v)+(v-v_*)/q_*}\,.    
\end{equation}
\end{lem}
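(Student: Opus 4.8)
The plan is to exploit the fact that the set of $N$-extendable channels is convex, together with Lemma~\ref{Lemma:conc general}, and to exhibit $\Phi_{q,v}=R_q\circ\Gamma_v$ as a convex combination of two channels that are manifestly $N$-extendable. Convexity is immediate: if $\Phi_1,\Phi_2$ admit parents satisfying Eq.~\eqref{eq: marginal channels}, the convex mixture of the two parents is a valid parent of the convex mixture of $\Phi_1,\Phi_2$, so the latter is again $N$-extendable. The two building blocks I would use are $\Gamma_{v_*}$, which is $N$-extendable by hypothesis, and $R_{q_*}\circ\Gamma_{v_B}$ for an auxiliary visibility $v_B$ to be fixed, which is $N$-extendable for \emph{any} $v_B$ by Lemma~\ref{Lemma:conc general} (composing the $N$-extendable replacement channel $R_{q_*}$ with the arbitrary channel $\Gamma_{v_B}$).

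I would then search for $\alpha\in[0,1]$ and $v_B\in[0,1]$ such that
\begin{equation}
  \Phi_{q,v} = \alpha\,\Gamma_{v_*} + (1-\alpha)\,\big(R_{q_*}\circ\Gamma_{v_B}\big)\,.
\end{equation}
Expanding every channel in the three elementary maps $\mathrm{id}$, $\Gamma_0$, and the replacement $\rho\mapsto\sigma$ — using $\Gamma_w=w\,\mathrm{id}+(1-w)\Gamma_0$ and $R_p(\rho)=p\,\rho+(1-p)\sigma$ — turns this into a linear coefficient-matching problem. Matching the replacement ($\sigma$) component gives $1-q=(1-\alpha)(1-q_*)$, i.e.\ $\alpha=\frac{q-q_*}{1-q_*}$, which lies in $[0,1]$ precisely because $q_*\le q\le 1$. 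Matching the identity component then fixes
\begin{equation}
  v_B=\frac{(1-q_*)\,qv-(q-q_*)\,v_*}{(1-q)\,q_*}\,,
\end{equation}
and the $\Gamma_0$ component is automatically matched, since on both sides the three coefficients are trace-preserving and sum to one, so fixing the identity and replacement coefficients forces the third.

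It remains to verify $v_B\in[0,1]$. A direct computation shows that $v_B\le 1$ is equivalent to $q\big[v-v_*+q_*(1-v)\big]\le q_*(1-v_*)$, which rearranges to exactly the claimed bound~\eqref{eq:gb} (clear the denominator of the right-hand side by $q_*$). The lower bound $v_B\ge 0$ is then automatic: the numerator of $v_B$ is affine in $q$, equals $q_*(1-q_*)v\ge 0$ at $q=q_*$, and is nonnegative at the upper endpoint $q=q_{\max}$ given by the right-hand side of~\eqref{eq:gb} (there $v_B=1$, so the numerator equals $(1-q_{\max})q_*\ge 0$); being affine and nonnegative at both endpoints, it is nonnegative throughout the admissible interval.

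The step I expect to be the genuine obstacle is finding the correct decomposition, not the arithmetic. The naive move — splitting $\Gamma_v$ via the convex-noise identity as $\tfrac{1-v}{1-v_*}\Gamma_{v_*}+\tfrac{v-v_*}{1-v_*}\mathrm{id}$ and distributing $R_q$ over it — fails, because it produces a bare term $R_q$ with $q>q_*$, which is \emph{not} $N$-extendable and cannot be rescued by mixing with noisier channels. The essential idea is instead to route \emph{all} of the replacement noise through the threshold value $q_*$ in the second block and to absorb the resulting mismatch into the shifted visibility $v_B$; the content of~\eqref{eq:gb} is precisely the requirement that this shifted visibility remain a legitimate parameter, $v_B\le 1$.
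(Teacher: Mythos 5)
Your proof is correct. It rests on the same two ingredients as the paper's own argument: the convexity of the set of $N$-extendable channels (mixing the parent channels) and Lemma~\ref{Lemma:conc general}, combined with coefficient matching in the three elementary maps $\mathrm{id}$, $\Gamma_0$ and $\rho\mapsto\sigma$. Where you genuinely depart from the paper is in how the non-extremal values of $q$ are covered. The paper decomposes $\Phi_{q,v}=p\,\Gamma_{v_*}+(1-p)\,R_{q_*}$, which is only possible when $q$ exactly saturates \eqref{eq:gb}, and then reaches smaller $q$ by writing $\Phi_{q',v}=R_{q'/q}\circ\Phi_{q,v}$ and invoking Lemma~\ref{Lemma:conc general} a second time. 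You instead enlarge the second building block to $R_{q_*}\circ\Gamma_{v_B}$ (which is $N$-extendable for \emph{any} $v_B$ by Lemma~\ref{Lemma:conc general}) and absorb the slack in $q$ into the auxiliary visibility $v_B$; the bound \eqref{eq:gb} then emerges as the legitimacy condition $v_B\le 1$, and at saturation ($v_B=1$, so $\Gamma_{v_B}=\mathrm{id}$) your decomposition degenerates into the paper's. What your route buys is that the entire admissible range of $q$ is handled by one family of decompositions, with \eqref{eq:gb} acquiring a transparent interpretation; what it costs is the additional verification that $v_B\in[0,1]$ — your affinity argument for $v_B\ge 0$, and implicitly the check $q_{\max}\le 1$ needed for it — which the paper's post-composition trick sidesteps entirely. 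Your closing remark about the naive decomposition is also accurate: distributing $R_q$ over $\Gamma_v=\tfrac{1-v}{1-v_*}\Gamma_{v_*}+\tfrac{v-v_*}{1-v_*}\mathrm{id}$ leaves a bare $R_q$ term with $q>q_*$, and a convex-combination proof cannot tolerate a non-extendable piece, so routing all replacement noise through $q_*$ is indeed the essential move in both proofs.
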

\begin{proof}
First, let us write the concatenated channel explicitly as
\begin{equation}\label{eq: 17}\begin{split}
    \Phi_{q, v}(\rho)&= R_q \big( v \rho + (1-v) \Gamma_0(\rho)\big)\\
    & = q \,v \,\rho + q (1-v) \Gamma_0(\rho) + (1-q)\sigma.
\end{split}
\end{equation}
    By assumption, there exist parent channels $\Gamma_{1\to N}$ and $R_{1\to N}$ that reproduce the marginal channels $\Gamma_{v_*}$ and $R_{q_*}$ when tracing out $N-1$ systems.
    Let us now consider a mixture thereof $p \, \Gamma_{1\to N}+ (1-p) \, R_{1\to N} $ for some parameter $p\in[0,1]$. It defines a  parent channel which establishes the $N$-extendibility of $\mathcal{E}_p(\rho) = \tr_{N-1} (p \, \Gamma_{1\to N} + (1-p) R_{1\to N} )(\rho)$ given by 
    \begin{equation}\begin{split}
        \mathcal{E}_p(\rho) & = p\, \Gamma_{v_*}(\rho) + (1-p) R_{q_*}(\rho) \nonumber\\
         & = (p\, v_*+(1-p)q_*) \rho + p (1-v_*) \Gamma_0(\rho) \\
         &+ (1-p) (1-q_*) \sigma.      
    \end{split}
    \end{equation}
    Comparing with Eq.~\eqref{eq: 17} we conclude that $\Phi_{q,v}$ is identical to $\mathcal{E}_p$  if 
    \begin{equation}
        \begin{aligned}
        1-q &= (1-p)(1-q_*)\\
        q(1-v) & =p (1-v_*)\,.
        \end{aligned}
    \end{equation}
    Solving the first equation for $p$ gives $1-p= \frac{1-q}{1-q_*}$. Plugging this in the second equation implies 
    \begin{equation}
          q = \frac{1-v_*}{(1-v)+(v-v_*)/q_*}\,.
    \end{equation}
    The channel $\Phi_{q,v}$ is thus $N$-extendable when this equality is satisfied. To see that any channel with a lower transmission $q'\leq q$ is also $N$-extendable note that it can be decomposed as $\Phi_{q',v} = R_{q'/q} \circ \Phi_{q,v}$ and apply Lemma~\ref{Lemma:conc general}.
\end{proof}

Let us now come back to our channels of interest and apply the above Lemmas to obtain sufficient conditions for the CMUs with effective POVMs \eqref{eq:effective} and \eqref{eq:wn-povms} to become \JM.
First, consider the concatenation of any channel $\Phi$ with the loss channel $\Lambda_\eta$ and apply Lemmas~\ref{Lemma:remp} and \ref{Lemma:conc general} to obtain the following bound.

\begin{ub}[\cite{acin2016necessary}]  \label{ub:1}
    The concatenation of the loss channel with any quantum channel $\Phi$ \begin{equation}
    \Lambda_\eta\circ \Phi(\rho)= \eta\, \Phi(\rho)+ (1-\eta)\ketbra{\varnothing}
    \end{equation}
    is $N$-extendable, and thus the corresponding CMU with effective POVMs (\ref{eq:effective}) is \JM{}, if 
    \begin{equation}\label{eq:eta-no-noise}
        \eta\leq \frac{1}{N}\,.
    \end{equation}
\end{ub}
This is essentially a reformulation from the channel perspective of a result already obtained in \cite{massar_violation_2003} in the context of Bell experiments, in \cite{acin2016necessary} for general one-sided-device-independent protocols, and in \cite{heinosaari2015incompatibility} where channels of the form $R_\eta\circ\Gamma(\rho) = \eta\,\Gamma(\rho)+(1-\eta) \sigma$ were shown to be $N$-incompatibility breaking for $\eta\leq\frac{1}{N}$. 

Next, we also assume that the noise channel $\Phi=\Phi_v$ is convex and apply Lemmas~\ref{Lemma:remp} and \ref{Lemma: conc structure} to get.

\begin{ub}\label{ub:2}
The channel
\begin{equation}
\Lambda_\eta \circ \Phi_v (\rho)= \eta \, \Phi_v(\rho) + (1-\eta)\ketbra{\varnothing},
\end{equation}
corresponding to the concatenation of the loss channel and any convex noise channel $\Phi_v$ with $v\geq v_*$ where $\Phi_{v_*}$ is $N$-extendable, is $N$-extendable, and thus the corresponding CMU is \JM{}, if
\begin{equation}\label{eq:gb2}
    \eta\leq \frac{1-v_*}{(1-v)+N(v-v_*)}\,.    
\end{equation}
\end{ub}
Remarkably, this bound is useful even if we only know that the channel $\Phi_v$ becomes $N$-extendable at $v=0$. Direct application of \eqref{eq:gb2} shows that in this case $\Lambda_\eta \circ \Phi_v$ is $N$-extendable for 
\begin{equation}\label{ub:2.1}
    \eta \leq \frac{1}{(1-v)+N v}.
\end{equation}
This bound applies for example when $\Phi_v=\Gamma_v$ is the dephasing channel in Eq.~\eqref{eq: dephasing}.

Similarly, if $\Phi_v$ is a replacement channel itself (or becomes $N$-extendable for $v\leq \frac{1}{N}$ for another reason) the bound~\eqref{eq:gb2} implies that $\Lambda_\eta \circ \Phi_v$ is $N$-extendable for 
\begin{equation}\label{ub:2.2}
    \eta \leq \frac{1}{v N}.
\end{equation}

\begin{figure}[t]
    \centering
    \resizebox{.95\linewidth}{!}{\begin{tikzpicture}
    \begin{axis}[
            grid=major,
            xtick={1/3,0.5,1/sqrt(3),2/3,1/sqrt(2),1},
            xticklabels={$\frac{1}{3}$,$\frac{1}{2}$,$\frac{1}{\sqrt{3}}$,$\frac{2}{3}$,$\frac{1}{\sqrt{2}}$,$1$},
            extra x ticks={0.556}, 
            extra x tick labels={$\frac{5}{8}$},
            extra x tick style={tick label style={xshift=-3pt}},
            ytick={0,1/3,0.5,1},
            yticklabels={$0$,$\frac{1}{3}$,$\frac{1}{2}$,$1$},
            extra y ticks={0.25}, 
            extra y tick labels={$\frac{1}{4}$},
            extra y tick style={tick label style={xshift=-5pt}},
            ylabel = {$\eta$},
            xlabel = {$v$},
            x label style={yshift=-1pt},
            legend style={at={(0.25,0.1)}, anchor=south},
            xmin=0.3,
            xmax=1.03,
            ymin=-0.05,
            ymax=1.05
          ]
    \addplot[color=blue, thick, domain=(2/3):(1)]{1/(3*x-1)};
    \addlegendentry{$N=2$};
    \addplot[color=orange, thick, domain=(5/9):(1)]{2/3/(3*x-1)};
    \addlegendentry{$N=3$};
    \addplot[color=purple, thick, domain=(1/2):(1)]{1/2/(3*x-1)};
    \addlegendentry{$N=4$};
    \addplot[thick, green!70!black]{3};
    \draw[thick, green!70!black] (axis cs:1/3,0) -- (axis cs:1/3,1);
    \addlegendentry{$N\rightarrow\infty$};
    \addplot[blue, dashed, thick] table {pavelN2.txt};
    \addplot[color=orange, dashed, thick] table {pavelN3.txt};
    \addplot[color=green!70!black, dashed, thick] table {pavelNinf.txt};
    \end{axis}
    \end{tikzpicture}}
    \caption{The full curves depict the lines below and on the left of which the channel $\Phi_{\eta,v}$ of the no-click white-noise qubit CMU is $N$-extendable according to Upper bound~\ref{ub:wn} for $N=2,3,4$ and according to Lemma~\ref{Lemma:wn} for $N=\infty$. The dashed curves correspond to the additional assumption that the ideal measurements are binary and are determined according to Upper bound~\ref{ub:binary} (see section~\ref{sec: no-click qubit}) for $N=2,3,4$ and the bound (\ref{eq:allmeas}) (see section~\ref{sec: no-click qubit}) for $N=\infty$. Note that for the case $N=4$ the two lines coincide.}
    \label{fig:new}
\end{figure}
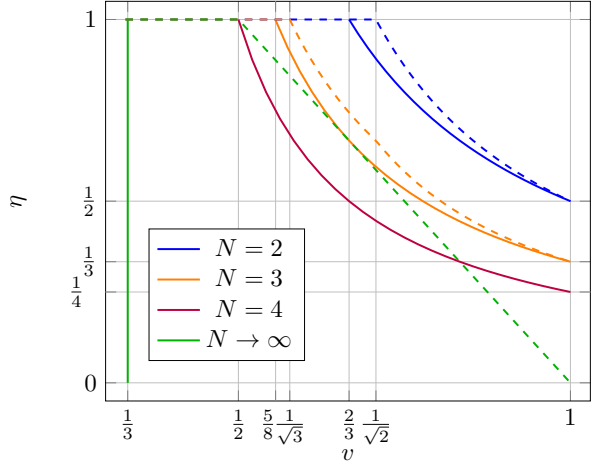

Finally, we give special treatment to the case where $\Phi_v=W_v$ is the white-noise channel, as it will be important for applications. Combining the Upper bound \ref{ub:2} with Lemma \ref{Lemma:wn} (giving $v_*=\frac{N+d}{N(d+1)}$) we get our last bound in this section.
 
\begin{ub}\label{ub:wn}
    The concatenation of loss and white-noise
    \begin{equation}
\Lambda_\eta \circ W_v (\rho)= \eta\,v \,\rho + \eta(1-v)\frac{\mathbbm{1}}{d}  + (1-\eta)\ketbra{\varnothing},
\end{equation}
   is $N$-extendable, and thus the corresponding CMU with effective POVMs \eqref{eq:wn-povms} is \JM{}, if
    \begin{equation}\label{eq:lb-noclick}
         \eta\leq\frac{d}{N\Big(v(d+1)-1\Big)}\,.
    \end{equation} 
\end{ub}

Note that we can make this bound dimension-independent by taking the worst-case value of the right-hand side of (\ref{eq:lb-noclick}),
$\eta\leq \min_{d\geq 1}d/ \big(N(v(d+1)-1)\big) = \frac{1}{Nv}\,$, which coincides with the more general bound (\ref{ub:2.2}).

The fundamental trade-off between the visibility $v$ and the detection efficiency $\eta$ provided by the bound (\ref{eq:lb-noclick}) is illustrated in Fig.~\ref{fig:new} in the case of qubits.

\subsection{Bounds on $\eta$ for binary qubit measurements}
\label{sec: no-click qubit}

The bounds obtained so far on $\eta$ depend on generic parameters such as the number of measurements $N$, the level of noise $v$, or the dimension $d$ of the underlying Hilbert space in the case of white noise. And as they follow from the $N$-extendibility of the CMU channel, they hold for any possible measurements $M_{y}$ implemented by the CMU. We now take additionally into account information about the specific measurements $M_{y}$ that define the CMU, focusing on the particular case of qubits.\bigskip

First note the following variation of Upper bound~\ref{ub:2}.
\renewcommand{\theub}{2'}
\begin{ub}\label{ub:2'}
    Consider a no-click CMU defined in terms of a convex noise channel $\Phi_v$ with $v\geq v_*$ and ideal measurements $M_{y}$ such that the effective POVMs $\Phi_{v_*}^*(M_{y})$ are jointly measurable. Then the effective POVMs  $\Phi_{\eta,v}^*(M_{y})$ defining the no-click CMU in Eq.~\eqref{eq:effective} are \JM{} for 
\begin{equation}\label{eq:gbJM}
    \eta\leq \frac{1-v_*}{(1-v)+N(v-v_*)}\,.    
\end{equation}
\end{ub}
\renewcommand{\theub}{\arabic{ub}}
\setcounter{ub}{5}

This can be proven identically to the  Lemma \ref{Lemma: conc structure} by comparing the POVMs
\begin{align}
    \Phi^*_{\eta,v}(M_{b|y}) &= \eta v M_{b|y} + \eta (1-v) \Phi^*_{1,0}(M_{b|y}) \nonumber\\
    &+ (1-\eta)  \Phi^*_{0,1}(M_{b|y})
\end{align}
with the family of POVMs
\begin{equation}
    p \, \Phi^*_{1,v_*}(M_{b|y})  + (1-p)  \Phi^*_{\frac{1}{N},1}(M_{b|y}),
\end{equation}
which are \JM{} by construction.\\

From now on we set $d=2$ for the rest of the section. A binary qubit measurements is composed of two POVM elements $\{E_+,E_-\}$  of the form
\begin{equation}
E_\pm = \, \frac{1}{2}\big((1\pm \gamma)\mathbbm{1} \pm \bm m \cdot \bm \sigma \big),
\end{equation}
with $1-|\gamma|\leq \|\bm m\|$ guaranteed by positivity.
If $\gamma=0$ it is said to be unbiased, otherwise it is biased. A biased measurement can be decomposed as a mixture 
\begin{equation}\label{eq: biased decomp}
\{E_+,E_-\} = \|\bm m\| \{M_+, M_-\} + (1-\|\bm m\|) \{R_+,R_-\},
\end{equation}
of a PVM (unbiased) 
\begin{equation}
M_\pm= \frac{1}{2}( \mathbbm{1} \pm \hat{\bm m} \cdot \bm \sigma)
\end{equation}
with a unit vector $\hat{\bm m} = \frac{\bm m}{\|\bm m\|}$, and a "dummy measurement" $R_\pm=  q_\pm \mathbbm{1}$ with $q_\pm = \frac{1\pm \gamma-\|m\|}{2(1-\|m\|)}$
whose output is independent of the measured state. This observation leads to the following result.

\begin{lem}\label{Lemma:unbiased}
    Consider a channel $\Phi$ and a set of binary measurements $E_{\pm|y}= \frac{1}{2}\big((1\pm \gamma_y)\mathbbm{1} \pm \bm m_y \cdot \bm \sigma \big)$. The measurements set $\Phi^*(E_{\pm|y})$ is JM if the measurements $\Phi^*(M_{\pm|y})$ with $M_{\pm|y}= \frac{1}{2}\big(\mathbbm{1} \pm \hat{\bm m}_y \cdot \bm \sigma \big)$ and $\hat{\bm m}_y =\frac{\bm m_y}{\|\bm m_y\|}$ are JM.
\end{lem}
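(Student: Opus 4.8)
The plan is to exploit the linearity and unitality of the dual channel $\Phi^*$ together with the decomposition \eqref{eq: biased decomp} that was just established for a single biased qubit measurement. First I would apply $\Phi^*$ termwise to \eqref{eq: biased decomp}. Since $\Phi$ is trace-preserving, its dual is unital, $\Phi^*(\mathbbm{1})=\mathbbm{1}$, so the dummy part $R_{\pm|y}=q_{\pm|y}\mathbbm{1}$ is left invariant and one obtains
\begin{equation}
\Phi^*(E_{\pm|y}) = \|\bm m_y\|\,\Phi^*(M_{\pm|y}) + (1-\|\bm m_y\|)\,q_{\pm|y}\,\mathbbm{1}\,.
\end{equation}
This expresses each effective biased POVM element as a classical post-processing of the effective unbiased element $\Phi^*(M_{\pm|y})$ and of the trivial (identity) measurement.

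Next, assuming the $\Phi^*(M_{\pm|y})$ are jointly measurable, I would invoke the definition \eqref{eq:jm}: there exists a single parent POVM $\{G_\lambda\}$ with $\int_\lambda d\lambda\, G_\lambda = \mathbbm{1}$ and response functions $p(\pm|y,\lambda)$ such that $\Phi^*(M_{\pm|y}) = \int_\lambda d\lambda\, p(\pm|y,\lambda)\,G_\lambda$. Substituting this into the previous display and using the normalisation of $\{G_\lambda\}$ gives
\begin{equation}
\Phi^*(E_{\pm|y}) = \int_\lambda d\lambda\,\Big(\|\bm m_y\|\,p(\pm|y,\lambda) + (1-\|\bm m_y\|)\,q_{\pm|y}\Big)\,G_\lambda\,.
\end{equation}
Defining the new response functions $\tilde p(\pm|y,\lambda) = \|\bm m_y\|\,p(\pm|y,\lambda) + (1-\|\bm m_y\|)\,q_{\pm|y}$ shows that the \emph{same} parent POVM $\{G_\lambda\}$ reproduces the biased effective measurements $\Phi^*(E_{\pm|y})$, which is precisely the condition \eqref{eq:jm} for their joint measurability.

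The only point requiring care is to verify that $\tilde p(\cdot|y,\lambda)$ is a genuine probability distribution over the two outcomes for each $y$ and $\lambda$. Nonnegativity follows because $\|\bm m_y\|\in[0,1]$ and the dummy weights $q_{\pm|y}=\frac{1\pm\gamma_y-\|\bm m_y\|}{2(1-\|\bm m_y\|)}$ are nonnegative, which is exactly what positivity of the original POVM $\{E_{\pm|y}\}$ guarantees; normalisation $\tilde p(+|y,\lambda)+\tilde p(-|y,\lambda)=1$ follows from $p(+|y,\lambda)+p(-|y,\lambda)=1$ and $q_{+|y}+q_{-|y}=1$. The degenerate case $\|\bm m_y\|=1$, i.e.\ a sharp PVM forcing $\gamma_y=0$, must be noted separately: there $E_{\pm|y}=M_{\pm|y}$ and the claim is immediate, so the decomposition \eqref{eq: biased decomp} is only invoked for $\|\bm m_y\|<1$. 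I do not expect a serious obstacle here; the essential content is simply that adding state-independent, identity-proportional noise to a measurement is a classical post-processing that cannot create incompatibility, and that $\Phi^*$ commutes with this operation by linearity and unitality.
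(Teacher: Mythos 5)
Your proposal is correct and follows essentially the same route as the paper: push $\Phi^*$ through the decomposition \eqref{eq: biased decomp} using unitality of the dual channel, then absorb the dummy measurement into the response functions of the parent POVM for $\Phi^*(M_{\pm|y})$ — your explicit $\tilde p(\pm|y,\lambda)$ is exactly the paper's operational mixing strategy written out algebraically. The extra checks you include (normalisation and nonnegativity of $\tilde p$, and the degenerate case $\|\bm m_y\|=1$) are left implicit in the paper but are welcome.
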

\begin{proof}
By Eq.~\eqref{eq: biased decomp} the measurements $\Phi^*(E_{\pm|y})$ can be decomposed as 
\begin{equation}
    \Phi^*(E_{\pm|y}) =  \|\bm m_y\| \Phi^*(M_{\pm|y}) +(1- \|\bm m_y\|) R_{\pm|y}
\end{equation}
where $R_{\pm |y} = \Phi^*(R_{\pm|y})= q_{\pm|y} \mathbbm{1}$ (the adjoint of a CPTP map is unital).

Since the measurements $\Phi^*(M_{\pm|y})$ are \JM{} by assumption there exists a parent POVM $E_\lambda$ simulating them. Then the measurements $\Phi^*(E_{\pm|y})$ are simulated by the following strategy. Perform the parent POVM $E_\lambda$. Upon receiving the setting $y$ simulate $\Phi^*(M_{\pm|y})$ with probability $\|\bm m_y\|$, or with probability $1-\|\bm m_y\|$ sample a random output from $q_{\pm|y}$.
\end{proof}

It implies that when discussing the \JM{} of sets of \emph{ideal} binary measurements subject to noise and loss, biased measurements can be ignored. That is, if we show that the set $\Phi^*_{\eta,v}(M_{\pm|y})$ with 
\begin{equation}\label{eq:qubit-meas-no-noise}
M_{\pm|y}=\frac{1}{2}(\mathbbm{1}\pm  \hat{\bm m}_y \, \cdot \bm \sigma)\,,
\end{equation}
and $\|\hat{\bm m}_y\|=1$ is \JM{} it automatically implies that any set of (biased) measurements $\Phi^*_{\eta,v}(E_{\pm|y})$ with $\bm m_y$ parallel to $\hat {\bm m}_y$ is also \JM. We will thus restrict our consideration to $N$ measurements of the form~\eqref{eq:qubit-meas-no-noise} in the rest of the section. Furthermore, we now set the channel $\Phi_v$  to be the white-noise channel $W_v$ mapping PVMs to 
\begin{equation}\label{eq: wn PVM}
W^*_v(M_{\pm|y}) = \frac{1}{2}(\mathbbm{1}\pm  v \, \hat{\bm m}_y \, \cdot \bm \sigma),
\end{equation}
which are also unbiased.

The question of the joint measurability of unbiased qubit measurements has been studied extensively, see e.g. section III.A in the review~\cite{RevModPhys.95.011003}.

In particular, from \cite{Busch86} it follows that any set of $N=2$ measurements of the form (\ref{eq: wn PVM}) is \JM{} if and only if
\begin{equation}\label{eq: v star 2}
  v \leq  v_*(\hat{\bm m}_1,\hat{\bm m}_2):=\frac{2}{ \|\hat{ \bm m}_1 + \hat{\bm m}_2\| + 
    \| \hat{\bm m}_1 - \hat{\bm m}_2\| }.
\end{equation} 

For $N=3$, the measurements are \JM{} if and only if~\cite{Pal_2011,yu2013quantum}
\begin{equation}\label{eq: v star 3}
   v \leq  v_*(\hat{\bm m}_1,\hat{\bm m}_2,\hat{\bm m}_3):= \frac{4}{ \sum_{k=0}^3 \| \bm t_{k} -\bm t_{FT}\|},
\end{equation}
where $\bm t_0 = \hat{\bm m}_1+\hat{\bm m}_2+\hat{\bm m}_3$, $\bm t_{k} = 2 \hat{\bm m}_k -\bm t_0 $ for $k\geq 1$, and $\bm t_{FT}$ is the so-called Fermat-Toricelli vector of the set $\{\hat{\bm m}_1, \hat{\bm m}_2,\hat{\bm m}_3\}$, i.e. $\bm t_{FT} = \text{argmin}_{\bm t } \sum_{k=1}^3 \|\hat{\bm m}_k - \bm t \|$

For any set of $N$ unbiased qubit measurements, we now show the following result.

\begin{lem}\label{Lemma:N qubit}
    A set of $N$ unbiased qubit measurements 
$M_{\pm|y}=\frac{1}{2}\left( \mathbbm{1} \pm \bm m_y\cdot \bm \sigma \right)$ specified by the vectors $\bm m_1,\dots, \bm m_N $  with $\|\bm m_y\|\leq 1$ is \JM{} if
\begin{align}\label{eq: Lemma N}
\sum_{\bm a}\| \sum_{k=1}^N (-1)^{a_k} \bm m_k\| \leq 2^N
\end{align}
where $\bm a=(a_1,\dots,a_N)\in\{0,1\}^N$ runs through the $2^N$ bitstrings. (See also the relaxed condition in Eq.~\eqref{eq: Lemma N relax}).
\end{lem}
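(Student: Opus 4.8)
The plan is to prove joint measurability constructively, by exhibiting an explicit parent POVM $G=\{G_{\bm a}\}$ whose outcomes are the $2^N$ bitstrings $\bm a=(a_1,\dots,a_N)\in\{0,1\}^N$ and whose marginals reproduce each $M_{\pm|y}$ (identifying outcome $+$ with $a_y=0$ and $-$ with $a_y=1$). Because every effect in sight is a qubit operator, I would search for $G_{\bm a}$ in Bloch form $G_{\bm a}=\alpha_{\bm a}\,\mathbbm 1+\bm g_{\bm a}\cdot\bm\sigma$, and use the elementary fact that such an operator is positive semidefinite iff $\alpha_{\bm a}\geq\|\bm g_{\bm a}\|$. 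The requirements on $G$ then decouple into three families: normalization $\sum_{\bm a}G_{\bm a}=\mathbbm 1$, the $N$ marginal conditions $\sum_{\bm a:\,a_y=0}G_{\bm a}=M_{+|y}=\tfrac12(\mathbbm 1+\bm m_y\cdot\bm\sigma)$, and positivity of each $G_{\bm a}$.

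The vector part is forced by the marginals: I would take $\bm g_{\bm a}=2^{-N}\bm v_{\bm a}$ with $\bm v_{\bm a}=\sum_{k=1}^N(-1)^{a_k}\bm m_k$. Summing over the $2^{N-1}$ strings with $a_y$ fixed, every term $k\neq y$ cancels by the sign symmetry while the $k=y$ term accumulates to $2^{N-1}(-1)^{a_y}\bm m_y$, so the marginal Bloch vector is exactly $\pm\tfrac12\bm m_y$, as needed. With $\bm g_{\bm a}$ thus fixed, the naive symmetric choice $\alpha_{\bm a}=2^{-N}$ is the obvious candidate for the scalar part, and here lies the main obstacle: $\|\bm g_{\bm a}\|=2^{-N}\|\bm v_{\bm a}\|$ can be as large as $N\,2^{-N}$, so positivity fails badly. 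The entire content of the lemma is to show that the scalar weights can be redistributed to restore positivity \emph{without} disturbing the already-correct vector marginals.

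The repair I would use is $\alpha_{\bm a}=2^{-N}\bigl(\|\bm v_{\bm a}\|+1-A\bigr)$, where $A:=2^{-N}\sum_{\bm a}\|\bm v_{\bm a}\|$ is precisely the left-hand side of the hypothesis rescaled by $2^N$. Positivity is then immediate: $\alpha_{\bm a}-\|\bm g_{\bm a}\|=2^{-N}(1-A)\geq 0$ exactly when $\sum_{\bm a}\|\bm v_{\bm a}\|\leq 2^N$, which is the assumed inequality. The crucial structural fact that makes the scalar marginals survive is the bit-flip symmetry $\bm v_{\bar{\bm a}}=-\bm v_{\bm a}$ (where $\bar a_k=1-a_k$), giving $\|\bm v_{\bar{\bm a}}\|=\|\bm v_{\bm a}\|$; this forces $\sum_{\bm a:\,a_y=0}\|\bm v_{\bm a}\|=\sum_{\bm a:\,a_y=1}\|\bm v_{\bm a}\|=2^{N-1}A$ for every $y$, so the $\|\bm v_{\bm a}\|$ piece contributes $A/2$ to each scalar marginal and the uniform shift $1-A$ supplies the complementary $(1-A)/2$, yielding $\sum_{\bm a:\,a_y=0}\alpha_{\bm a}=\tfrac12$.

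To close, I would verify the two remaining global identities, both consequences of the same flip symmetry: $\sum_{\bm a}\bm g_{\bm a}=2^{-N}\sum_k\bm m_k\sum_{\bm a}(-1)^{a_k}=0$ and $\sum_{\bm a}\alpha_{\bm a}=2^{-N}\bigl(2^N A+2^N(1-A)\bigr)=1$, so that $G$ is a legitimate POVM with the prescribed marginals and the measurements are jointly measurable. I expect no essential difficulty beyond the positivity redistribution described above; the relaxed variant announced in Eq.~\eqref{eq: Lemma N relax} should then follow as a corollary of this same template, by upper-bounding or regrouping the norms $\|\bm v_{\bm a}\|$ rather than re-running the construction from scratch.
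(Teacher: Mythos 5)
Your construction is correct and is essentially the paper's own proof in repackaged form: the paper builds its parent POVM from the same sign-combination vectors $\bm w_{\bm a}=\sum_{k}(-1)^{a_k}\bm m_k$ weighted by $\|\bm w_{\bm a}\|$, obtains marginals $\frac{1}{2}\left(\mathbbm{1}+A^{-1}\bm m_y\cdot \bm\sigma\right)$ with $A=2^{-N}\sum_{\bm a}\|\bm w_{\bm a}\|\leq 1$, and then mixes in classical randomness to shrink them to the targets. Folding that classical randomization into the POVM elements gives exactly your $G_{\bm a}=2^{-N}\bigl[(\|\bm v_{\bm a}\|+1-A)\mathbbm{1}+\bm v_{\bm a}\cdot\bm\sigma\bigr]$, so the two proofs coincide up to where the white-noise term is placed.
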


\begin{proof}
Define the following set of $2^{N}$ vectors
\begin{equation}
    \bm w_{\bm a} = (-1)^{a_1} \bm m_1 + (-1)^{a_2} \bm m_2+\dots + (-1)^{a_N} \bm m_N 
\end{equation}
labeled by the bitstrings $\bm a = (a_1,\dots,a_N)\in\{0,1\}^N$ and denoted their normalized versions as $\hat{\bm w}_{\bm a}=\frac{\bm w_{\bm a}}{\|\bm w_{\bm a}\|}$, which define $2^N$ projective measurements. Using the probability distribution $p(\bm a)= \frac{\|\bm w_{\bm a}\|}{\sum_{\bm a}\|\bm w_{\bm a}\|}$ define the parent POVM composed of $2^{N+1}$ elements 
\begin{equation}
E_{\pm,\bm a} =  \frac{p(\bm a)}{2}(\mathbbm{1} \pm \hat{\bm w}_{\bm a} \cdot \bm \sigma) =\frac{1}{2}\left(
p(\bm a)\mathbbm{1}\pm \frac{\bm w_{\bm a} \cdot \bm \sigma}{ \sum_{\bm a}\|\bm w_{\bm a}\|}\right)\,.
\end{equation}
With the post-processing 
\begin{equation}
p(b|y,\pm,\bm a) = 
\begin{cases}
    1 & \pm  1= \text{sign}( b\, (-1)^{a_y}) \\
    0 & \text{otherwise}
\end{cases}
\end{equation}
for $b=\pm 1$, the parent POVMs can simulate all measurements of the form
\begin{align}
\tilde M_{b|y} &= \sum_{\pm,\bm a} E_{\pm,\bm a} \, p(b|y,\pm,\bm a)
= \sum_{\bm a} E_{\text{sign}(b(-1)^{a_y}),\bm a}
\\
&= \frac{1}{2}(\mathbbm{1} + b \frac{2^N}{\sum_{\bm a}\|\bm w_{\bm a}\|} \bm m_k \cdot \bm \sigma)\,.
\end{align}
These measurements are identical to $M_{b|y}$ if $\frac{2^N}{\sum_{\bm a}\|\bm w_{\bm a}\|}=1$, in which case the proof is done. If this ratio is even larger $\frac{2^N}{\sum_{\bm a}\|\bm w_{\bm a}\|}\geq 1$, one can mix some randomness into $\tilde M_{b|y}$ to complete the proof. 
\end{proof}

Applying this Lemma to the measurements $W_v^*(M_{\pm|y})$ in Eq.~\eqref{eq: wn PVM} with $\bm m_y =v\, \hat{\bm m}_y$ gives the follwing condition for their joint measurability
\begin{align}\label{eq: v star N}
v\leq v_\star(\hat{\bm m_1},\dots \hat{\bm m}_N):= \frac{2^N}{\sum_{\bm a}\| \sum_{k=1}^N (-1)^{a_k} \hat{\bm m}_k\|}
\end{align}
Note that Eq.~\eqref{eq: v star 2} is a particular case of Eq.~\eqref{eq: v star N} for $N=2$, while it is unclear to us if this is also the case for Eq.~\eqref{eq: v star 3}. An attentive reader also noticed that for general $N$ we used $v_\star$ instead of $v_*$ to indicate that the bound is potentially not tight.

The threshold visibilities for two, three and $N$ measurements in Eqs.~(\ref{eq: v star 2},\ref{eq: v star 3},\ref{eq: v star N}) can be used together with the Upper bound~\ref{ub:2'} to obtain sufficient conditions on the joint measurability of the corresponding CMUs $M_{b|y}^{\eta,v}$. The resulting bounds would depend on the specific choice of binary the qubit measurements described by the vectors $\hat{\bm m}_1,\dots,\hat{\bm m}_N$.

Nevertheless, it is also possible to derive a simple bound for any set of $N$  binary qubit measurements by noting that the rhs of Eq.~\eqref{eq: v star N} satisfies 
\begin{equation}\label{eq: v star N lb}
v_*(\hat{\bm m}_1,\dots,\hat{\bm m}_N) \geq \frac{1}{\sqrt{N}}.
\end{equation}
To see this first use the concavity of the square root to obtain $\sum_{\bm a} \frac{1}{2^N} \| \bm w_{\bm a} \|  = \sum_{\bm a}\frac{1}{2^N}\sqrt{\| \bm w_{\bm a}\|^2} \leq \sqrt{\frac{1}{2^N} \sum_{\bm a}\| \bm w_{\bm a}\|^2}$. Then in the last term simply compute 
$\sum_{\bm a}\| \sum_{k=1}^N (-1)^{a_k} \hat{\bm m}_k\|^2 = \sum_{\bm a} \sum_{kj=1}^N (-1)^{a_k+a_j} \hat{\bm m}_k \cdot \hat{\bm m}_j = \sum_{\bm a}\sum_{k=1}^N \hat{\bm m}_k \cdot \hat{\bm m}_k  = 2^N N $, which implies $\sqrt{\sum_{\bm a}\frac{1}{2^N}\| \sum_{k=1}^N (-1)^{a_y} \hat{\bm m}_k\|^2 } = \sqrt{ \frac{1}{2^N} 2^N N}  = \sqrt{N}$ and the inequality \eqref{eq: v star N lb}. This bound is, however, only saturable for $N=2$ and $3$ by the measurements corresponding to MUBs. Remarkably the same arguments imply 
$\frac{1}{2^N}\sum_{\bm a}\| \sum_{k=1}^N (-1)^{a_k} \bm m_k\|\leq \sqrt{\sum_k \|\bm m_k\|^2}$,
hence 
\begin{equation}\label{eq: Lemma N relax}
     \sum_{k=1}^N \|\bm m_k\|^2 \leq 1
\end{equation}
can be used as a relaxation of the condition (\ref{eq: Lemma N}) in  Lemma~\ref{Lemma:N qubit}.

To conclude this section we combine the Upper bound~\ref{ub:2'}, the Lemmas~\ref{Lemma:unbiased} with~\ref{Lemma:N qubit}, and the bound~\eqref{eq: v star N lb} to obtain the following simple sufficient condition for joint measurability of our CMUs.
\begin{ub}\label{ub:binary}
    The no-click CMU implementing $N$ binary qubit measurements (\ref{eq:wn-povms},\ref{eq:qubit-meas-no-noise}) with white-noise visibility $v$ and detection efficiency $\eta$ is \JM{} if
\begin{equation}\label{eq: N PVM simple}
    \eta \leq \frac{1}{\sqrt{N} \left((\sqrt{N}+1 )v-1\right)}.
\end{equation}
\end{ub}
It is easy to see that for $N \geq 4$ this condition does not bring any improvement over the general upper bound \eqref{eq:lb-noclick}, which reduces in the case $d=2$ to 
\begin{equation}\label{eq:lb-noclick-d2}
    \eta\leq \frac{2}{N\Big(3v-1\Big)}\,.
\end{equation}
In contrast, for $N=2$ and $3$, it improves on this bound, as illustrated in Fig.~\ref{fig:new}. We can also demonstrate numerically that in these cases the bound (\ref{eq: N PVM simple}) is tight. Indeed, selecting $\sigma_z$ and $\sigma_x$ as the two measurement directions in the case $N=2$, and additionally $\sigma_y$ in the case $N=3$, we can determine, for any fixed $v$, the maximal value of $\eta$ such that the resulting effective POVMs are \JM{} by solving a single SDP, as noted below Eq.~(\ref{eq:effective}). We observed that for all the tested values $v$, the maximal transmission $\eta$ found by the SDP matches with the upper bound of Eq.~\eqref{eq: N PVM simple} up to numerical precision.

\subsection{Bound on $\eta$ independent of $N$}
\label{sec:allmeas}

Finally, let us mention the asymptotic case where the number of measurements $N$ is unbounded. In \cite{sekatski2023compatibility} a necessary and sufficient condition for the joint measurability of all $d$-dimensional projective measurements subject to white noise and loss was derived. In the case of qubits, this condition takes a simple form
\begin{equation}\label{eq:allmeas}
    \eta\leq 2(1-v)\,,
\end{equation}
 depicted in Fig.~\ref{fig:new}. This bound improves over the upper bound (\ref{eq:lb-noclick-d2}) whenever $(1-v)(3v-1)>\frac{1}{N}$. By the Lemma~\ref{Lemma:unbiased} this result is promoted to all binary measurements.

For all POVMs, a sufficient condition is given by $\eta\leq (1-v)^{d-1}$ \cite{sekatski2023unlimited}.
When $N$ is large these bounds can be tighter and used instead of the bound (\ref{eq:lb-noclick}).

\section{Thermal-noise channel scenario}\label{sec:saa}

The bounds obtained in the previous section are based on the assumption that the measurements of the CMU have a finite number of discrete outcomes and that a photon loss can be modeled as a no-click outcome. However, some common quantum optics measurements are not of that form. Quadrature measurements are an example of measurements with an infinite, continuous set of outcomes, and which always produce an outcome, even when the photon is lost.

In this section, we thus make no hypothesis at all on the measurement device and we assume that losses in the channel are modeled, as is usual in quantum optics, through a thermal-noise (or attenuator) channel \cite{garcia2007quantum,serafini2017quantum} characterized through the \emph{transmittance} $\eta$ and the \emph{excess noise} $\epsilon$. 
Specifically, the thermal-noise channel can be viewed as combining on a beam-splitter of transmittance $\eta$, the incoming state $\rho$ and a thermal state $\tau_\nu$ with mean photon number $\nu=(\eta\epsilon)/(2(1-\eta))$, as illustrated in Fig.~\ref{fig:thnc}. The resulting CPTP map is 
\begin{equation}\label{eq:th-channel}
    \Phi_{\eta,\epsilon}(\rho) = \Tr_2\left(U_\eta \rho\otimes\tau_\nu U_\eta^\dagger\right)\,,
\end{equation}
where
\begin{equation}
    \tau_\nu = \frac{1}{\pi\nu}\int_{\mathbbm{C}} d^2\beta e^{-|\beta|^2/\nu}\ketbra{\beta}{\beta},\label{eq:ths}
\end{equation}
is a thermal state written in the basis of coherent states $\ket{\beta}$, and
\begin{equation}
    U_\eta=e^{\arccos(\sqrt{\eta}) (\hat{a}^\dagger \hat{b}-\hat{a}\hat{b}^\dagger)},\label{eq:bm_unitary}
\end{equation}
represent the action of the beam-splitter, where $\hat{a}$ and $\hat{b}$ are annihilation operators acting on the input state $\rho$ and the thermal state $\tau_\nu$, respectively.

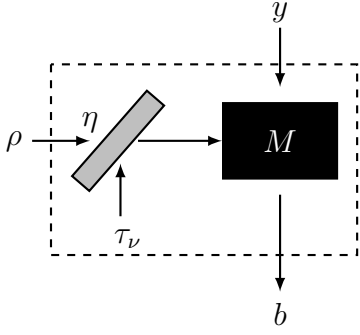
\begin{figure}[t]
\centering
    \begin{tikzpicture}[auto, thick, node distance=2.5cm, >=latex]
    
    \node [draw, fill=black, minimum width=1.5cm, minimum height=1cm] (box) {\large \textcolor{white}{$M$}};
    \node at (box.center) {};
    
    \node [above of=box, node distance=1cm] (input) {};
    \node [below of=box, node distance=1cm] (output) {};
    
    \node [left of=box, node distance=3.5cm] (rho) {\large $\rho$};
    
    \draw [dashed] ($(box.north west)+(-2.25,0.5)$) rectangle node {} ($(box.south east)+(0.25,-1)$);
    
    \draw [->] ($(input)+(0,0.5)$) -- ($(box.north)+(0,0.2)$);
    \draw [->] ($(box.south)+(0,-0.2)$) -- ($(output)-(0,1.)$);

    \node [left of=box, node distance=2.cm] (bm) {};
    \node [draw, fill=gray!50, rotate=49, minimum width=1.5cm, minimum height=0.3cm] at (bm.west) {};
    
    \draw [->] (rho) -- (-2.5,0);
    \draw [->] (bm.east) -- (box.west);
    \draw [->] ($(bm) - (0.1,1)$) -- ($(bm) - (0.1,0.3)$);
    
    \node [below of=bm, node distance=1.3cm] {\large $\tau_\nu$};
    \node [above of=box, node distance=1.7cm] {\large $y$};
    \node [below of=box, node distance=2.3cm] {\large $b$};
    \node at (-2.5,0.25) {\large $\eta$};
    \end{tikzpicture}
    \caption{CMU where the channel is a thermal-noise channel which can be viewed as combining $\rho$ with a thermal state $\tau_\nu$ at a beam-splitter of transmittance $\eta$.}\label{fig:thnc}
\end{figure}

\begin{ub}\label{ub:7}
    The channel $\Phi_{\eta,\epsilon}$ of a thermal-noise CMU is $N$-extendable, and thus the corresponding CMU is \JM{} irrespective of the measurements $M_y$, if
    \begin{equation}\label{eq:lb-thnc}
        \eta\leq\frac{1}{N(1-\epsilon /2)}\,.
    \end{equation}
\end{ub}
\begin{proof}
    Consider the channel $\Phi_{1\to N} = \Phi_{\text{BS}_N}\circ\Phi_\text{Amp}$ corresponding to first amplifying the input state $\rho$ with a gain $G>1$ and then splitting it using a symmetric $N$-mode beam-splitter, as illustrated in Fig.~\ref{fig:thnc-jm}. We now show that if we trace out $N-1$ output modes at the exit of this channel, we get the thermal-noise channel  $\Phi_{\eta,\epsilon}$ by choosing appropriately $G$. 

    The thermal-noise channel, the amplifier, and the beam-splitter are Gaussian channels, and their action can be fully described in terms of two real matrices, the scaling and noise matrices $X$ and $Y$ \cite{garcia2007quantum}. These matrices capture the evolution of the characteristic function of the system, regardless of whether the initial state is Gaussian or not. For the thermal-noise channel, the amplifier, and the channel corresponding to tracing out all but one ouptput mode of a symmetric $N$-mode beam-splitter, these matrices are respectively 
    \begin{equation}
        X_\text{Th} = \sqrt{\eta}\mathbbm{1},\qquad Y_\text{Th} = (1-\eta+\epsilon\eta)\mathbbm{1}\label{eq:simTh}\,,
    \end{equation}
    \begin{equation}
        X_{\text{Amp}}=\sqrt{G}\mathbbm{1}_2,\qquad Y_{\text{Amp}}=\left( G-1 \right)\mathbbm{1}_2,
    \end{equation} 
    and
    \begin{equation}
        X_{\text{BS}}=\sqrt{\frac{1}{N}}\mathbbm{1}_2,\qquad Y_{\text{BS}}= \frac{N-1}{N} \mathbbm{1}_2,.
    \end{equation}
    In the Gaussian systems framework, the channel obtained by applying two channels in succession, each described by the matrices $X_{A,B}$ and $Y_{A,B}$, is described by the matrices
    \begin{align}
    X_{B\circ A}&=X_{B}X_{A}, \\
    Y_{B\circ A}&=X_{B}^T Y_{A}X_{B}+Y_{B}.
    \end{align} 
    We thus have 
    \begin{align}
    X_{\text{BS}\circ\text{Amp}} &= X_{\text{BS}}X_{\text{Amp}}=\sqrt{\frac{G}{N}}\mathbbm{1}_2\,, \\
    Y_{\text{BS}\circ\text{Amp}} &= X_{\text{BS}}^T Y_{\text{Amp}}X_{\text{BS}}+Y_{\text{BS}}=\frac{G-N-2}{N}\mathbbm{1}_2\,.\nonumber
    \end{align} 
    These matrices are equal to the target matrices $X_\text{Th}$ and $Y_\text{Th}$ for $G=1/(1-\epsilon/2)$ and 
    \begin{equation}
        \eta=\frac{1}{N(1-\epsilon/2)}.
    \end{equation}
    Of course, lower values of $\eta$ are also possible by simply attenuating the overall channel output, resulting in the bound (\ref{eq:lb-thnc}).
\end{proof}
\begin{figure}[t]
	\centering
    \resizebox{.6\linewidth}{!}{
    \begin{tikzpicture}[auto, thick, node distance=2.5cm, >=latex]
    
    \node (laptop) {};
    \node [left of=laptop, node distance=4.3cm] (rho) {\large $\rho$};
    
    \node [draw, fill=black, minimum width=0.9cm, minimum height=4.cm, right of=rho, node distance=2.8cm] (meas) {\textcolor{white}{B.S.}};
    
    \node [draw, regular polygon, regular polygon sides=3, shape border rotate=90, fill=black, right of=rho, minimum height=0.3cm, node distance=1.5cm, scale=0.6] (amp) {\large \textcolor{white}{Amp}};
    
    \draw [->] (rho) -- (amp.west);
    \draw [->] (amp) -- (meas.west) ;
    
    \draw[->] (-1.,1.6) -- (-0,1.6) node[midway,above] (r1) {$\rho_1$};
    \draw[->] (-1.,0) -- (-0,0) node[midway,above] (r2) {$\rho_\mu$};
    \draw[->] (-1.,-1.6) -- (-0,-1.6) node[midway,above] (r3) {$\rho_N$};
    
    \draw[-,dotted] (r1.south) -- (r2.north);
    \draw[-,dotted] (r2.south) -- (r3.north);
    
    \end{tikzpicture}
    }
    \caption{Strategy for establishing the $N$-extendibility of the thermal-noise-channel. The initial state $\rho$ is amplified and split into $N$ modes using a beam splitter.\label{fig:thnc-jm}}
\end{figure}
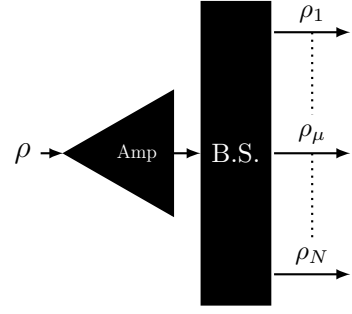

We note that the above result was also obtained in \cite{lami2019}, where a necessary and sufficient condition for the $N$-extendibility of single-mode Gaussian channels is presented. The above proof, however, provides an explicit strategy for establishing the $N$-extendibility of the thermal-noise channel.

While we considered a single-mode thermal-noise channel $\Phi_{\eta,\epsilon}$, the argument extends to $n$-mode scenarios, where the global noise channel is the product $\bar{\Phi}_{\eta,\epsilon}= \bigotimes_{i=1} \Phi_{(\eta,\epsilon)}$. It is easy to see that this channel $\bar{\Phi}_{\eta,\epsilon}$ is  $N$-extendable if $\Phi_{\eta,\epsilon}$ is. Therefore the Upper bound~\ref{ub:7} also holds in this scenario. 

The bound (\ref{eq:lb-thnc}) is completely generic as it depends only on the number of measurements $N$ and the excess noise $\epsilon$, and holds for any possible measurements $M_{y}$ implemented by the CMU. A much more stringent bound can be obtained if we assume that the measurements $M_y$ are Gaussian measurements, which can be implemented by first applying a Gaussian channel to the state and then performing homodyne detection \cite{kiukas2013informationally}. 
\begin{ub}[\cite{rahimi-keshari_verification_2021}]
    A CMU consisting of a thermal-noise channel followed by Gaussian measurements is \JM{} if 
    \begin{equation}\label{eq:lb-guassian}
        \eta\leq\frac{1}{2-\epsilon}\,.
    \end{equation}
\end{ub}
Remarkably, the above bound on the transmissivity $\eta$ is independent of the total number $N$ of measurements performed. It follows from the fact that the thermal-noise channel is incompatibility breaking for Gaussian measurements as shown in \cite{rahimi-keshari_verification_2021}. We provide a simple proof of this bound in Appendix~\ref{ap:har} in the case of pure homodyne measurements.

\section{\label{sec:qkd} Applications to QKD protocols}
Our approach allows us to put bounds on the performance of any QKD protocol involving a CMU, in particular within the DI or SDI approaches. Consider a protocol in which one of the parties, say Bob, generates their copy of the secret key by performing an untrusted measurement on a quantum state received from an untrusted channel, i.e., the key is obtained by post-processing the input $y$ and output $b$ of a CMU. Then no key can be extracted from the protocol if the effective POVMs of the CMU are \JM{}. This follows from the fact that in any CMU that is \JM{}, Alice and Bob cannot certify the presence of entanglement, which is a necessary condition for the security of QKD \cite{curty_entanglement_2004}. 

This result can also easily be directly proven. Let $B'=BY$ where $B$ and $Y$ are the random variables describing the input and output of Bob's measuring device, let $A$ denote Alice's registers used in the QKD protocol and $\mathcal{E}$ be Eve's classical register (potentially obtained by measuring a quantum system she holds).  It is known that for arbitrary post-processing of $A$ and $B'$, the key rate that can be extracted is upper-bounded by the intrinsic information~\cite{renner2003new}
\begin{equation}
    I(A:B'\downarrow \mathcal{E}) \equiv \min_{\mathcal{E}\to \mathcal{F}} I(A:B'|\mathcal{F}), 
\end{equation}
where $\mathcal{F}$ runs through all post-processing of Eve's register $\mathcal{E}$ and $I(A:B'|\mathcal{F})= \sum_f p(f) I(A:B'|\mathcal{F}=f) $ is the conditional mutual information.

Now consider the case where Bob's CMU is \JM{}. Then, as Eve controls the public channel of the CMU, we can assume that she implements herself the parent POVM $E$, obtains the output $\Lambda$, keeps a copy for her, and sends the other copy to Bob's apparatus that uses it to determine $B$ given $Y$. Hence, the probability distribution conditional on Eve's register $\mathcal{E} =\Lambda$ factorizes
\begin{equation}\label{eq: intrinsic conditional}
p(A=a,B'=by|\mathcal{E}=\lambda) = p(a|\lambda) p(y) p(b|y,\lambda)
\end{equation}
and $I(A:B'\downarrow \mathcal{E})\leq I(A:B'| \mathcal{E})=0$, proving that no key rate can be extracted.

We conclude from the above discussion that in any DI or SDI QKD protocol where Bob's device is untrusted and his inputs and outputs are used to establish the final key, no key can be extracted if the losses and noise satisfy the various bounds presented above.
However, these bounds can be further improved by using the concept of partial-joint-measurability, which we introduce in the next section.

\subsection{Partial-joint measurability}\label{sec:KJM}
In many QKD protocols, the key on Bob's side is obtained by post-processing the input $y$ and output $b$ only of a subset $\mathcal{K}\subset \{1,\ldots,N\}$ of cardinality $K=|\mathcal{K}|<N$ of all possible measurement inputs, while other measurements are solely used for parameter estimation. In this situation, we can weaken the notion of \JM{} through the following definition.

We say that the POVMs $M_y$ for $y\in\{1,\ldots,N\}$ are $\mathcal{K}$-JM, where $\mathcal{K}$ is subset of $\{1,\ldots,N\}$, if there exists \emph{i)}, a quantum instrument $I = \{I_\lambda\}_{\lambda}$ with classical measurement outcomes $\lambda$, \emph{ii)} probability distributions $p(b|y,\lambda)$, and \emph{iii)} measurement $M_{y,\lambda}$ such that, for any state $\rho$, the probabilities $p(b|y,\rho) = \Trace\left[\rho M_{b|y}\right]$ can be written as
\begin{equation}\label{eq: part JM}
    P(b|y,\rho) = 
    \begin{cases}
        \sum_{\lambda} p(b|y,\lambda)\Trace\left[I_\lambda(\rho)\right] & \text{if } y\in \mathcal{K}\\
        \sum_{\lambda} \Trace\left[I_\lambda(\rho) M_{b|y,\lambda}\right] & \text{if } y\notin \mathcal{K} .
    \end{cases}
\end{equation}
Operationally, this means that instead of carrying out the original POVMs $M_y$ on the state $\rho$, one can instead first perform the parent instrument $I$ on the state $\rho$, obtaining a classical output $\lambda$ and a quantum output $I_\lambda(\rho)$. If $y\in \mathcal{K}$, then one only uses the classical output $\lambda$ to generate the outcome $b$ with probability $p(b|y,\lambda)$, as in the case of full \JM{}. If $y\notin \mathcal{K}$, then one generates $b$ by carrying out on the state $I_\lambda(\rho)$ a measurement defined by the operators $\{M_{b|y,\lambda}\}$, which depend on $y$, but also on $\lambda$.

Since the relations \eqref{eq: part JM} should hold for any input state $\rho$, they can be equivalently formulated as
\begin{equation}
    M_{b|y} =
    \begin{cases}
        \sum_{\lambda} p(b|y,\lambda) I_\lambda^*(\mathbbm{1}) & \text{if } y\in \mathcal{K}\\
        \sum_{\lambda} I_\lambda^*( M_{b|y,\lambda}) & \text{if } y\notin \mathcal{K}
    \end{cases},
\end{equation}
where $I_\lambda^*$ denotes the adjoint of $I_\lambda$ and $I_\lambda^*(\mathbbm{1}) = E_\lambda$ defines a parent POVM as in the case of full \JM{}.

In the context of QKD applications, when a CMU is $\mathcal{K}$-JM, we can assume that the parent instrument $I_\lambda$ is performed by Eve, which can store a copy of its classical output $\lambda$. If Bob's register $B'=BY$ used to distill the final key only depends on the inputs $y\in \mathcal{K}$ (while the other inputs may still be used for parameter estimation) the intrinsic information is zero 
\begin{equation}
    I(A:B'\downarrow \mathcal{E}) = 0\,.
\end{equation}
This is because, analogously to the discussion leading to Eq.~\eqref{eq: intrinsic conditional}, the probability distribution of Alice's and Bob's random variable factorizes $p(A,B'|\mathcal{E}=\lambda)=p(A|\mathcal{E}=\lambda)p(B'|\mathcal{E}=\lambda)$ when conditioned on $\lambda$, which is held by Eve. Hence, if the CMU gives rise to $\mathcal{K}$-JM measurements, no key can be distilled from the outputs of any measurements in $\mathcal{K}$.

Note that in the case where \( K = N - 1 \), partial joint-measurability becomes equivalent to full joint-measurability. This is because, for \( K = N - 1 \), there is only a single measurement setting \( y \notin \mathcal{K} \). Therefore, there is no need to send any quantum state to Bob's device as this single measurement can immediately be performed after the parent instrument. The concatenation of the parent instrument and this subsequent measurement represents a parent POVM providing a classical outcome for all values of $y$, i.e., it implies that the set of Bob's POVM is fully JM.

In the case of the no-click scenario, the following result relates the full joint-measurability of a subset $\mathcal{K}$ of the measurements to the partial joint-measurability of the full set of measurements.
\begin{ub}
    Consider a no-click CMU with an arbitrary number $N$ of inputs and assume that for a subset $\mathcal{K}\subseteq\{1,\ldots,N\}$, the measurements $\{\Phi_\eta^*(M_y)\}_{y\in\mathcal{K}}$ are \JM{} for a detection efficiency $\eta$.
    Then the full CMU is $\mathcal{K}$-\JM{} for detection efficiency $\eta' \leq \eta/(1+\eta)$. 
\end{ub}
\begin{proof}
    By assumption, the set $\{\Phi_\eta^*(M_y)\}_{y\in\mathcal{K}}$ is \JM{}, which means that there exists a parent instrument $\mathcal{E}$ that simulates it. 
    Define then the following strategy. With probability $q$, perform the parent POVM $\mathcal{E}$ and follow the corresponding simulation scheme if $y\in \mathcal{K}$, while output no-click $\varnothing$ if $y \notin \mathcal{K}$. With probability $1-q$, leave the system untouched, output no-click $\varnothing$ if $y \in  \mathcal{K}$, and perform the  measurement $\Phi^*(M_{y})$ if $y \notin \mathcal{K}$. On average this strategy realizes the POVMs with elements
\begin{equation}
    \tilde M_{b|y}  = 
    \begin{cases} q \eta \Phi^*(M_{b|y}) & \text{if } b \neq \varnothing\\
        \left(q(1-\eta) + (1-q)\right) \mathbbm{1} & \text{if } b = \varnothing
    \end{cases}
\end{equation}
if $y\in \mathcal{K}$, and
\begin{equation}
    \tilde M_{b|y}  = 
    \begin{cases} (1-q) \Phi^*(M_{b|y}) & \text{if } b \neq \varnothing\\
        q \mathbbm{1} & \text{if } b = \varnothing
    \end{cases}
\end{equation}
if $y\notin \mathcal{K}$. These measurements have the form of the no-click CMU (\ref{eq:effective}) if $q\eta = (1-q) = \eta'$, which implies $\eta' = \eta/(1+\eta)$.
\end{proof}

As an illustration, if we apply the above construction to the bound (\ref{eq: N PVM simple}), we obtain that a white-noise no-click CMU implementing the qubit measurements (\ref{eq:qubit-meas-no-noise}) is $\mathcal{K}$-JM if
\begin{equation}\label{eq:lb-noclick-pjm}
    \eta \leq \frac{1}{v(K+ \sqrt{K})}\,,
\end{equation}
which improves the original bound when $K\leq N-1$. 

Note that by construction, the condition of partial joint measurability is weaker than that of joint measurability. The bound \eqref{eq:lb-noclick-pjm} allows us to illustrate this difference explicitly. 
Indeed, we mentioned below the bound \eqref{eq: N PVM simple} that for $N=3$, it is tight. In particular, the three MUBs measurements $\sigma_x,\sigma_y,\sigma_z$ are incompatible if $\eta > 2/(9v-3)$. However, from (\ref{eq:lb-noclick-pjm}) these three measurements are 1-\JM{} up to $\eta\leq 1/(2v)$. For instance,  for $v=1$, the three MUB measurements are incompatible but 1-\JM{} in the region $\frac{1}{3}<\eta\leq \frac{1}{2}$.

The construction above can also be applied to, e.g., the generic white-noise no-click bound (\ref{eq:lb-noclick}). However in this case a better improvement can be obtained using the following relation between partial joint measurability and channel extendibility.

\begin{lem}\label{Lemma:k-ext} If a channel $\Phi$ is  $(K+1)$-extendable, then the effective POVMs $\Phi^*(M_{y})$ are $\mathcal{K}$-JM for any subset $\mathcal{K}\subseteq\{1,\ldots,N\}$ of $K$ measurements. 
\end{lem}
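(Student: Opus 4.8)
The plan is to mimic the construction used to prove the implication $(6)\Rightarrow(4)$ (illustrated in Fig.~\ref{fig:cmu5}), but to spend one of the output systems on a \emph{quantum} output rather than measuring it. Concretely, since $\Phi$ is $(K+1)$-extendable, Eq.~\eqref{eq: marginal channels} provides a parent channel $\Phi_{1\to K+1}$ with output systems $S_1,\dots,S_{K+1}$ such that $\Tr_{S_\mu|\mu\neq k}\Phi_{1\to K+1}(\rho)=\Phi(\rho)$ for every $k$ and every $\rho$. Writing $\mathcal{K}=\{y_1,\dots,y_K\}$, I would associate the measurement $M_{y_i}$ with the output system $S_i$ for $i=1,\dots,K$, reserving the last system $S_{K+1}$ to be handed out intact.

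First I would define the parent instrument $I=\{I_\lambda\}$ by applying $\Phi_{1\to K+1}$, measuring $M_{y_i}$ on $S_i$ for each $i\le K$, and declaring the quantum output to be the post-measurement state of $S_{K+1}$. The classical label is the string of recorded outcomes $\lambda=(\lambda_1,\dots,\lambda_K)$, so that $\Tr\left[I_\lambda(\rho)\right]$ is the joint probability of obtaining these $K$ outcomes, while $\sum_\lambda I_\lambda(\rho)=\Tr_{S_1\cdots S_K}\Phi_{1\to K+1}(\rho)$ is the reduced state on $S_{K+1}$. The post-processing required by Eq.~\eqref{eq: part JM} is then immediate: for $y=y_i\in\mathcal{K}$ I set $p(b|y_i,\lambda)=\delta_{b,\lambda_i}$, reading off the $i$-th recorded outcome from the classical register; for $y\notin\mathcal{K}$ I take the $\lambda$-independent choice $M_{b|y,\lambda}=M_{b|y}$, i.e. I perform the original measurement $M_y$ on the surviving quantum output $I_\lambda(\rho)$.

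It then remains to verify that this reproduces the effective POVMs. For $y_i\in\mathcal{K}$, summing $\delta_{b,\lambda_i}\Tr\left[I_\lambda(\rho)\right]$ over $\lambda$ yields the marginal probability of outcome $b$ when $M_{y_i}$ is measured on $S_i$, which by the single-system marginal property of $\Phi_{1\to K+1}$ equals $\Tr\left[\Phi(\rho)M_{b|y_i}\right]=\Tr\left[\rho\,\Phi^*(M_{b|y_i})\right]$. For $y\notin\mathcal{K}$, using $\sum_\lambda I_\lambda(\rho)=\Phi(\rho)$ (again the marginal property, now for $S_{K+1}$) gives $\sum_\lambda\Tr\left[I_\lambda(\rho)M_{b|y}\right]=\Tr\left[\Phi(\rho)M_{b|y}\right]=\Tr\left[\rho\,\Phi^*(M_{b|y})\right]$. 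Both cases match $\Phi^*(M_{b|y})$, establishing $\mathcal{K}$-\JM{}.

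The construction is essentially forced, so I expect no deep obstacle; the only point demanding care is the counting that makes $(K+1)$ --- rather than $K$ --- extendibility the right hypothesis: the $K$ measurements in $\mathcal{K}$ consume $K$ output systems, and one additional system is indispensable to carry the quantum output that the remaining inputs $y\notin\mathcal{K}$ act upon (with only $K$ systems, nothing quantum would survive for these inputs). I would also check that $\{I_\lambda\}$ is a legitimate instrument, namely that the $I_\lambda$ are completely positive and sum to the CPTP map $\Tr_{S_1\cdots S_K}\circ\,\Phi_{1\to K+1}$, and note that permitting the simple choice $M_{b|y,\lambda}=M_{b|y}$ is consistent with the more general definition in Eq.~\eqref{eq: part JM}.
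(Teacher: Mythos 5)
Your construction is exactly the paper's: apply the parent channel $\Phi_{1\to K+1}$, measure the $K$ POVMs of $\mathcal{K}$ on the first $K$ output systems, and leave the last system untouched as the instrument's quantum output, with the trivial post-processing you describe. The paper states this in two sentences (with Fig.~\ref{fig:partext}); your version just spells out the verification that the marginal property of Eq.~\eqref{eq: marginal channels} reproduces $\Phi^*(M_{b|y})$ in both cases, which is correct.
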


To see this it is sufficient to apply, as illustrated in Fig. \ref{fig:partext}, the extended channel $\Phi_{1\to K+1}$, perform the intended measurements $M_y$ for $y$ in $\mathcal{K}$ on the first $K$ systems and leave the last one untouched. This defines a quantum instrument that satisfies all the required properties.
\begin{figure}[t]
    \centering
    \resizebox{.9\linewidth}{!}{
    \begin{tikzpicture}[auto, thick, node distance=2.5cm, >=latex]
    
    \node (laptop) {};
    
    \node [draw, minimum width=1.5cm, minimum height=0.8cm, above of=laptop, node distance=0.7cm] (laptop1) {};
    \node at (laptop1.center) {};
    \draw ([xshift=-0.62cm]laptop1.south) rectangle ([xshift=0.65cm,yshift=-0.5cm]laptop1.south);
    
    \draw ([xshift=-0.7cm,yshift=0.35cm]laptop1) rectangle ([xshift=0.7cm,yshift=-0.35cm]laptop1);
    
    \foreach \x in {0,...,5}
        \foreach \y in {0,...,2}
            \draw ([xshift=-0.55cm + \x * 0.2cm,yshift=-0.15cm - \y * 0.15cm]laptop1.south) rectangle ++(0.15, 0.1);
    
    \node [above of=laptop, node distance=1cm] (input) {};
    \node [below of=laptop, node distance=1cm] (output) {};
    
    \node [left of=laptop, node distance=7.5cm] (rho) {\large $\rho$};
    
    \draw [dashed] ($(laptop.north west)+(-6.5,2.3)$) rectangle node {} ($(laptop.south east)+(0.75,-2.3)$);
    
    \node [draw, fill=black, minimum width=1.2cm, minimum height=4.5cm, right of=rho, node distance=2.cm] (meas) {\textcolor{white}{$\Phi_{1\rightarrow K+1}$}};
    
    \node[draw, fill=black, minimum width=0.9cm, minimum height=0.7cm, anchor=center] (c) at (0,-1.85) {\textcolor{white}{$M_y$}};

    \node[draw, fill=black, minimum width=0.9cm, minimum height=0.7cm] (m1) at (-3.4,1.85) {\textcolor{white}{$M_1$}};
    \node[draw, fill=black, minimum width=0.9cm, minimum height=0.7cm] (m2) at (-3.4,0.5) {\textcolor{white}{$M_\mu$}};
    \node[draw, fill=black, minimum width=0.9cm, minimum height=0.7cm] (m3) at (-3.4,-0.85) {\textcolor{white}{$M_{K}$}};
    
    \node (l1) at (-2.3,1.85) {$\lambda_1$};
    \node (l2) at (-2.3,0.5) {$\lambda_\mu$};
    \node (l3) at (-2.3,-0.85) {$\lambda_{K}$};
    
    \draw [->] ($(input)+(0,1.6)$) -- ($(laptop1.north)+(0,0.2)$) node[midway,left] {if $y\in \mathcal{K}$};
    \draw [-] (0.2,2.8) -- (1,2.8);
    \draw [-] (1,2.8) -- (1,-0.5);
    \draw [-] (1,-0.5) -- (0,-0.5);
    \draw [->] (0,-0.5) -- (c) node[midway,left] {if $y\notin \mathcal{K}$};
    \draw [->] ($(c)+(0,-0.3)$) -- ($(output)-(0,1.65)$);
    \draw [->] (rho) -- (meas.west);

    \draw[->] (-4.9,1.85) -- (m1.west) node[midway,above] (r1) {$\rho_1$};
    \draw[->] (-4.9,0.5) -- (m2.west) node[midway,above] (r2) {$\rho_\mu$};
    \draw[->] (-4.9,-0.85) -- (m3.west) node[midway,above] (r3) {$\rho_{K}$};
    \draw[->] (-4.9,-1.85) -- (c) node[midway,above] (r4) {$\rho_{K+1}$};

    \draw[-] (m1.east) -- (l1.west);
    \draw[-] (m2.east) -- (l2.west);
    \draw[-] (m3.east) -- (l3.west);
    
    \draw[-,dotted] (m1.south) -- (m2.north);
    \draw[-,dotted] (m2.south) -- (m3.north);
    
    \draw[-,dotted] (r1.south) -- (r2.north);
    \draw[-,dotted] (r2.south) -- (r3.north);
    
    \draw[->] (l1.east) to [out=0,in=180] (laptop1.west);
    \draw[->] (l2.east) to [out=0,in=180] (laptop1.west);
    \draw[->] (l3.east) to [out=0,in=180] (laptop1.west);
    
    \node [above of=laptop, node distance=2.8cm] {\large $y$};
    \node [below of=laptop, node distance=2.9cm] {\large $b$};
    
    \end{tikzpicture}
    }
    \caption{Consider a CMU that admits an implementation with a channel $\Phi$ that is $K+1$-extendable. Perform each of the $K$ measurements $M_y$ for $y\in \mathcal{K}$ in the first $K$ extensions, resulting in outcomes $\lambda_1,\ldots,\lambda_{K}$, and forward the $K+1$th extension to the measurement device. Then by simply outputting $\lambda_y$ if $y\in \mathcal{K}$ and performing the measurement $M_y$ on the $K+1$th extension if $y\notin \mathcal{K}$ defines a CMU with effective POVMs that are $\mathcal{K}$-JM.}
    \label{fig:partext}
\end{figure}
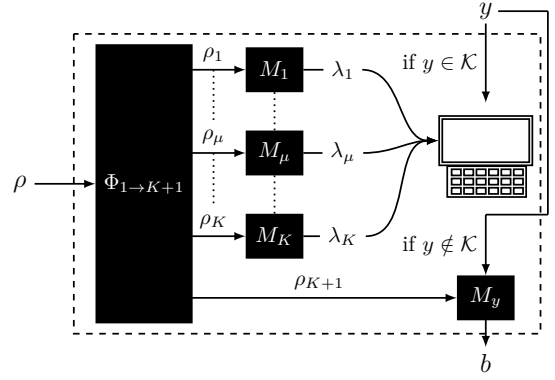

The above observation can be applied to the bounds (\ref{eq:lb-noclick}) and (\ref{eq:lb-thnc}) and implies that the white-noise no-click CMU is $\mathcal{K}$-JM if
\begin{equation}\label{eq:lb-noclick-pjm2}
    \eta \leq \frac{d}{(K+1)\left(v(d+1)-1\right)}\,,
\end{equation}
and that the thermal-noise CMU is $\mathcal{K}$-JM if
  \begin{equation}\label{eq:lb-thnc-pjm}
        \eta\leq \frac{1}{(K+1)(1-\epsilon /2)}\,,
    \end{equation}
which both improve (or equal) the original bounds when $K \leq N-1$.

In the context of DIQKD or semi-DI QKD where the key is generated from a subset $\mathcal{K}$ of the measurements, the bounds (\ref{eq:lb-noclick-pjm}), (\ref{eq:lb-noclick-pjm2}), and (\ref{eq:lb-thnc-pjm}) can be used to strengthen the critical levels of losses and noises beyond which no key can be generated whenever $K\leq N-1$. For instance, in the thermal-noise channel scenario, no key can be generated if $\eta< 1/(2-\epsilon)$ whenever the key is generated from a single measurement, independently of the total number $N$ of measurements used in the protocol, including those used for parameter estimation. While in the white-noise no-click scenario, we deduce from (\ref{eq:lb-noclick-pjm2}) that no key can be generated for $K=1$  if $\eta\leq 1/(3v-1)$, corresponding, e.g., to $\eta\leq 0.5235$ for $v=0.97$.

\subsection{Convex combination attacks for public communication from Bob to Alice}\label{sec:mixt}
As we explained above in any QKD protocol involving Alice and Bob and where Bob's quantum device is an untrusted CMU, no security can be guaranteed if the CMU is (partially)-\JM{}, since Eve can then have full information about Bob's output, i.e., Alice and Bob are decorrelated conditioned on Eve's knowledge. This is true independently of the protocol used to distill the shared key from Alice and Bob's generated data. However, as Eve's \JM{} attack targets Bob's device, one would intuitively expect protocols involving one-way public communication from Bob to Alice for key distillation to be more sensitive to such attacks. Indeed, for such protocols, the maximum achievable key rate is given by the Devetak-Winter formula \cite{devetak2005distillation,rennerrate}
\begin{equation}\label{eq:DW}
   r_\infty = H(B|E)-H(B|A)
\end{equation}
where $H(X|Y)$ is the von Neumann entropy of $X$ conditioned on $Y$. This key rate is zero as long as Eve's uncertainty $H(B|E)$ on Bob's symbols is lower than Alice's uncertainty $H(B|A)$, thus Eve's does not necessarily need to have full information about Bob's raw key for the protocol to be insecure.

This leads us to consider a class of \emph{convex combination attacks} directly inspired by a similar strategy proposed in \cite{kolodynski2020device} in the context of DIQKD\footnote{In the case of DIQKD, our attacks improve on the ones of \cite{kolodynski2020device} because they take into account not only losses but also white-noise. However, we can further improve these attacks by targeting both CMUs of a DIQKD protocol, see next subsection. We also remark that it is important to consider separately the case where a no-click outcome is kept as a distinct outcome in Alice and Bob's post-processing and the case where it is discarded. In \cite{kolodynski2020device} only the former case is analyzed. An example of a DIQKD protocol where binning of the no-click outcome leads to a positive key rate for a detection efficiency that is lower than the threshold computed in \cite{kolodynski2020device} can be found in \cite{Masini2022simplepractical}.}. 
In such attacks, Eve employs a mixture of two strategies: either, with probability $p$, she implements a lossy and noisy version of the expected measurements performed by Bob, in which case she exploits the fact that this implementation is (partially) \JM{} to get full information about Bob's outcome. Or with probability $1-p$, she replaces Bob's CMU with an ideal, loss-free, and noise-free CMU, in which case she has no information about Bob's outcome. This mixed strategy allows her to reproduce the actual CMU used by Bob beyond the admissible level of losses and noises derived in the previous sections, while at the same time providing her with enough information to make $ r_\infty = H(B|E)-H(B|A)=0$.

We consider such attacks only in the no-click scenario. Indeed, they cannot straightforwardly be applied to CMUs with a thermal-noise channel, since such channels are not closed under convex combinations. Specifically, we show the following in Appendix~\ref{app: gaussian decomp}.
\begin{lem}\label{Lemma:gauss-decomp}
    A thermal-noise channel $\Phi_{\eta,\epsilon}$ that is not $N$-extendable cannot be decomposed as 
\begin{equation}\label{eq: gauss dec}
    \Phi_{\eta,\epsilon} =p\, \Phi_G + (1-p) \Phi' \qquad (0<p<1)
\end{equation}
where $\Phi_G$ is a $N$-extendable Gaussian channel and $\Phi'$ is an arbitrary channel.
\end{lem}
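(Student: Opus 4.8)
The plan is to work entirely in phase space, probing the hypothetical decomposition~\eqref{eq: gauss dec} with coherent states and deriving a contradiction from the positivity of the density operators produced by $\Phi'$. Throughout I use that a Gaussian channel with scaling and noise matrices $X,Y$ transforms the quantum characteristic function as $\chi_\rho(\xi)\mapsto\chi_\rho(X^T\xi)\,e^{-\frac12\xi^T Y\xi}$, and that, channel application being linear, a convex combination of channels sends a state to the corresponding mixture of output states. The first move is to reduce to the phase-insensitive case: since $\Phi_{\eta,\epsilon}$ is phase-covariant, averaging the decomposition over the phase rotations $R_\theta$ (replacing each channel $\Psi$ by $\bar\Psi=\int\frac{d\theta}{2\pi}R_\theta^\dagger\Psi R_\theta$) leaves $\Phi_{\eta,\epsilon}$ invariant and gives $\Phi_{\eta,\epsilon}=p\,\bar\Phi_G+(1-p)\,\bar\Phi'$. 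Here $\bar\Phi_G$ is again Gaussian, now phase-insensitive with $X_{\bar G}=x\,\mathbbm1$, $Y_{\bar G}=y\,\mathbbm1$ and zero displacement, and it stays $N$-extendable because $N$-extendability is preserved under conjugation by unitaries and under convex mixtures of the parent channels (exactly as exploited in Lemmas~\ref{Lemma:conc general} and~\ref{Lemma: conc structure}); $\bar\Phi'$ remains a legitimate channel. So I may assume $\Phi_G$ is phase-insensitive with scaling $x$.

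Next I pin down the scaling. Feeding a coherent state $\ket\alpha$ into the symmetrized identity, $\Phi_{\eta,\epsilon}(\ket\alpha)$ is a displaced thermal state with first moment $\sqrt\eta\,\alpha$ and an $\alpha$-independent covariance $s\,\mathbbm1$, $s=1+\epsilon\eta$. Matching first moments expresses the mean of $\Phi'(\ket\alpha)$ as an affine function of $\alpha$, and matching second moments uses the mixture law $V_{\mathrm{mix}}=pV_G+(1-p)V'+p(1-p)(m_G-m')(m_G-m')^T$. The cross term is rank one with $m_G-m'\propto(x-\sqrt\eta)\,\alpha$, so it grows quadratically in $|\alpha|$ unless $x=\sqrt\eta$; as $V_G,V'\succeq0$ while the left-hand side $s\,\mathbbm1$ is bounded, I conclude $x=\sqrt\eta$, the two means coincide, and the cross term vanishes. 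Setting $\alpha=0$ then identifies $\Phi'(\ket0)$ as the necessarily phase-symmetric, number-diagonal state whose characteristic function is the fixed difference of Gaussians $\tilde\chi'(\xi)=\frac{1}{1-p}\big(e^{-\frac12 s|\xi|^2}-p\,e^{-\frac12 s_G|\xi|^2}\big)$, where $s_G=\eta+y$ is the vacuum-output variance of $\Phi_G$.

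I would then close with photon-number positivity. Writing $s$ and $s_G$ through thermal mean photon numbers, the Fock populations of $\Phi'(\ket0)$ are $\rho'_n=\frac{1}{1-p}\big((1-r_s)r_s^n-p\,(1-r_G)r_G^n\big)$ with $r_s=\frac{s-1}{s+1}$, $r_G=\frac{s_G-1}{s_G+1}$, both in $[0,1)$. Here the two hypotheses enter through the necessary-and-sufficient $N$-extendability condition for single-mode Gaussian channels (Upper bound~\ref{ub:7} together with~\cite{lami2019}): at fixed scaling $\sqrt\eta$, extendability of $\Phi_G$ forces $s_G\ge s_{\mathrm{thr}}:=1+2\eta-\tfrac2N$, whereas failure of $N$-extendability of $\Phi_{\eta,\epsilon}$ gives $s<s_{\mathrm{thr}}$; hence $s<s_G$ and $r_s<r_G$. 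Since $r_G^n$ dominates $r_s^n$ as $n\to\infty$ and enters with a negative sign, $\rho'_n<0$ for all large $n$, contradicting $\Phi'(\ket0)\ge0$, so no decomposition~\eqref{eq: gauss dec} can exist.

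I expect the scaling-matching step to be the main obstacle: making rigorous that the unbounded rank-one cross term cannot be absorbed is what forces $x=\sqrt\eta$, rules out a phase-sensitive $\Phi_G$, and is the only place that genuinely uses the physicality of $\Phi'$ (through $V'\succeq0$). Once the problem is reduced to the scalar comparison of $s$ and $s_G$, the positivity contradiction is routine; the only bookkeeping is to confirm $1\le s<s_G$, which follows on noting that a non-$N$-extendable thermal channel has $\eta>1/N$, whence $s_G\ge s_{\mathrm{thr}}>1$ and the geometric-distribution picture is valid.
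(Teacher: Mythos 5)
Your strategy—probing the decomposition with coherent states, pinning the scaling by moment matching, then contradicting positivity of $\Phi'$ via the extendability criterion of \cite{lami2019}—is genuinely different from the paper's, which bounds the ratio of Husimi $Q$-functions for all coherent inputs to show that any admissible Gaussian component must satisfy $X=\sqrt{\eta}\,\mathbbm{1}$ and $Y\leq(1-\eta+\eta\epsilon)\mathbbm{1}$ (Lemma~\ref{Lemma:append}) before invoking \cite{lami2019}. However, your proof has a genuine gap at its very first step, the phase symmetrization. The average $\bar\Phi_G=\int\frac{d\theta}{2\pi}\,R_\theta^\dagger\,\Phi_G\,R_\theta$ is a continuous convex mixture of the Gaussian channels $R_\theta^\dagger\Phi_G R_\theta$, and the set of Gaussian channels is \emph{not} convex, so $\bar\Phi_G$ is in general only phase-covariant, not Gaussian. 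Concretely, if $\Phi_G$ is a displacement $D_\delta$, then $\bar\Phi_G(\ketbra{0})=\int\frac{d\theta}{2\pi}\ketbra{e^{i\theta}\delta}$ is a ring-shaped mixture of coherent states; if $\Phi_G$ is a unitary squeezer, $\bar\Phi_G(\ketbra{0})$ is a uniform mixture of squeezed vacua over all orientations, supported on even Fock states only. Neither output is Gaussian. This is not cosmetic: all downstream steps lean on the scalar form $X_{\bar G}=x\mathbbm{1}$, $Y_{\bar G}=y\mathbbm{1}$—it is what makes $\Phi_G(\ketbra{0})$ a \emph{thermal} state so that the geometric Fock-tail comparison $r_s<r_G\Rightarrow\rho'_n<0$ applies, and it is what licenses the scalar form $s_G\geq 1+2\eta-2/N$ of the criterion of \cite{lami2019}, which is stated for Gaussian channels and cannot be applied to $\bar\Phi_G$.

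What survives, and is actually a nice alternative to the paper's route, is your moment-matching idea: it works \emph{without} any symmetrization. With $m_G=X\bm\alpha+\bm\delta$ and $p\,m_G+(1-p)m'=\sqrt{\eta}\,\bm\alpha$, the mixture covariance contains the rank-one term $\frac{p}{1-p}\bigl((X-\sqrt{\eta}\mathbbm{1})\bm\alpha+\bm\delta\bigr)\bigl((X-\sqrt{\eta}\mathbbm{1})\bm\alpha+\bm\delta\bigr)^T$, which must stay below $s\,\mathbbm{1}$ for all $\alpha$ (after checking, via operator positivity of $(1-p)\Phi'=\Phi_{\eta,\epsilon}-p\,\Phi_G$, that $\Phi'(\ketbra{\alpha})$ has finite second moments); this forces $X=\sqrt{\eta}\,\mathbbm{1}$, recovering half of Lemma~\ref{Lemma:append}. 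But after this step $Y$ is still an arbitrary noise matrix and $\bm\delta$ an arbitrary displacement, so $\Phi_G(\ketbra{0})$ is a displaced, possibly squeezed Gaussian state, and your geometric-tail argument no longer applies as stated: Fock populations of squeezed thermal states are not geometric, and the extendability criterion constrains $\det Y$, so you need an \emph{upper} bound on the largest eigenvalue of $Y$, not just on an effective temperature. To close the proof along your lines you would replace the Fock-tail step by a tail comparison of the $Q$-functions of $\Phi_{\eta,\epsilon}(\ketbra{0})$ and $\Phi_G(\ketbra{0})$ along the most-squeezed direction, which yields $\lambda_{\max}(Y)\leq 1-\eta+\eta\epsilon$ and hence $\sqrt{\det Y}\leq 1-\eta+\eta\epsilon<1+\eta-2/N$, contradicting \cite{lami2019}—i.e., essentially re-deriving the content of the paper's Lemma~\ref{Lemma:append}. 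As it stands, the proposal is incomplete.
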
   

We prove this observation by showing that the Q-function of the channel $\Phi'= (\Phi_{\eta,\epsilon}-p\, \Phi_G)/(1-p)$ (for a coherent state input) can not be positive for such a decomposition.\\

In the following, we thus focus on QKD protocols with one-way communication from Bob to Alice and involving a white-noise no-click CMU represented by effective POVMs $\{M_y^{\eta,v}\} $ of the form (\ref{eq:wn-povms}).  We assume that this set of POVMs is neither \JM{}, nor $\mathcal{K}$-JM, where $\mathcal{K}$ is the set of inputs used to establish the raw key, since otherwise we have already shown that no key can be extracted. 

Consider now an attack, where, with probability $p$, Eve uses a strategy $S_*$ and, with probability $1-p$, a strategy $S_1$. In the former case, Bob's untrusted measurement device implements the POVMs $M_y^{\eta^*,v^*}$, where $\eta^*$ and $v^*$ are chosen such that these POVMs are $\mathcal{K}-\JM{}$ so that Eve has full information about Bob's outcome, i.e., $H(B|E,S_*)=0$. In the latter case, Bob's measurement device implements the ideal POVMs $M_{y}^{1,1}$, which may prevent Eve from having any non-trivial information about his outcome. Conservatively, we can upper-bound Eve's information by $H(B|E,S_1)\leq H_{1,1}(B)$, where $H_{1,1}(B)$ is the entropy of Bob's outcome, averaged over the inputs $y\in \mathcal{K}$, when Bob performs ideal POVMs $M_{y}^{1,1}$. 

Altogether Eve's information on Bob's outcome is then given by
\begin{align}\label{eq:HBE}
    H(B|E) &= p H(B|E,S_*) + (1-p) H(B|E,S_1)\nonumber \\
    &\leq (1-p) H_{1,1}(B)\,,
\end{align}
hence the key rate is upper-bounded by
\begin{equation}\label{eq: keyrate UB}
    r_\infty \leq (1-p) H_{1,1}(B) - H_{\eta,v}(B|A)\,,
\end{equation}
where $p$ remains to be determined, and where $H_{\eta,v}(B|A)$ is computed from the actual correlations between Alice and Bob's outcomes, characterized by an efficiency $\eta$ and visibility $v$. 

This attack must reproduce on average Bob's actual POVMs $M^{\eta,v}_{b|y}$. Thus, it must satisfy
\begin{equation}
   M^{\eta,v}_{b|y} = p  M^{\eta_*,v_*}_{b|y} + (1-p)M^{1,1}_{b|y},
\end{equation}
which is equivalent to the following system of equations
\begin{equation}\label{eq: system eq}
 \begin{aligned}
 p(1-\eta_* \, v_*)= 1-\eta v \\
  p\, (1-\eta_* ) = 1-\eta \,.
 \end{aligned} 
\end{equation}
We have seen in the previous sections, that the POVMs $M_y^{\eta^*,v^*}$ are $\mathcal{K}$-JM if $\eta^*$ satisfies a bound $\eta^* \leq \hat \eta_{N,K,d}(v^*)$ which depends on the visibility $v^*$, the number of measurements $N$ or $K$, the Hilbert space dimension $d$, and other properties of the measurements. Eve will pick the transmission value $\eta^*$ which saturates this bound
\begin{equation}
\eta_* = \hat \eta_{N,K,d}(v_*).\label{eq:simulated_transm}
\end{equation}
Inserting this value for $\eta_*$ in Eq.~\eqref{eq: system eq}, we can then determine $p$ (and $v_*$) as a function of Bob's actual CMU parameters $\eta$ and $v$, and the obtained value of $p$ can be used in (\ref{eq: keyrate UB}) to get a bound on the key rate. We now illustrate this using several of the bounds obtained in the previous sections.\\

\textbf{Arbitrary measurements in dimension $d$.} Using the bound (\ref{eq:lb-noclick-pjm2}) and setting $\eta_*=\frac{d}{(K+1)\left(v_*(d+1)-1\right)}$ to the value saturating this bound, by solving Eq.~\eqref{eq: system eq} we find that $p= p_{K,N,d}'$ with
\begin{equation}\label{eq: many d}\begin{split}
        p_{K,N,d}'(\eta,v) = &\frac{\eta(1-v)+d(1-\eta v)}{d}
        \\
        &\times \max\left\{\frac{K+1}{K},\frac{N}{N-1}\right\}\,.
\end{split}
\end{equation}
Here we used the fact that if all the $N$ measurements are used to extract the key one simply replaces $K+1$ with $N$.
An interesting special case is the one where the key is extracted from an extremely large number of measurements. A bound on the key rate independent of $N$ and $K$ can be obtained from Eq.~\eqref{eq: many d}, by taking the limit of large $K,N$. We find 
\begin{equation}
     \lim_{K,N\rightarrow \infty} p_{K,N,d}'(\eta,v) = 1-\frac{\eta(v(d+1)-1)}{d},\\
\end{equation}
and thus by Eq.~\eqref{eq: keyrate UB} the key rate $r_\infty=0$ is zero if 
$ \frac{\eta(v(d+1)-1)}{d}\leq \frac{H_{\eta,v}(B|A)}{H_{1,1}(B)}.$\\

\textbf{A few binary measurements in dimension 2.}
In the case $d=2$, the value (\ref{eq:  many d}) based on the generic bound (\ref{eq:lb-noclick-pjm2}) can be used. However, in the specific case of qubit measurement of the form~\eqref{eq:qubit-meas-no-noise} and $K=N=2$ or $K=N=3$, it is more advantageous to use the bound (\ref{eq:lb-noclick-d2}). 
Setting $\eta_*$ to saturate this bound we find that $p$ is given by
\begin{equation}\label{eq: p N qutits}
     p''_{N}(\eta,v)= \frac{N (1-\eta  v)+\eta  \sqrt{N} (1-v)}{N-1}\,.
\end{equation}
Note  that for any specific set of $N$ qubit measurements, one can of course use a tighter \JM{} condition based on Upper bound~\ref{ub:2'} and (\ref{eq: v star N}) to derive a better attack.\\

\textbf{All binary measurements in dimension 2.}
Finally, as discussed in section~\ref{sec:allmeas} the set of all qubit PVMs subject to noise and loss become \JM{} for $\eta \leq 2 (1-v)$. Setting $\eta_* = 2(1-v_*)$ in \ref{eq: system eq} we find $p$ to be equal to
\begin{equation}\label{eq: all qubit PVMS p}
    p'''(\eta,v) =  1-\eta v +\sqrt{\eta  (1-v) (2-\eta(1+  v))}\, .
\end{equation}\\

For the case of a qubit $(d=2)$ with projective measurements, which is the most studied in the literature, we have thus obtained three different expressions for $p$ leading to different bounds on the key rate in Eq.~\eqref{eq: keyrate UB}. In the following table, we compare these expressions and indicate which one gives a better attack depending on the values $N$ and $K$.
\begin{table}[H]
\centering
\begin{tabular}{|c || c |c |c |}
\hline
\diagbox{$K$}{$N$}   & 2 & 3 & $\geq$ 4 \\
\hline \hline
1 & Eq.~\eqref{eq: p N qutits} \cellcolor{lime} & Eq.~\eqref{eq: many d} \cellcolor{cyan}& Eq.~\eqref{eq: many d} \cellcolor{cyan}\\
\hline
2 & Eq.~\eqref{eq: p N qutits} \cellcolor{lime}&  Eq.~\eqref{eq: p N qutits} \cellcolor{lime} & Eq.~\eqref{eq: many d} \cellcolor{cyan}\\
\hline
3 & x \cellcolor{lightgray}& Eq.~\eqref{eq: p N qutits} \cellcolor{lime}& Eqs.~\eqref{eq: many d} or \eqref{eq: all qubit PVMS p} \cellcolor{yellow}\\
\hline
$\geq$ 4 & x \cellcolor{lightgray}& x \cellcolor{lightgray}& Eqs.~\eqref{eq: many d} or \eqref{eq: all qubit PVMS p} \cellcolor{yellow}\\
\hline
$\to \infty$ & x \cellcolor{lightgray}& x \cellcolor{lightgray}& Eq.~\eqref{eq: all qubit PVMS p} \cellcolor{pink}\\
\hline
\end{tabular}
\caption{ In the case of  $N$ binary measurements (ideally) on a qubit, $K$ of which are used for key extraction, the table gives the optimal attack parameter $p$ among the expressions $p_{K,N,2}'(\eta,v),p_{N}''(\eta,v)$ or $p'''(\eta,v)$ in Eqs.~(\ref{eq: many d} ,\ref{eq: p N qutits}, \ref{eq: all qubit PVMS p}) derived by using different JM bounds discussed in the previous section. In addition to the table, we find that for $v=1$ the maximal value is always $p=  (1-\eta) \max\left \{\frac{K+1}{K}, \frac{N}{N-1}\right\}$.}\label{tab:bounds}
\end{table}

\subsubsection{Application to BB84/CHSH-type protocols}
We now illustrate the above approach on a family of qubit protocols where the key on Bob's side is based on the output $b\in\{0,1\}$ of a single measurement, which we assume in the honest implementation to be a $\sigma_z$ measurement. We also assume that in the key generation rounds, the state entering Bob's CMU is the $\sigma_z$ eigenstate $|0\rangle$ or $|1\rangle$, depending on Alice's key variable $a=0$ or $a=1$.
These are the only features of the protocol that we need to apply a bound on the key rate using our approach. 

Protocols satisfying the above conditions are the one-sided-DI version of the original BB84 protocol, where the source is fully characterized \cite{tomamichel2013one}, semi-DI versions of it, where Alice's device is only trusted to prepare qubit states \cite{woodhead2016semi}, the semi-DI CHSH prepare-and-measure protocol of \cite{woodhead2015secrecy}, entanglement-based protocols such as the one-sided-DI protocols based on steering of \cite{branciard2012one,masini-onesided}, fully DI CHSH-based protocols \cite{ref:ab2007}, or their routed version \cite{ref:routed}.

As $K=1$ and $d=2$, we have from (\ref{eq: many d}), $p = 2+\eta(1-3v)$, while $H_{1,1}(B)=1$. To compute $H_\eta(B|A)$, we can consider two possibilities. Either non-detection events are treated as distinct outcomes, i.e., $b\in\{0,1,\varnothing\}$. Or they are grouped with one of the other possible outcomes. In the first case, we have $H_{\eta,v}(B|A)\geq \eta\,h\left( \frac{1+v}{2}\right)+h(\eta)$, so that the key rate is upper-bounded as 
\begin{equation}
    r_\infty \leq \eta(3v-1)-1-\eta\,h\left( \frac{1+v}{2}\right)-h(\eta)\,.
\end{equation}
In the case where Bob bins his no-click outcome with one of the other outcomes, we have 
$H(B|A)\geq \frac{1}{2}\left(h\left(\frac{\eta}{2}(1+v)\right)+h\left(\frac{\eta}{2}(1-v)\right)\right)$ and the key rate is upper-bounded as
\begin{equation}
    r_\infty \leq \eta(3v-1)-1-\frac{1}{2}\left(h\left(\frac{\eta}{2}(1+v)\right)+h\left(\frac{\eta}{2}(1-v)\right)\right)\,.\nonumber
\end{equation}

As an illustration, for a visibility $v=1$, we find thresholds of $\eta=83.0\%$ in the case where Bob does not bin and $\eta=71.6\%$ in the case where he bins, and for a visibility $v=0.97$, we find thresholds of $\eta=87\%$ and $\eta=76\%$, respectively.

\subsubsection{Application to receiver-device-independent protocols}
As a second application, we consider the class of so-called receiver device-independent (RDI) protocols \cite{ioannou2022receiver,Ioannou2022njp}, which can be seen as an extension of the B92 protocol \cite{b92} to the case of many states and measurements. These are prepare-and-measure protocols as outlined in Fig.~\ref{fig:exp}(a). Here, the measuring device is completely untrusted, while the preparing device is assumed to produce $N$ pure states whose Gram matrix is fixed. 
We now provide a brief description of the protocol with an intuitive motivation for the ideal case (no noise), an interested reader is invited to consult the original papers for the full details.
We will consider the version of the protocol where Alice prepares qubit states 
\begin{equation}\label{eq: RDI states}
    \ket{\psi_x}=\cos(\theta/2)\ket{0}+e^{\frac{i2\pi}{N}x}\sin(\theta/2)\ket{1},
\end{equation}
for some choice of $\theta \in [0,\frac{\pi}{2}]$ and a value $x\in\{0,\dots,N-1\}$ that she choses at random, analyzed in \cite{Ioannou2022njp} in detail. Bob's ideal measurements are then $M_{b|y}=\ketbra{\psi_y^\perp},\ketbra{\psi_y}$ for $b=0,1$ respectively, with the setting $y\in \{0,\dots,N-1\}$ chosen at random.  After his measurement, Bob tells Alice to reject the round if $b\neq 0$. At this point, since $\braket{\psi_y}{\psi_y^\perp}=0$ both parties know that their values $x$ and $y$ must satisfy $x\neq y$  in all the rounds which are not rejected. Next, Alice reveals an unordered pair of values $(x,x')$ where $x$ is the value she encoded in the state and $x'$ is a different value chosen at random. This pair is only accepted by Bob if $y\in \{x,x'\}$. Since there are only two possible values left, and Bob can exclude one of them (from $ x\neq y$), Alice and Bob are left with a perfectly correlated bit after the sifting. 
This gives the idea of the protocol. We now explicitly compute the correlations observed by Alice and Bob. 

The probability to observe the outcome $b=0$ depends on the state $\ket{\psi_x}$ sent by Alice and is given by
\begin{align}\label{eq: RDI p 0}
    p^{(\eta,v)}(0|x,y) &= \eta\,  v |\braket{\psi_x}{\psi_y^\perp}|^2 + \frac{\eta (1-v)}{2} 
\end{align}
where $|\braket{\psi_x}{\psi_y^\perp}|^2= \sin^2(\theta) \sin^2(\frac{\pi (x-y)}{N})$.
From Bob's perspective (knowing $y$), a round is successful if $b=0$, and if $x=y$ (happening with probability $\frac{1}{N}$) or if $x'=y$ (for some $x\neq y$), summing the two gives
\begin{equation}
    p(\text{succ}|y) =  \frac{1}{N} p(0|y,y) +  \frac{1}{N(N-1)}\sum_{x\neq y} p(0|x,y).
\end{equation}
Combining with Eq.~\eqref{eq: RDI p 0} we find that the success probability is independent of $y$ and given by 
 \begin{align}\label{eq: p succ}
     p^{(\eta,v)}(\text{succ}) = \frac{\eta v \sin^2(\theta)}{2 (N-1)}+\frac{\eta(1-v)}{ N} 
 \end{align}

 Eve's strategy consists of replacing the CMU with a jointly measurable one, given by the parameters $(\eta_*,v_*)$, with probability $p=p(\text{attack})$, or with the ideal one with probability $1-p$. In fact, the probability that the round is successful depends on her choice.  The conditional probability of the attack reads
 \begin{equation}
   \bar p = p(\text{attack}|\text{succ}) =  \frac{p^{(\eta_*,v_*)}(\text{succ})\, p}{p^{(\eta,v)}(\text{succ})},
 \end{equation}
and implies $H(B|E,\text{succ})=(1-\bar p)$.
With the help of Eqs.~\eqref{eq: system eq} we see that 
\begin{align}
    p^{(\eta_*,v_*)}(\text{succ})\, p &= \frac{ p\eta_* v_* \sin^2(\theta)}{2 (N-1)}+\frac{p \eta_*(1-v_*)}{ N} \\
    & =  p^{(\eta,v)}(\text{succ}) - \frac{(1-p)\sin^2(\theta)}{2 (N-1)}
\end{align}
 and thus
     $\bar p = 1-(1-p)\frac{\frac{\sin^2(\theta)}{2 (N-1)}}{{p^{(\eta,v)}(\text{succ})}}$\,.

When Eve replaces, with probability $p$, the CMU with a \JM{} one, her parent POVM predicts, according to the strategies detailed in Section~\ref{sec:DV} that a no-click outcome $b=\{0,1\}$ will be obtained for Bob for only one of his possible measurements $y$. This implies that when Bob announces that he obtained the value $b=0$, Eve can infer which input $y$ was used by Bob and hence her entropy about Bob's raw key is 0. Analogously to Eq. (\ref{eq:HBE}), we thus have $H(B|E,\text{succ})\leq (1-\bar p)H_{1,1}(B|\text{succ})\leq 1-\bar p$ or
\begin{equation}
 H(B|E,\text{succ}) \leq (1-p)\frac{\frac{\sin^2(\theta)}{2 (N-1)}}{{p^{(\eta,v)}(\text{succ})}}.
 \end{equation}
While this formula is affected by post-selection, we see that the optimal strategy for Eve is still the one maximizing $p$ such that the noisy CMU is jointly measurable and we can use the formulas derived in the last section and summarized in Table.~\ref{tab:bounds} (in our case $K=N$ as all of Bob's measurements are used for the key).

Finally, to compute the key rate, we also need the expression of $H(B|A,\text{succ})$. For a successful round where Alice prepared the state $x$ and also revealed $x'$,
Bob must have chosen the measurements with $y=x$ or $y=x'$. For Alice the likelihood of these events is
\begin{align}
    p(y=x|A,\text{succ}) &= \frac{p^{(\eta,v)}(0|x,x)}{p^{(\eta,v)}(0|x,x)+p^{(\eta,v)}(0|x,x')} \\
    p(y=n|A,\text{succ}) &= \frac{p^{(\eta,v)}(0|x,x')}{p^{(\eta,v)}(0|x,x)+p^{(\eta,v)}(0|x,x')}
\end{align}
The entropy of this distribution (as well as the success probability) depends on the pair $\{x,x' \}$. The entropy is minimal if $\sin^2(\pi \frac{x-x'}{N})=1$ in which case it is equal to the rhs of
\begin{equation}
H(B|A,\text{succ}) \geq h\left(\frac{1-v}{2( 1-v \cos^2(\theta))} \right).
\end{equation}
This gives a general upper bound on the key rate with one-way error correction
\begin{align}
    r &\leq p(\text{succ})\Big( H(B|E,\text{succ}) - H(B|A,\text{succ}) \Big) 
    \\  &\leq 
    (1-p)\frac{\sin^2(\theta)}{2 (N-1)} - p^{(\eta,v)}(\text{succ}) h\left(\frac{1-v}{2(1-v \cos^2(\theta))} \right)\nonumber
\end{align}

To apply the bound we now take the limit of many measurements $N\to \infty$, where it was shown~\cite{ioannou2022receiver} that the key rate remains positive for arbitrary losses provided that the visibility is perfect $v=1$ and $N>\frac{1}{\eta}$. Since the ideal measurements of Bob are qubit PVMs we can use the Eq.~\eqref{eq: all qubit PVMS p} for the strategy of Eve that maximizes $(1-p)$. We find that for $\eta= 10 \%,1\%,0.1\%$ the key rate is zero if $v\leq 95\%, 99.5\%, 99.95\%$ respectively.

With the same approach tighter bounds can be derived on the RDI protocol by computing the entropy $H(B|A,\text{succ})$ explicitly, and deriving a tighter \JM{} condition for the specific set of the CMUs stemming from the measurements $M_{b|y}=\{\ketbra{\psi_y}, \ketbra{\psi_y^\perp}\}$ specified by Eq.~\eqref{eq: RDI states}.

\subsection{Convex combination attacks for DIQKD protocols}
\label{sec: full DI}
The convex combination attacks introduced in the previous section are asymmetric as they apply to generic QKD protocols where Bob holds a CMU, but where Alice may hold, e.g., a trusted preparation device. Fully DIQKD protocols, however, are symmetric in the sense that both parties hold a CMU. Exploiting this feature, we analyze in this section improved convex combination attacks for DIQKD protocols that lead to more stringent constraints on the key rate. Furthermore, again due to the symmetry of the protocol, they can be applied to situations where the public communication for key distillation goes from Bob to Alice, as in the previous section, or from Alice to Bob. For simplicity, we consider explicitly only the former case. But we note that in many cases the measurements and post-processing performed by Alice and Bob in DIQKD are essentially identical (up to e.g. local unitaries), implying that the same bounds on the key rate would be obtained by considering the public communication in the other direction.

Our convex combination attacks, in their simplest version, are a mixture of strategies where Eve replaces either Alice's or Bob's CMU with a full \JM{} for a portion of the time, and allowing the devices to operate ideally for the remainder. In the former case, the correlations between Alice and Bob are local, while they are non-local in the latter case. This class of attacks thus fits in the same framework as those proposed in \cite{farkas2021bell,lukanowski2022upper} involving a convex combination of local and non-local correlations. Such local/non-local mixtures are obtained in \cite{farkas2021bell,lukanowski2022upper} through analytical or numerical (linear programming) results tailored to specific quantum correlations. However, the analytical results are limited to highly specific cases and linear programming methods become increasingly challenging when dealing with a large number of measurements and outputs. In contrast, the approach based on joint measurability is generic, as it depends only on rough properties of the quantum protocol, and it can provide bounds even in scenarios with many measurements. The downside is that it may lead to bounds on the key rate less stringent than the ones of \cite{farkas2021bell,lukanowski2022upper} when applied to the specific correlations for which their methods are tailored.

Note, however, that more generally, we will consider attacks where Eve replaces Bob's CMU with a $\mathcal{K}$-\JM{} one with some probability. This gives rise to a class of attacks different from the one considered in \cite{farkas2021bell,lukanowski2022upper}, since correlations arising from a $\mathcal{K}$-\JM{} POVMs are not necessarily local, yet, are sufficient for Eve to have full information about Bob's key-generating outcomes. This may compensate for the fact that our joint measurability bounds are not tailored to specific correlations. 
In contrast, it is important to note that Alice’s device cannot be replaced with a $\mathcal{K}$-\JM{} one, as this does not allow Eve to guess Bob's key generating outcomes in general.

Let us now describe our attacks in more detail. Consider a generic DIQKD scenario consisting of a central source distributing bipartite states $\rho$, a CMU for Alice, and a CMU for Bob, as depicted in Fig.~\ref{fig:exp}.b. For concreteness, we assume that both Alice and Bob have a white-noise no-click CMU characterized by the effective POVMs (\ref{eq:wn-povms}) with detector efficiency $\eta$ and white-noise visibility $v$. Therefore, together they implement the joint effective POVMs $M_{a|x}^{\eta,v}\otimes M_{b|y}^{\eta,v}$ on the incoming state $\rho$. We further denote by $N_A$ the number of measurements on Alice's side, and by $N_B$ the overall number of measurements on Bob's side. Furthermore, let us assume that the key is only extracted from a subset of Bob's measurements with the inputs $y \in \mathcal{K}_B$.

Eve's attack is based on the following convex decomposition of Alice and Bob's joint POVMs
\begin{align}
    M_{a|x}^{\eta,v}\otimes M_{b|y}^{\eta,v} &= p_A M_{a|x}^{\eta_A,v_A}\otimes M_{b|y}^{1,1} + p_B M_{a|x}^{1,1}\otimes M_{b|y}^{\eta_B,v_B}\nonumber \\
    & + q M_{a|x}^{\eta',0}\otimes M_{b|y}^{\eta',0}\label{eq:convexattack}\\
    & + (1-p_A-p_B-q) M_{a|x}^{1,1}\otimes M_{b|y}^{1,1}\nonumber\,,
\end{align}
where $\eta' = \eta(1-v)/(1-\eta v)$, $q=(1-\eta v)^2$, and $\eta_A$, $v_A$, $\eta_B$, $v_B$, $p_A$, and $p_B$ are free parameters that Eve can chose. Using the explicit form for the white-noise no-click CMU POVMs (\ref{eq:wn-povms}), it is readily verified that this decomposition is valid provided that these parameters satisfy the following system of equations
\begin{align}
    2 \eta v(1-\eta v) &= p_A(1- \eta_A v_A) +p_B(1-\eta_B v_B) \label{eq:constr1}\\
    \eta^2 (1-v)v & =\eta _{A}p_{A} \left(1-v_{A}\right) =\eta _{B}p_{B} \left(1-v_{B}\right)  \\
    \eta  v  (1-\eta ) & = p_{A}\left(1-\eta _{A}\right) =  p_{B}\left(1-\eta _{B}\right)\,.
\end{align}
The first constraint (\ref{eq:constr1}) is linearly redundant and can be omitted. The remaining ones are equivalent to 
\begin{align}
    \eta v(1-\eta v) & =p_{A} \left(1-\eta_A v_{A}\right) =p_{B} \left(1-\eta_B v_{B}\right) \label{eq:att1} \\
    \eta  v (1-\eta )  & = p_{A}\left(1-\eta _{A}\right) = p_{B}\left(1-\eta _{B}\right) \,.   \label{eq:att2}
\end{align}
Here, one sees that the tuples of parameters $(p_A,\eta_A,v_A)$ and $(p_B,\eta_A,v_A)$ are subject to independent constraints, leaving one free parameter for each tuple.

The convex decomposition (\ref{eq:convexattack}) can be exploited by Eve as follows. With probability  $1-p_A-p_B-q$, Alice's and Bob's CMUs behave ideally as they implement the ideal POVMs $M_{a|x}^{1,1}\otimes M_{b|y}^{1,1}$. In this case, we assume that Eve has no extra information about Bob's outcomes, i.e. her conditional entropy is only bounded by 
$H(B|E)\leq H_{1,1}(B)$,
where $H_{1,1}(B)$ the entropy of Bob's outcomes averaged over the inputs $y\in\mathcal{K}_B$ in the ideal (noise-free and loss-free) situation. With probability $q$, the CMUs of Alice and Bob implement the POVMs $M_{a|x}^{\eta',0}\otimes M_{b|y}^{\eta',0}$, which are maximally noisy, i.e., the visibility is zero. In this case, the correlations generated by Alice and Bob are local. Since Eve can further decompose these correlations as a mixture of deterministic correlations, she has full information about Bob's outcomes, i.e., $H(B|E)=0$. With probability $p_A$, Alice's CMU implements the POVMs $M_{a|x}^{\eta_A,v_A}$. If we choose $\eta_A$ and $v_A$ such that these POVMs are (full) \JM{}, then, again,  Alice and Bob's correlations are local, and thus $H(B|E)=0$. Finally, with probability $p_B$, Bob's CMU implements the POVMs $M_{b|y}^{\eta_B,v_B}$. If we now choose $\eta_B$ and $v_B$ such that these POVMs are $\mathcal{K}_B$-\JM{}, Eve has again full information about Bob's outcomes for the inputs $y\in\mathcal{K}_B$, i.e., $H(B|E)=0$, since she can implement the parent instrument and predict all these outcomes. Note that in this last case, though, the correlations between Alice and Bob, which involve also the inputs $y\notin\mathcal{K}$, are not necessarily local. Summarizing, by our attack the key rate is upper-bounded as
\begin{equation}\label{eq: DIQKD bound}\begin{split}
    &r_\infty \leq t H_{1,1}(B) - H_{\eta,v}(B|A)\,,\\
    &\quad \text{with}\\
    &t = 1-q-p_A-p_B = 2\eta v -\eta^2v^2 -p_A-p_B\,.
    \end{split}
\end{equation}
Hence to find the tightest upper-bound, we must maximize the probabilities $p_A$ and $p_B$ subject to the constraints that the POVMs $M_{x}^{\eta_A,v_A}$ are full \JM{} and the POVMs $M_{y}^{\eta_B,v_B}$ are $\mathcal{K}$-\JM{}. For this, it is optimal for Eve to set the efficiency $\eta_A = \hat \eta_{N_A,d}(v_A)$ saturating one of the bounds for \JM{} that we have derived in the previous sections and similarly setting $\eta_B = \hat \eta_{N_B,K_B,d}(v_B)$ saturating one of the bounds for $\mathcal{K}_B$-\JM{}. Then, the optimal values of $p_A$ and $p_B$ are given by the solutions of the Eqs.~(\ref{eq:att1},\ref{eq:att2}), i.e 
\begin{equation}
    \begin{aligned}
    \hspace{-0.9em}\frac{p_{A}}{\eta v} \left(1-\hat \eta_{N_A,d}(v_A) \, v_{A}\right)  &=  \frac{p_B}{\eta v} \left(1-\hat \eta_{N_B,K_B,d}(v_B)v_{B}\right)\\
    &= 1-\eta v  \\
     \frac{p_{A}}{\eta v}\left(1-\hat \eta_{N_A,d}(v_A)\right) &=  \frac{p_{B}}{\eta v}\left(1-\hat \eta_{N_B,K_B,d}(v_B)\right)\\
     &=   1-\eta .
    \end{aligned}
\end{equation}
Remarkably, these equations become equivalent to those in \eqref{eq: system eq} by the following parameter change $ (p_{A(B)}, \eta_{A(B)}, v_{A(B)})\to(\eta v\, p,\eta_*,v_*)$. It follows that the expressions $ \eta \,v \, p_{K,N,d}'(\eta,v), \eta \,v \,  p_{N}''(\eta,v)$ or $\eta \,v \,  p'''(\eta,v)$ in Eqs.~(\ref{eq: many d} ,\ref{eq: p N qutits}, \ref{eq: all qubit PVMS p}) can be readily used to find $p_{A(B)}$ in different cases. 

In particular, our bound~\eqref{eq: DIQKD bound} guarantees that the key rate is zero if one can choose the values $p_A$ and $p_B$ such that
\begin{equation} \label{eq: no key DIQKD 1}
    p_A+p_B \geq 2 \eta v -\eta^2 v^2 - \frac{H_{\eta,v}(B|A)}{H_{1,1}(B)}
    \implies r_\infty=0.
\end{equation}
Noticing that the term $\frac{H_{\eta,v}(B|A)}{H_{1,1}(B)}$ is always positive one obtains a weaker but correlation-independent condition for zero key rate
\begin{equation}\label{eq: no key DIQKD 2}
    p_A+p_B \geq 2 \eta v -\eta^2 v^2
    \implies r_\infty=0.
\end{equation}
We now illustrate this bound on the three specific attack strategies already discussed in Sec.~\ref{sec:mixt}.\\

\textbf{Arbitrary measurements in dimension $d$.} When we make no assumption on the measurements performed except their dimensionality, we can use bounds (\ref{eq:lb-noclick}) and (\ref{eq:lb-noclick-pjm2}), and we obtain that the probabilities $p_A$ and $p_B$ must be
\begin{align}\label{eq:bnd DIQKD d A}
    p_A&= \eta\, v \, p_{N_A-1,N_A,d}'(\eta,v)\\
    \label{eq:bnd DIQKD d B}
    p_B & =\eta \,v \, p_{K_B,N_B,d}'(\eta,v)\,,    
\end{align}
with the function $p_{K,N,d}'(\eta,v)$ defined in equation \eqref{eq: many d}. In particular, no key can be extracted (Eq.~\ref{eq: no key DIQKD 2}), whenever
\begin{equation}
    \eta\leq\frac{\left(T_{N_A,K_B,N_B}-2\right) d}{\left(T_{N_A,K_B,N_B}-1\right) d \, v-T_{N_A,K_B,N_B}(1-v)},\nonumber
\end{equation}
where we introduced $T_{N_A,K_B,N_B}=\frac{N_A}{N_A-1}+\min\left\{\frac{K_B+1}{K_B},\frac{N_B}{N_B-1}\right\}$ to shorten the equation. This bound becomes $\eta \leq 1- \frac{1}{T_{N_A,K_B,N_B}-1}$ for $v=1$, and  $v\leq \frac{T_{N_A,K_B,N_B} + d(T_{N_A,K_B,N_B}-2)}{T_{N_A,K_B,N_B}+ d(T_{N_A,K_B,N_B}-1)}$ for $\eta=1$.\\


\textbf{A few binary measurements on a qubit.} When Alice and Bob use $N_A$, $N_B$ (respectively) binary qubit measurements we can use the bound in Eq.~\eqref{eq: N PVM simple} to obtain
\begin{equation}\label{eq:2-sides-2pvm}
    p_{A(B)}=\eta \, v \, p''_{N_{A(B)}}(\eta, v),
\end{equation}
with the function $p''_{N}(\eta, v)$ defined in Eq.~\eqref{eq: p N qutits}.
In particular, for $N_A=N_B=N$ we find that no key can be extracted (Eq.~\eqref{eq: no key DIQKD 2}) whenever
\begin{equation}\nonumber
    \eta\leq\frac{2(N+1)v+4(1-v)\sqrt{N}}{(N-1)^2v^2+4N(2v-1)},
\end{equation} 
which becomes $v\leq \frac{2}{\sqrt{N}+1}$ for $\eta=1$.\\

\textbf{All binary measurements on a qubit.} Finally, in the same situation but with an unrestricted number of measurements for Alice and Bob, we can use the bound of Eq.~\eqref{eq:allmeas} valid for the set of all qubit PVMs to obtain
\begin{equation}\label{eq:2sides-all}
    p_{A(B)}=\eta \, v \, p'''(\eta,v),
\end{equation} 
with $p'''(\eta,v)$ defined in Eq.~\eqref{eq: all qubit PVMS p}.
In particular, we find that no key can be extracted (Eq.~\eqref{eq: no key DIQKD 2}) if
\begin{equation}\nonumber
\eta \leq \frac{8 (1-v)}{4-3 v^2},
\end{equation}
which becomes $v\leq \frac{2}{3}$ for $\eta=1$.\\

Note that different bounds can be used on the sides of Alice and Bob. Furthermore, the attack we presented can be easily improved with additional knowledge about the measurements performed by the parties, as in this case one can use tighter (partial)-JM bounds specific to a given CMU. 

Recall also that in the most common case where Alice and Bob perform qubit PVMs, the comparison of the three bounds $p'_{K,N,2}(\eta,v)$, $p''_{N}(\eta,v)$ and $p'''(\eta,v)$ was given in Table~\ref{tab:bounds}. It can be readily used to find the optimal attack of Eve in different cases. With its help, let us now explore the performance of popular DIQKD protocols in noisy conditions and determine the maximum achievable key rate. The challenge here is to compute the terms $H_{\eta,v}(B|A)$ and $H_{1,1}(B)$.

\subsubsection{Application to qubit protocols without binning}

Consider the protocols where Alice and Bob share a two-qubit state
\begin{equation}
\ket{\Psi_\theta}=\cos(\theta)\ket{00}+\sin(\theta)\ket{11}.\label{eq:part_ent}
\end{equation}
When the state is not maximally entangled Alice and Bob can only establish perfect correlation when they both measure in the $Z$ basis. In this case, we thus assume that the key is only extracted from this measurement and set $K_B=1$. The entropy of the key-generating outcome is then given by
\begin{equation}
H_{1,1}(B)=h\left(\cos^2(\theta)\right)
\end{equation}
for the ideal implementation. The conditional entropy term $H_{\eta,v}(B|A)$ is always larger than (equal for optimally aligned measurements)
\begin{multline}\label{eq:HBA}
    H_{\eta,v}(B|A) 
= \eta(1-\eta)+h(\eta) + \eta^2\, h\left( \frac{1+v^2}{2}\right),
\end{multline} where $h(x)$ is the binary entropy.  

Plugging these expressions into Eq.~\eqref{eq: DIQKD bound}  gives
\begin{equation}\label{eq: DIQKD bound partial}
    r_\infty \leq  h\big(\cos^2(\theta)\big) (2\eta v -p_A-p_B-\eta^2v^2) -H_{\eta,v}(B|A).
\end{equation}
One notices that $h(\cos^2(\theta))$ is the only term depending on the parameter $\theta$, and it is straightforward to see that the optimal protocol (where the key rate bound is the less stringent) consists of preparing the maximally entangled state. We can thus set $\theta=\frac{\pi}{4}$ for which $H_{1,1}(B)=1$. To find the best bound it remains to maximize $p_A+p_B$ with the guidance of Table~\ref{tab:bounds} and the functions $ p_{K,N,d}'(\eta,v),  p_{N}''(\eta,v)$ or $  p'''(\eta,v)$ defined in Eqs.~(\ref{eq: many d} ,\ref{eq: p N qutits}, \ref{eq: all qubit PVMS p}).  To do so we consider three situations.\\

\noindent\textit {(i)} $(N_A,N_B,K_B) = (3,2,1)$ is the setting of the DIQKD CHSH protocol introduced in \cite{acin2006efficient,Acin2007}. Here the optimal bound reads
\begin{equation}
     r_\infty \leq  \, \eta v\big(2-p_{3}''(\eta,v)-p_{2}''(\eta,v)\big) -\eta^2v^2 -H_{\eta,v}(B|A). \label{eq:DIQKDmax B1}
\end{equation}
\\

\noindent \textit{(ii)} $(N_A,N_B,K_B) = (\infty,\infty,1)$ corresponds to the setting where only one measurement is used for key generation, but the number of test measurements is not restricted. Here the bounds 
\begin{equation}
     r_\infty \leq  \, \eta v\big(2-p_{1,\infty,2}'(\eta,v)-p'''(\eta,v)\big) -\eta^2v^2 -H_{\eta,v}(B|A), \label{eq:DIQKDmax B2}
\end{equation}
are to be used depending on the values of $\eta$ and $v$.
\\

\noindent\textit{(iii)} $(N_A,N_B,K_B) = (\infty,\infty,\infty)$ is the scenario with no restriction on the number of measurements. This is a meaningful scenario as for the maximally entangled state Alice and Bob can have perfect correlation for any number of measurements. Here the bound 
\begin{equation}
  r_\infty \leq  \, \eta v\big(2-2\,p'''(\eta,v)\big) -\eta^2v^2 -H_{\eta,v}(B|A) \label{eq:DIQKDmax B3}
\end{equation}
is optimal.\\

In Fig.~\ref{fig:qubit max} we plot the lines in the $(\eta,v)$ plane below which the bounds of Eqs.~(\ref{eq:DIQKDmax B1}-\ref{eq:DIQKDmax B3})
guarantee a zero key rate. The threshold values of $\eta$ and $v$ (for $v,\eta=1$ respectively) are reported in Table.~\ref{tab:DIQKD qubits}
\begin{figure}[t]
    \centering
    \resizebox{.95\linewidth}{!}{\begin{tikzpicture}
    \begin{axis}[
        myplotset,
        xlabel = {$v$},
        ylabel = {$\eta$},
        xtick={0.8706,0.8979,1},
        extra x ticks={0.8876}, 
        extra x tick style={tick label style={yshift=-10pt}},
        ytick={0.853,0.8828,1},
        extra y ticks={0.8744}, 
        extra y tick style={tick label style={yshift=-2pt}},
        legend pos =  south west,   
        legend cell align = left
      ]
    
      \addplot[green!70!black,thick,dashdotted] table {321nobin.txt};
      \addlegendentry{$(N_A,N_B,K_B)=(3,2,1)$};
      \addplot[orange,thick] table {infinf1nobin.txt};
      \addlegendentry{$(N_A,N_B,K_B)=(\infty,\infty,1)$};
      \addplot[red,thick,dashed] table {infinfinfnobin.txt};
      \addlegendentry{$(N_A,N_B,K_B)=(\infty,\infty,\infty)$};
    \end{axis}
    \end{tikzpicture} } 
    \caption{For the parameters $(\eta,v)$ below the lines the key rate $r_\infty$ is guaranteed to be zero. The lines show when the right-hand sides of the corresponding equation fall to zero. (From top to bottom) for Eq.~\eqref{eq:DIQKDmax B1} : \textit{(i)} $(N_A,N_B,K_B)=(3,2,1)$ dash-dotted line 
    , Eq.~\eqref{eq:DIQKDmax B2} : \textit{(ii)}  $(N_A,N_B,K_B)=(\infty,\infty,1)$ full line, and  Eq.~\eqref{eq:DIQKDmax B3} : \textit{(iii)} $(N_A,N_B,K_B)=(\infty,\infty,\infty)$  dashed line. This applies to DIQKD protocols based on two-qubit states, dichotomic measurements (ideally),  no-binning of the no-click outcome, and one-way public communication
    from Bob to Alice.}
    \label{fig:qubit max}
\end{figure}
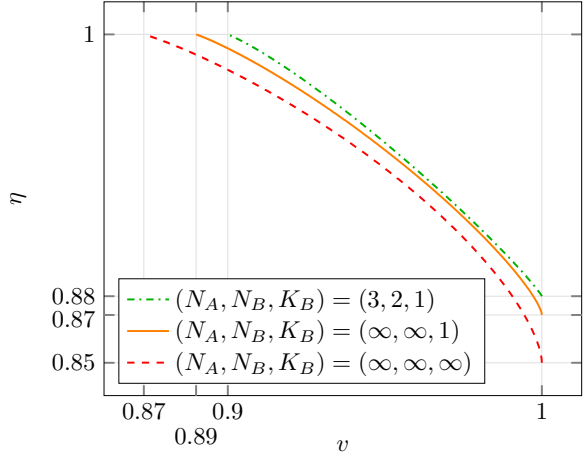
\begin{table}[t]
\centering
\begin{tabular}{|c | c |c |}
\hline
Setting& Threshold $\eta$  &  Threshold $v$ \\
$(N_A,N_B,K_B)$   & (at $ v=1 $) &  (at $ \eta =1 $) \\
\hline \hline
\rowcolor{lime}
(3,2,1) & 88.3 \% & 89.8 \% \\
\rowcolor{yellow}
\hline
($\infty$,$\infty$,1) &  87.4 \%&  88.8 \% \\
\rowcolor{pink}
\hline
($\infty$,$\infty$,$\infty$) & 85.3 \%  & 87.1 \% \\
\hline
\end{tabular}
\caption{Threshold efficiency $\eta$ and visibility $v$ for varions number of measurements $(N_A,N_B, K_B)$.  This applies to DIQKD protocols with the maximally entangled two-qubit state, dichotomic measurements (ideally), no-binning of the no-click outcome, and one-way public communication
from Bob to Alice.}\label{tab:DIQKD qubits}
\end{table}

\subsubsection{Application to qubit protocols with binning}\label{sec:binning}

A common strategy in the DIQKD protocols is to bin non-click outcomes of Bob with the most likely among the other two outcomes. After such binning the key rate bound in the Eq.~\eqref{eq: DIQKD bound partial} remains valid upon replacing the conditional entropy term with
\begin{multline}
    H_{\eta,v}^\theta(B|A) =  (1-\eta)\, h\left(\frac{\eta}{2}(1-v\cos(2\theta))\right)
    \\
    +\frac{\eta}{2}\left(1-v\cos(2\theta)\right)\, h\left(\eta\left(1-\frac{1-v^2}{2(1-v\cos(2\theta))}\right)\right)\\
    +\frac{\eta}{2}\left(1+v\cos(2\theta)\right)\, h\left(\eta\frac{1-v^2}{2(1+v\cos(2\theta))}\right).
\end{multline}

\begin{figure}[t]
    \centering
    \resizebox{.95\linewidth}{!}{\begin{tikzpicture}
    \begin{axis}[
        myplotset,
        xlabel = {$v$},
        ylabel = {$\eta$},
        xtick={0.8706,0.8979,1},
        extra x ticks={0.8876}, 
        extra x tick style={tick label style={yshift=-10pt}},
        ytick={0.6825,0.7422,1},
        extra y ticks={0.7268}, 
        extra y tick style={tick label style={yshift=-2pt}},
        legend pos =  south west,   
        legend cell align = left
      ]
    
      \addplot[green!70!black,thick,dashdotted] table {321bin.txt};
      \addlegendentry{$(N_A,N_B,K_B)=(3,2,1)$};
      \addplot[orange,thick] table {infinf1bin.txt};
      \addlegendentry{$(N_A,N_B,K_B)=(\infty,\infty,1)$};
      \addplot[red,thick,dashed] table {infinfinfbin.txt};
      \addlegendentry{$(N_A,N_B,K_B)=(\infty,\infty,\infty)$};
    \end{axis}
    \end{tikzpicture} } 
    \caption{For the parameters $(\eta,v)$ below the lines the key rate $r_\infty$ was found to be zero after a maximization over $\theta$. The lines show when the right-hand sides of the corresponding equation fall to zero. (From top to bottom) for Eq.~\eqref{eq:DIQKDmax B4} : \textit{(i')} $(N_A,N_B,K_B)=(3,2,1)$ dash-dotted line 
    , and Eq.~\eqref{eq:DIQKDmax B5} : \textit{(ii')}  $(N_A,N_B,K_B)=(\infty,\infty,1)$ full line.  For the curve, Eq.~\eqref{eq:DIQKDmax B6} : \textit{(iii')} $(N_A,N_B,K_B)=(\infty,\infty,\infty)$ dashed line, no maximization over $\theta$ is needed as it only makes sense to consider the maximally entangled state $\theta=\frac{\pi}{4}$. This applies to DIQKD protocols based on two-qubit states, dichotomic measurements (ideally),  binning of the no-click outcomes, and one-way public communication
    from Bob to Alice.}
    \label{fig:bin}
\end{figure}
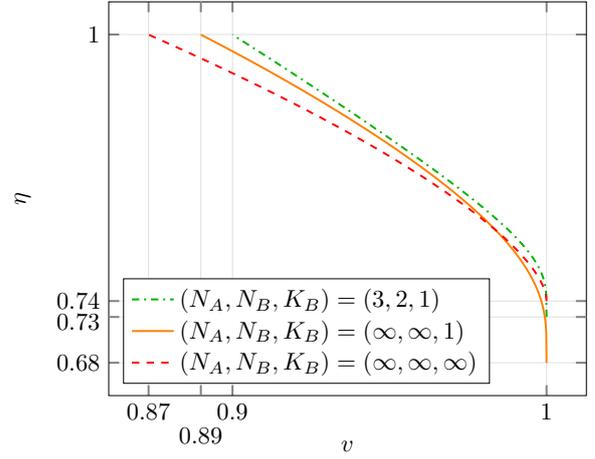

\begin{table}[t]
\centering
\begin{tabular}{|c | c |c |}
\hline
Setting& Threshold $\eta$  &  Threshold $v$ \\
$(N_A,N_B,K_B)$   & (at $ v=1 $) &  (at $ \eta =1 $) \\
\hline \hline
\rowcolor{lime}
(3,2,1) & 72.7 \% & 89.8 \% \\
\rowcolor{yellow}
\hline
($\infty$,$\infty$,1) &  68.3 \%&  88.8 \% \\
\rowcolor{pink}
\hline
($\infty$,$\infty$,$\infty$) & 74.2 \%  & 87.1 \% \\
\hline
\end{tabular}
\caption{Threshold efficiency $\eta$ and visibility $v$ for different number of measurements $(N_A,N_B, K_B)$.  This applies to DIQKD protocols with the partially entangled two-qubit state, dichotomic measurements (ideally), binning of the no-click outcomes, and one-way public communication
from Bob to Alice.}\label{tab:DIQKD qubits bin}
\end{table}

Note that in general this reduces Alice's entropy, i.e., $  H_{\eta,v}^\theta(B|A)$ here above is lower or equal to $H_{\eta,v}(B|A)$ in Eq. (\ref{eq:HBA}) since the binning channel applied by Bob is two-to-one. It follows that the key rate bounds that we obtain using binning are less stringent than without. With the new expression for  $H_{\eta,v}^\theta(B|A)$ let us consider the following settings. \\

\noindent \textit{(i')} For $(N_A,N_B,K_B)=(3,2,1)$  the optimal bound reads
\begin{multline}
   r_\infty \leq  h\big(\cos^2(\theta)\big) \big( \eta v(2-p_{3}''(\eta,v)-p_{2}''(\eta,v))-\eta^2 v^2\big) \\
   -H_{\eta,v}^\theta(B|A).
 \label{eq:DIQKDmax B4}  
\end{multline}

\noindent \textit{(ii')} For $(N_A,N_B,K_B)=(\infty,\infty,1)$ the optimal bound reads
\begin{multline}
 r_\infty \leq h\big(\cos^2(\theta)\big) \big( \eta v(2-p_{1,\infty,2}'(\eta,v)-p'''(\eta,v))-\eta^2 v^2\big)\\
  -H_{\eta,v}^\theta(B|A) \label{eq:DIQKDmax B5}    
\end{multline}

\noindent\textit{(iii')} For $(N_A,N_B,K_B) = (\infty,\infty,\infty)$ we only consider the case of the maximally entangled state $\theta=\frac{\pi}{4}$, as it is the only one where Alice and Bob can have perfect correlations in more than one measurement basis. We then get the optimal bound
\begin{equation}
  r_\infty \leq  \eta v\big(2-2\,p'''(\eta,v)\big) -\eta^2v^2
  -H^{\frac{\pi}{4}}_{\eta,v}(B|A) \label{eq:DIQKDmax B6}.
\end{equation}

To derive state-independent upper-bounds one needs to maximize the right-hand side of the above expressions with respect to $\theta$. Here, both quantities $H_{1,1}(B)$ and $H_{\eta,v}^{\theta}(B|A)$ have a nontrivial dependence on the angle $\theta$, and this maximization has to be done for all $\eta$ and $v$. In Fig.~\ref{fig:bin}, we display the thresholds that we found after maximizing the key rate over the angle $\theta$ in a heuristical way. Specifically, we used the Nelder–Mead technique \cite{nelder-mead}. In Table.~\ref{tab:DIQKD qubits bin} we give the corresponding threshold values.

\section{Discussion}\label{sec:discussion}

We have developed a general framework for characterizing the impact of losses in quantum communication setups with untrusted measurement devices. A key element in our work is the concept of a CMU, which should be viewed as a set of effective measurements, characterizing both the imperfections of the channel and the subsequent measurement apparatus. We investigated the joint measurability of this set, establishing a direct connection between this fundamental notion and quantum communication, in particular QKD.

Our work motivates the investigation of the (in)compatibility of sets of POVM under the combined effect of noise and loss, of which we discussed several cases. In particular, we have derived a sufficient condition for the compatibility of sets of $N$ unbiased qubit measurements (Lemma~\ref{Lemma:N qubit}). We have also seen that results on channel extendibility are very useful in this context, as they imply compatibility of CMUs solely based on the noise channel and the number of measurements that compose it. We derived a general result on the extendibility of concatenated noise channels.
Our results can be applied to any protocol or experimental scenario with untrusted measurement devices, including DI and SDI protocols, providing insight into the admissible trade-off between noise and loss.  Obtaining more results in this direction would be desirable.

In the context of QKD, the notion of joint measurability of a CMU leads to necessary criteria for extracting a secure key. The analysis of QKD protocols also led us to introduce the concept of partial joint measurability. We showed that this notion differs from standard joint measurability, in the sense that there exist sets of measurements that are partially jointly measurable but not (fully) jointly measurable. It would be interesting to investigate this concept further, in particular how to cast partial joint measurability as a semi-definite program. We note that this concept as we defined it differs from the notions discussed in previous works, where sets of POVMs that are incompatible can feature subsets of POVMs that are jointly measurable \cite{Heinosaari2008,Liang20111,uola2023}. 
In contrast, our definition is closely related to the notion of no-exclusivity introduced recently for quantum instruments~\cite{Buscemi2023unifyingdifferent}, and to the notion of weak-compatibility introduced earlier in the same context~\cite{DAriano_2022}. In particular, the notion of partial joint-measurability that we introduced for a channel followed by a set of POVMs can be generalized to an analogous notion of partial no-exclusivity for a channel followed by a set of instruments. Lemma~\ref{Lemma:k-ext} then can naturally be extended to this setting.

Finally, as an outlook note that the intuition behind the notion of partial joint measurability can be used to improve the class of attacks on DIQKD considered in \cite{farkas2021bell,lukanowski2022upper}. These attacks are based on decompositions of the correlations observed by Alice and Bob $p(a,b|x,y)= q\, p_L(a,b|x,y)+ (1-q) p_{NL}(a,b|x,y)$  into a mixture of local $p_L(a,b|x,y)$ and non-local $p_{NL}(a,b|x,y)$ correlations. In fact, the locality of correlation $p_L(a,b|x,y)$ is sufficient but not necessary for Eve to predict the key. Since she only needs to predict the outcomes of the key-generating measurements, we introduced here attacks where $p(a,b|x,y)$ is decomposed into a mixture of non-local quantum correlations and partly-\JM{} $p_{JM}(a,b|x,y)$ correlations. The partly-\JM{} correlations can be associated with a quantum-classical state, where only the classical register is used to simulate the outputs of the key-generating measurements, whereas for the other measurements the quantum register is used, hence can display non-locality. 

\section*{Acknowledgments}
We thank Nicolas Cerf for fruitful discussions.
P.S., M.I., and N.B. acknowledge funding from the Swiss National Science Foundation (project 192244) and by the Swiss Secretariat for Education, Research and Innovation (SERI) under contract number UeM019-3.  M.M. and S.P. acknowledge funding from the VERIqTAS project within the QuantERA II Programme that has received funding from the European Union’s Horizon 2020 research and innovation programme under Grant Agreement No 101017733 and the F.R.S-FNRS Pint-Multi programme under Grant Agreement R.8014.21, from the European Union’s Horizon Europe research and innovation programme under the project ``Quantum Security Networks Partnership'' (QSNP, grant agreement No 101114043),  from the F.R.S-FNRS through the PDR T.0171.22, from the FWO and F.R.S.-FNRS under the Excellence of Science (EOS) programme project 40007526, from the FWO through the BeQuNet SBO project S008323N, from the Belgian Federal Science Policy through the contract RT/22/BE-QCI and the EU ``BE-QCI'' programme.
S.P. is a Research Director of the Fonds de la Recherche Scientifique – FNRS.

Funded by the European Union. Views and opinions expressed are however those of the authors only and do not necessarily reflect those of the European Union. The European Union cannot be held responsible for them.

\appendix

\section{Proof of the bound (\ref{eq:lb-guassian}) in the case of homodyne measurements}\label{ap:har}
Homodyne measurements corresponding to the observables
\begin{equation}
    \hat{X}(\theta)=\cos\theta \hat{X}+\sin\theta \hat{P},
\end{equation}
specified by the angle $\theta$ and where $\hat{X}$ and $\hat{P}$ are the position and momentum quadrature operators
\begin{equation}
    \hat{X}=\frac{\hat{a}+\hat{a}^\dagger}{\sqrt{2}},\quad\text{and}\quad \hat{P}=\frac{\hat{a}-\hat{a}^\dagger}{i\sqrt{2}}.
\end{equation}
We now show that when the measurements $M_y$ in a thermal-noise CMU correspond to the above observables (with the angle $\theta(y)$ depending on $y$), the resulting effective POVMs are \JM{} independently of the total number $N$ of such measurements. 

In a thermal-noise model, the annihilation operators follow the input-output relation
    \begin{equation}
        \hat{a}_\text{out}=\sqrt{\eta}\hat{a}_\text{in}+\sqrt{1-\eta}a_{\tau_\nu},
    \end{equation}
where $a_{\tau_\nu}$ is the annihilation operator acting on the thermal state that is combined with the input state in our noise model. We can use the latter relation to express the effective CMU quadratures that are implemented by the CMU as
\begin{equation}\label{eq:eff-quad}
    \hat{X}_\text{out}(\theta)=\sqrt{\eta}\hat{X}_\text{in}(\theta)+\sqrt{1-\eta}\hat{X}_{\tau_\nu}(\theta).
\end{equation}    
    
We now introduce a single parent POVM that simulates the results of the measurements of such quadratures for any value of $\theta$. This parent POVM consists in performing first a heterodyne measurement on the input state, i.e, we split the initial mode into two modes using a $50/50$ beam splitter and measuring the $X$ quadrature of one output mode and the $P$ quadrature of the other output mode. The obtained classical outcomes, $x$ and $p$, are then sent to the measurement device, where given the input angle $\theta$, it outputs 
\begin{equation}
    x_\theta=G(\cos\theta x+\sin\theta p)+\Delta,\label{eq:output}
\end{equation}
where $G$ is a constant real number, while $\Delta$ is a random real number with centered Gaussian distribution of variance $\sigma^2$. 

In this equivalent CMU, the input-output relations for the annihilation operators after the $50/50$ beam splitter are
\begin{equation}
    \hat{a}_1=\frac{\hat{a}_\text{in}+\hat{v}}{\sqrt{2}},\quad\text{and}\quad \hat{a}_2=\frac{\hat{a}_\text{in}-\hat{v}}{\sqrt{2}},
\end{equation}
where $\hat{v}$ is the annihilation operator of the vacuum state entering the dark port of our beam splitter and $\hat{a}_1$ and $\hat{a}_2$ are the annihilation operators of the two output modes. The quadrature $\hat{X}$ measured in the first output mode and the quadrature $\hat{P}$ measured in the second output mode can then be written as
\begin{equation}
    \hat{X}_1=\frac{\hat{X}_\text{in}+\hat{X}_v}{\sqrt{2}},\quad\text{and}\quad \hat{P}_2=\frac{\hat{P}_\text{in}-\hat{P}_v}{\sqrt{2}}.
\end{equation}

The probability distribution of the final output $x_\theta$ of the CMU after the post-processing \eqref{eq:output} is then fully characterized by the moments $\langle x_\theta^n\rangle$. 
This strategy simulates the original CMU if these moments coincide with the moments $ \langle \hat{X}(\theta)^n \rangle_\text{th}$ of the distribution of the effective observables (\ref{eq:eff-quad}). We now determine such moments.

We first compute the target moments.
\begin{align}
    &\langle \hat{X}(\theta)^n \rangle_\text{th} = \left\langle \left(\sqrt{\eta}\hat{X}_\text{in}\theta)+\sqrt{1-\eta}\hat{X}_{\tau_\nu}(\theta)\right)^n \right\rangle \\
    &=\sum_{k=0}^n\binom{n}{k}\sqrt{\eta}^{n-k}\langle \hat{X}_\text{in}(\theta)^{n-k}\rangle  \sqrt{1-\eta}^k\langle \hat{X}_{\tau_\nu}(\theta)^k\rangle \\
    &=\sum_{k=0}^{\lfloor n/2 \rfloor}\binom{n}{2k}\sqrt{\eta}^{n-2k}\langle \hat{X}_\text{in}(\theta)^{n-2k}\rangle \nonumber \\
    &\pushright{(1-\eta)^k\langle \hat{X}_{\tau_\nu}(\theta)^{2k}\rangle} \\
    &=\sum_{k=0}^{\lfloor n/2 \rfloor}\binom{n}{2k}\sqrt{\eta}^{n-2k}\langle \hat{X}_\text{in}(\theta)^{n-2k}\rangle \nonumber \\
    &\pushright{(2k-1)!!\left(\frac{1}{2}(1-\eta+\epsilon\eta)\right)^k,} \label{eq:mom-noise}
   \end{align} where we used the fact that the odd moments of the thermal state in Eq.~(\ref{eq:ths}) are zero, while the even ones are 
   \begin{equation}
   (2k-1)!!\left(\frac{1}{2}+\frac{\epsilon\eta}{2(1-\eta)}\right)^k.
\end{equation}
   
We now determine the moments following from the \JM{} strategy.
\begin{align}
\langle x_\theta^n\rangle &=\langle \left(G(\cos\theta \hat{X}_1+\sin\theta \hat{P}_2)+\Delta\right)^n \rangle \\
&=\left\langle \left(\frac{G}{\sqrt{2}}( \hat{X}_\text{in}(\theta)+ \hat{X}_v(-\theta))+\Delta\right)^n \right\rangle \\
&=\sum_{k=0}^n \binom{n}{k}\left(\frac{G}{\sqrt{2}}\right)^{n-k}\langle X_\text{in} (\theta)^{n-k}\rangle \nonumber \\ 
&\pushright{\left\langle\left(\frac{G}{\sqrt{2}} \hat{X}_v(-\theta)+\Delta\right)^k \right\rangle} \\
&=\sum_{k=0}^{\lfloor n/2\rfloor} \binom{n}{2k}\left(\frac{G}{\sqrt{2}}\right)^{n-2k}\langle X_\text{in} (\theta)^{n-2k}\rangle \nonumber \\ 
&\pushright{\left\langle\left(\frac{G}{\sqrt{2}} \hat{X}_v(-\theta)+\Delta\right)^{2k} \right\rangle} ,
\end{align} where in the last line we used the fact that the sum of two random variables with Gaussian central distribution has zero odd moments. Now, we have
\begin{align}
&\left\langle \left(\frac{G}{\sqrt{2}} \hat{X}_v(-\theta)+\Delta\right)^{2k} \right\rangle = \nonumber \\
&=\sum_{l=0}^{2k}\binom{2k}{l}\left(\frac{G}{\sqrt{2}}\right)^{2k-l}\langle \hat{X}_v(-\theta)^{2k-l}\rangle\langle\Delta^l\rangle \\
&=\sum_{l=0}^{k}\binom{2k}{2l}\left(\frac{G}{\sqrt{2}}\right)^{2(k-l)} \left(\frac{1}{2}\right)^{k-l}\sigma^{2l}\\
&\pushright{(2(k-l)-1)!!(2l-1)!!},
\end{align} where we used again that the odd moments are zero and that the even ones are
\begin{equation}
\langle \hat{X}_v(-\theta)^{2(k-l)}\rangle=\left(\frac{1}{2}\right)^{k-l}(2(k-l)-1)!!,
\end{equation}
\begin{equation}
\langle\Delta^l\rangle=\sigma^{2l} (2l-1)!!.
\end{equation} Using $(2n)!=(2n)!!(2n-1)!!$ and $(2n)!!=2^n n!$, we get
\begin{equation}
\binom{2k}{2l}(2(k-l)-1)!!(2l-1)!!=\binom{k}{l}(2k-1)!!,
\end{equation} and thus
\begin{equation}
\left\langle \left(\frac{G}{\sqrt{2}} \hat{X}_v(-\theta)+\Delta\right)^{2k} \right\rangle = (2k-1)!!\left(\frac{G^2}{4}+\sigma^2\right)^k.
\end{equation} We finally obtain that
\begin{multline}
\langle x_\theta^n\rangle =\sum_{k=0}^{\lfloor n/2\rfloor} \binom{n}{2k}\left(\frac{G}{\sqrt{2}}\right)^{n-2k}\langle X_\text{in} (\theta)^{n-2k}\rangle  \\ 
(2k-1)!!\left(\frac{G^2}{4}+\sigma^2\right)^k.
\end{multline} 
   
It thus follows that $\langle \hat{X}(\theta)^n \rangle_\text{th} = \langle x_\theta^n\rangle$ for all $n$ if 
\begin{equation}
    G=\sqrt{2\eta},\quad\text{and}\quad\sigma^2=\frac{1}{2}(1-2\eta+\epsilon\eta).
\end{equation}
Since $\sigma^2\geq 0$, we conclude that this strategy works as long as (\ref{eq:lb-guassian}) holds.

\section{Proof of Lemma~\ref{Lemma:gauss-decomp}}
\label{app: gaussian decomp}
In this section, we prove Lemma~\ref{Lemma:gauss-decomp} in the main text. But first, we recall some elements of the formalism of Gaussian channels, states, and their phase-space representation.

The Wigner function  of a single-mode Gaussian state with the covariance matrix $V$ and displacement vector $\bm \mu$ is given by
\begin{equation}
 W(\alpha)=   \frac{1}{\sqrt{\text{det}(\pi V/2)}}\exp\left(-2(\bm \alpha -\bm \mu)^T V^{-1}(\bm \alpha - \bm \mu)\right).
\end{equation}
with $\bm \alpha = \binom{\text{Re} \, \alpha}{\text{Im} \, \alpha}$. By definition, the pair $V,\bm \mu$ thus uniquely specifies a Gaussian state. In particular, for  a coherent state $\ket{\gamma}$ we have $\bm \mu = \bm \gamma=\binom{\text{Re} \, \gamma}{\text{Im} \, \gamma}$ and $V= \mathbbm{1}$.

A single mode Gaussian channel $\Phi$ maps Gaussian states to Gaussian states. It can be represented by matrices $X,Y$ and a vector $\bm \delta$, which transform the Wigner function as follows~\cite{serafini2017quantum,lami2019}
\begin{align}
\bm \mu   \mapsto X \bm \mu + \bm \delta\\
V \mapsto X V X^T + Y
\end{align}
Therefore, if a Gaussian channel acts on a coherent state, the resulting Wigner function of state $\Phi(\ketbra{\gamma})$ is Gaussian and characterized by the pair $V'=  XX^T+Y$ and $\bm \mu' = X\bm \gamma+\bm \delta$. 

Finally, let us compute the Q-function corresponding to this state. It is well known that it is the convolution of the Wigner function with a Gaussian
\begin{equation}
    Q(\alpha) = \frac{2}{\pi} \int \dd^2 \beta\,  W(\beta)e^{-2|\alpha-\beta|^2}.
\end{equation}
Hence the Q-function is also a Gaussian with the displacement vector $\bm \mu_Q=\bm \mu'$ and covariance matrix $V_Q = V' + \mathbbm{1}$ (note that a convolution of two distributions describes a random variable given by the sum of the random variables described by the distributions). Thus if a Gaussian channel $(X,Y,\delta)$ acts on a coherent state $\ket{\gamma}$, the Q-function of the finals state is Gaussian with 
\begin{align}\label{eq: mu gauss}
    \bm \mu_Q &= X \bm \gamma +\bm \delta \\
    \label{eq: V gauss}
     V_Q &= X X^T +Y +\mathbbm{1} .
\end{align}
Now let us come back to the Lemma that we want to prove.

\setcounter{lem}{7}
\begin{lem}
    A thermal-noise channel $\Phi_{\eta,\epsilon}$ that is not $N$-extendable cannot be decomposed as 
\begin{equation}\label{eq: gauss dec ap}
    \Phi_{\eta,\epsilon} =p\, \Phi_G + (1-p) \Phi' \qquad (0<p<1)
\end{equation}
where $\Phi_G$ is a $N$-extendable Gaussian channel and $\Phi'$ is an arbitrary channel.
\end{lem}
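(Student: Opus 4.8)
The plan is to argue by contradiction: assume a decomposition \eqref{eq: gauss dec ap} exists with $\Phi_G$ an $N$-extendable Gaussian channel and $0<p<1$, and then show that $\Phi_G$ would have to be so ``quiet'' (low-noise) that $\Phi_{\eta,\epsilon}$ itself would inherit $N$-extendability from it, contradicting the hypothesis. The first step is to translate the operator-level decomposition into a statement about Q-functions evaluated on coherent-state inputs. Since a channel acts linearly on states and the Q-function is a linear, nonnegative functional of the state, $\Phi_{\eta,\epsilon} = p\,\Phi_G + (1-p)\,\Phi'$ implies, for every coherent state $\ket{\gamma}$,
\begin{equation}
Q_{\Phi'(\ketbra{\gamma})}(\alpha) = \frac{Q_{\Phi_{\eta,\epsilon}(\ketbra{\gamma})}(\alpha) - p\,Q_{\Phi_G(\ketbra{\gamma})}(\alpha)}{1-p}\,.
\end{equation}
Because $\Phi'$ must be a legitimate channel, its output on $\ket{\gamma}$ is a genuine state, so its Q-function is pointwise nonnegative: the required contradiction is exactly the failure of $Q_{\Phi_{\eta,\epsilon}(\ketbra{\gamma})} \geq p\,Q_{\Phi_G(\ketbra{\gamma})}$ for some $\alpha,\gamma$.

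Next I would exploit that both $\Phi_{\eta,\epsilon}$ and $\Phi_G$ are Gaussian, so both output Q-functions are Gaussians fully specified by \eqref{eq: mu gauss} and \eqref{eq: V gauss}. For the thermal-noise channel one has $\bm\mu_Q = \sqrt{\eta}\,\bm\gamma$ and $V_Q = (X_{\text{Th}}X_{\text{Th}}^T + Y_{\text{Th}} + \mathbbm{1}) = (2+\epsilon\eta)\mathbbm{1}$, while $\Phi_G$, described by $(X_G,Y_G,\bm\delta_G)$, gives $\bm\mu_Q^G = X_G\bm\gamma + \bm\delta_G$ and $V_Q^G = X_G X_G^T + Y_G + \mathbbm{1}$. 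The heart of the argument is the elementary but slightly delicate fact that a difference of two Gaussians $G_1 - p\,G_2$ can be everywhere nonnegative only under stringent conditions. By evaluating the domination inequality at the peak $\alpha = X_G\bm\gamma + \bm\delta_G$ of the subtracted Gaussian and letting $\|\bm\gamma\|\to\infty$ along the range of $X_G - \sqrt{\eta}\,\mathbbm{1}$, the positive Gaussian is exponentially suppressed while $p\,G_2$ stays at a fixed positive peak value; this forces the centers to track, i.e.\ $X_G = \sqrt{\eta}\,\mathbbm{1}$ (the displacement $\bm\delta_G$ being absorbable by a later translation and hence irrelevant). With matched scalings, comparing the Gaussian exponents shows the inequality can hold in the tails only if the positive Gaussian decays no faster, i.e.\ $V_Q^{\text{Th}} \succeq V_Q^G$, which reduces to $Y_G \preceq Y_{\text{Th}}$.

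Finally I would close the loop using the concatenation result. Having established $X_G = \sqrt{\eta}\,\mathbbm{1}$ and $Y_{\text{Th}} - Y_G \succeq 0$, I would introduce the additive-classical-noise Gaussian channel $\mathcal{N}$ with scaling $\mathbbm{1}$, noise matrix $Y_{\text{Th}} - Y_G \succeq 0$, and a displacement chosen to cancel $\bm\delta_G$; this is a valid Gaussian channel. A direct application of the composition rules for $(X,Y)$ gives $\mathcal{N}\circ\Phi_G = \Phi_{\eta,\epsilon}$. Since $\Phi_G$ is $N$-extendable by assumption, Lemma~\ref{Lemma:conc general} implies $\mathcal{N}\circ\Phi_G = \Phi_{\eta,\epsilon}$ is $N$-extendable, contradicting the hypothesis that $\Phi_{\eta,\epsilon}$ is not. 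I expect the main obstacle to be making the Gaussian-domination step of the previous paragraph fully rigorous — in particular carefully justifying that positivity of $Q_{\Phi'}$ for \emph{all} coherent-state inputs forces both the mean-matching ($X_G=\sqrt{\eta}\,\mathbbm{1}$) and the covariance ordering ($V_Q^G \preceq V_Q^{\text{Th}}$), rather than merely a scalar trace inequality; everything downstream is then routine bookkeeping with the $(X,Y)$ formalism and Lemma~\ref{Lemma:conc general}.
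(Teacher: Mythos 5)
Your proposal is correct, and its first half coincides with the paper's: the paper also passes to Q-functions of coherent-state inputs, demands pointwise domination $Q^{\eta,\epsilon}_\gamma \geq p\, Q^G_\gamma$, and turns this into a matrix-positivity statement (a $4\times 4$ bilinear form in $(\bm\alpha,\bm\gamma)$) whose analysis yields exactly your two conditions, $X_G=\sqrt{\eta}\,\mathbbm{1}$ and $Y_G \preceq (1-\eta+\eta\epsilon)\mathbbm{1}$ (this is the paper's Lemma~\ref{Lemma:append}; your peak-tracking and tail-comparison arguments are a sound, if sketchier, way to reach the same conclusion, and the degenerate directions you flag are precisely what the paper's block-matrix computation handles). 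Where you genuinely diverge is in how the contradiction is closed. The paper invokes the necessary-and-sufficient criterion of \cite{lami2019} for $N$-extendibility of single-mode Gaussian channels, $\sqrt{\det Y}\geq 1-\tfrac{1}{N}+\left|\det X-\tfrac{1}{N}\right|$, together with the characterization $\eta(1-\epsilon/2)>\tfrac{1}{N}$ of non-extendibility of $\Phi_{\eta,\epsilon}$, and checks numerically that the two are incompatible with $\det X=\eta$, $\sqrt{\det Y}\leq 1-\eta+\eta\epsilon$. You instead construct the additive-classical-noise channel $\mathcal{N}$ with scaling $\mathbbm{1}$, noise $Y_{\mathrm{Th}}-Y_G\succeq 0$ (a valid channel precisely because of the matrix ordering you derived) and displacement $-\bm\delta_G$, verify via the composition rules that $\mathcal{N}\circ\Phi_G=\Phi_{\eta,\epsilon}$, and conclude from Lemma~\ref{Lemma:conc general} that $\Phi_{\eta,\epsilon}$ would inherit $N$-extendibility from $\Phi_G$ — a direct contradiction. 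Your route is more self-contained and arguably more conceptual: it needs no quantitative extendibility criterion at all, only the general fact that post-composition preserves extendibility, and it would survive generalization (e.g.\ to multimode Gaussian channels) where the determinant-based single-mode criterion does not apply. The paper's route, by contrast, gets the contradiction with no channel construction once the criterion of \cite{lami2019} is accepted. Note also that your construction makes transparent why the displacement $\bm\delta_G$ is harmless, whereas the paper simply discards it in the asymptotic analysis.
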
 
\begin{proof}
To prove this result we use the following Lemma, which is demonstrated below.

\begin{lem}\label{Lemma:append}
The thermal-noise channel $\Phi_{\eta,\epsilon}$ can admit a decomposition
\begin{equation}\label{eq: gauss dec2}
    \Phi_{\eta,\epsilon} =p\, \Phi_G + (1-p) \Phi'
\end{equation}
with a Gaussian channel $\Phi_G$ characterized by the matrices $X$ and $Y$ (see e.g.\cite{serafini2017quantum}), a positive trace preserving map $\Phi'$, and
some $p>0$, only if 
\begin{align}
    X=\sqrt{\eta}\mathbbm{1} \qquad Y \leq (1-\eta+\eta \varepsilon)\mathbbm{1}.
\end{align}
\end{lem}

Let us now take a thermal channel $\Phi_{\eta,\epsilon}$ which is not $N$-extendable, that is with 
\begin{equation}\label{eq: N-ext gauss}
\eta (1-\epsilon/2)> \frac{1}{N}
\end{equation}
accordingly to \cite{lami2019} (note that on the level of the channel, this is a necessary and sufficient condition). Consider any Gaussian channel $\Phi_G$ appearing in the decomposition of Eq.~\eqref{eq: gauss dec2}. On the one hand we know that it is $N$-extendable if and only if~\cite{lami2019}
\begin{equation}\label{eq: ext cond}
\sqrt{\det Y} \geq 1-\frac{1}{N} +\left|\det X -\frac{1}{N}\right|.
\end{equation}
On the other, Lemma 1 implies that $\sqrt{\det Y} \leq (1-\eta+\eta \epsilon)$ and $\det X =\eta$ must hold. By using these identities together with the bound in Eq.~\eqref{eq: N-ext gauss}, it is easy to see that the condition of Eq.~\eqref{eq: ext cond} can never be satisfied. Hence, the decomposition in Eq.~\eqref{eq: gauss dec2} is impossible.
\end{proof}

\begin{proof}[Proof of Lemma~\ref{Lemma:append}]
First note that the case $\eta=0$ is trivial, we thus assume $\eta > 0$.

Consider the map $\tilde \Phi = (1-p)\Phi' = \Phi_{\eta,\epsilon} - p\Phi_G$, which must be at least positive for the decomposition to make physical sense. Let us act with $\tilde\Phi$ on a coherent state $\ket{\gamma}$, the Q-function of final state $\tilde \Phi(\ketbra{\gamma})$ is given by
\begin{equation}
    \tilde Q_\gamma(\alpha) = Q^{\eta,\epsilon}_\gamma(\alpha) - p\,  Q^G_{\gamma}(\alpha),
\end{equation}
where $Q^{\eta,\epsilon}_\gamma$ and $Q^G_{\gamma}$ are the Q-functions associated to the states $\Phi_{\eta,\epsilon}(\ketbra{\gamma})$ and $\Phi_G(\ketbra{\gamma})$. The Q-function of a state is proportional to the probability density of the output of the heterodyne measurement. Therefore, if $\tilde \Phi$ is a positive channel, the function $\tilde Q_\gamma(\alpha)$ must be positive for all $\alpha$ and $\gamma$, i.e. we must have 
\begin{equation}
Q^{\eta,\epsilon}_\gamma(\alpha) \geq p\,  Q^G_{\gamma}(\alpha)
\end{equation}
for some nonzero $p$ and all $\alpha$ and $\gamma$. This is equivalent to requiring that the ratio $\frac{Q^G_{\gamma}(\alpha)}{Q^{\eta,\epsilon}_\gamma(\alpha)}$
is bounded. Since both distributions are Gaussian this is equivalent to the following bound 
\begin{equation}\label{eq: condition}
    (\bm \alpha -\bm \mu_{th})^T V_{th}^{-1} (\bm \alpha -\bm \mu_{th})^T -(\bm \alpha -\bm \mu_G)^T V_{G}^{-1} (\bm \alpha -\bm \mu_G)<\infty,
\end{equation}
where for the thermal channel $\Phi_{\eta,\epsilon}$ we have $\bm \mu_{th} =\sqrt{\eta} \bm \gamma$ and $V_{th}= (2+\epsilon\eta)\mathbbm{1}$, while for a general Gaussian channel $V_G= XX^T+Y+\mathbbm{1}$ and $\bm \mu_G =X \bm \gamma +\bm \delta$ accordingly to Eqs.~(\ref{eq: mu gauss},\ref{eq: V gauss}). This condition is only nontrivial in the limit where $|\alpha|,|\gamma|$ or both go to infinity. In this limit, $\bm \delta$ plays no role, and we can thus ignore it.

Plugging the values of $\bm \mu_{th}$, $V_{th}$ and $\bm \mu_G$ in Eq.~\eqref{eq: condition} we obtain
\begin{equation}
\frac{|\alpha - \sqrt{\eta}\gamma|^2}{2+\eta\epsilon}
- (\bm \alpha -X \bm \gamma)^T (V_G)^{-1} (\bm \alpha -X \bm \gamma)\leq \infty.
\end{equation}
This is a bilinear form in the varibles $\bm \xi =(\text{Re} \alpha,\text{Im} \alpha, \text{Re} \gamma, \text{Im} \gamma)^T$ and can be expressed as
\begin{align}    
\bm \xi^T M \bm \xi &\leq \infty   \\
M &=
\left(\begin{array}{c|c}
\frac{\mathbbm{1}}{2+\eta \epsilon}- V_G^{-1} & -\frac{\sqrt{\eta} \mathbbm{1}}{2+\eta \epsilon}+ X^T V_G^{-1} \\
\hline
-\frac{\sqrt{\eta} \mathbbm{1}}{2+\eta \epsilon}+ V_G^{-1} X & \frac{\eta \mathbbm{1}}{2+\eta \epsilon}+ X^T V_G^{-1} X
\end{array}
\right),
\end{align}
or simply $M\leq 0$. The positivity of this matrix $M$  remains unchanged if we multiply it by the diagonal matrix 
\begin{equation}
\text{diag}(\sqrt{2+\eta\epsilon},\sqrt{2+\eta\epsilon},-\frac{\sqrt{2+\eta\epsilon}}{\sqrt{\eta}}, -\frac{\sqrt{2+\eta\epsilon}}{\sqrt{\eta})})
\end{equation}
from the left and from the right. This allows us to rewrite the condition as 
\begin{equation}
  \left(\begin{array}{c|c}
\mathbbm{1}- W^{-1} & \mathbbm{1}- Z^T W^{-1} \\
\hline
\mathbbm{1}- W^{-1} Z & \mathbbm{1}- Z^T W^{-1} Z
\end{array}
\right)\leq 0  
\end{equation}
or
\begin{equation}\label{eq: D20}    
\mathbbm{1}_4 \leq \left(\begin{array}{c|c}
W^{-1} & Z^T W^{-1} \\
\hline
 W^{-1} Z & Z^T W^{-1} Z
\end{array}
\right)
\end{equation}
for $W^{-1}= (2+\eta \epsilon )V_G^{-1}$ and $Z = X/\sqrt{\eta}$.

We now consider two cases. First, assume that the matrix $X$ (and hence $Z$) is invertible. Multiplying the Eq.~\eqref{eq: D20} with $\text{diag}(\mathbbm{1}|Z^{-1})$ from the right and  $\text{diag}(\mathbbm{1}|(Z^{-1})^T)$ from the left does not change the positivity of a matrix, and allows us to  cast the inequality in the form 
\begin{equation}    
\left(\begin{array}{c|c}
\mathbbm{1} & (Z^{-1})^{T} \\
\hline
 Z^{-1} & (Z^{-1})^{T} Z
\end{array}
\right) \leq \left(\begin{array}{c|c}
W^{-1} & W^{-1} \\
\hline
 W^{-1}  &  W^{-1} 
\end{array}
\right).
\end{equation}
The right hand side is block-diagonal $\binom{1}{1}\binom{1}{1}^T \otimes W^{-1}$. And since the inequality must hold in both blocks, we get
\begin{align}
    (\mathbbm{1}+{Z^{-1}}^T)(\mathbbm{1}+Z^{-1}) &\leq 4 W^{-1}\\
    (\mathbbm{1}-{Z^{-1}}^T)(\mathbbm{1}-Z^{-1})&\leq 0.
\end{align}
The second inequality implies $\mathbbm{1}-Z^{-1}=0$, or $X=\sqrt{\eta}\mathbbm{1}$. This reduces the fist one to $W^{-1}\geq \mathbbm{1}$, or $W\leq \mathbbm{1}$. Recalling that $W= V_G/(2+\eta \epsilon)$ with $V_G = XX^T+Y+1=\eta \mathbbm{1}+Y+1$ we  obtain
\begin{equation}
    \frac{(1+\eta) \mathbbm{1} + Y}{2+\eta \epsilon} \leq \mathbbm{1},
\end{equation}
leading to
\begin{equation}
    Y \leq (1-\eta +\eta \epsilon) \mathbbm{1}.
\end{equation}
This proves the Lemma for an invertible $X$.

Next, consider the remaining case where $X$ is not invertible. There is thus a normalized vector $\bm v$ such that $X\bm v =Z\bm v = 0$. Multiplying the Eq~\eqref{eq: D20} with $(0,0|\bm v)$ from the right and $(0,0|\bm v)^T$ from the left we find
\begin{equation}
1\leq 0.
\end{equation}
$X$ must thus be invertible, which concludes the proof.
\end{proof}

Note that the Lemma is most probably also true with an if and only statement. The if direction could be shown by an explicit decomposition, noting that the thermal-noise channel can be viewed as a loss channel followed by a random displacement with a normally distributed complex amplitude.

\bibliography{references}

\begin{thebibliography}{82}%
\makeatletter
\providecommand \@ifxundefined [1]{%
 \@ifx{#1\undefined}
}%
\providecommand \@ifnum [1]{%
 \ifnum #1\expandafter \@firstoftwo
 \else \expandafter \@secondoftwo
 \fi
}%
\providecommand \@ifx [1]{%
 \ifx #1\expandafter \@firstoftwo
 \else \expandafter \@secondoftwo
 \fi
}%
\providecommand \natexlab [1]{#1}%
\providecommand \enquote  [1]{``#1''}%
\providecommand \bibnamefont  [1]{#1}%
\providecommand \bibfnamefont [1]{#1}%
\providecommand \citenamefont [1]{#1}%
\providecommand \href@noop [0]{\@secondoftwo}%
\providecommand \href [0]{\begingroup \@sanitize@url \@href}%
\providecommand \@href[1]{\@@startlink{#1}\@@href}%
\providecommand \@@href[1]{\endgroup#1\@@endlink}%
\providecommand \@sanitize@url [0]{\catcode `\\12\catcode `\$12\catcode
  `\&12\catcode `\#12\catcode `\^12\catcode `\_12\catcode `\%12\relax}%
\providecommand \@@startlink[1]{}%
\providecommand \@@endlink[0]{}%
\providecommand \url  [0]{\begingroup\@sanitize@url \@url }%
\providecommand \@url [1]{\endgroup\@href {#1}{\urlprefix }}%
\providecommand \urlprefix  [0]{URL }%
\providecommand \Eprint [0]{\href }%
\providecommand \doibase [0]{https://doi.org/}%
\providecommand \selectlanguage [0]{\@gobble}%
\providecommand \bibinfo  [0]{\@secondoftwo}%
\providecommand \bibfield  [0]{\@secondoftwo}%
\providecommand \translation [1]{[#1]}%
\providecommand \BibitemOpen [0]{}%
\providecommand \bibitemStop [0]{}%
\providecommand \bibitemNoStop [0]{.\EOS\space}%
\providecommand \EOS [0]{\spacefactor3000\relax}%
\providecommand \BibitemShut  [1]{\csname bibitem#1\endcsname}%
\let\auto@bib@innerbib\@empty
\bibitem [{\citenamefont {Ac\'{\i}n}\ \emph
  {et~al.}(2007{\natexlab{a}})\citenamefont {Ac\'{\i}n}, \citenamefont
  {Brunner}, \citenamefont {Gisin}, \citenamefont {Massar}, \citenamefont
  {Pironio},\ and\ \citenamefont {Scarani}}]{Acin2007}%
  \BibitemOpen
  \bibfield  {author} {\bibinfo {author} {\bibfnamefont {A.}~\bibnamefont
  {Ac\'{\i}n}}, \bibinfo {author} {\bibfnamefont {N.}~\bibnamefont {Brunner}},
  \bibinfo {author} {\bibfnamefont {N.}~\bibnamefont {Gisin}}, \bibinfo
  {author} {\bibfnamefont {S.}~\bibnamefont {Massar}}, \bibinfo {author}
  {\bibfnamefont {S.}~\bibnamefont {Pironio}},\ and\ \bibinfo {author}
  {\bibfnamefont {V.}~\bibnamefont {Scarani}},\ }\bibfield  {title} {\bibinfo
  {title} {Device-independent security of quantum cryptography against
  collective attacks},\ }\href {https://doi.org/10.1103/PhysRevLett.98.230501}
  {\bibfield  {journal} {\bibinfo  {journal} {\prl}\ }\textbf {\bibinfo
  {volume} {98}},\ \bibinfo {pages} {230501} (\bibinfo {year}
  {2007}{\natexlab{a}})}\BibitemShut {NoStop}%
\bibitem [{\citenamefont {Primaatmaja}\ \emph {et~al.}(2023)\citenamefont
  {Primaatmaja}, \citenamefont {Goh}, \citenamefont {Tan}, \citenamefont
  {Khoo}, \citenamefont {Ghorai},\ and\ \citenamefont {Lim}}]{diqkd_review}%
  \BibitemOpen
  \bibfield  {author} {\bibinfo {author} {\bibfnamefont {I.~W.}\ \bibnamefont
  {Primaatmaja}}, \bibinfo {author} {\bibfnamefont {K.~T.}\ \bibnamefont
  {Goh}}, \bibinfo {author} {\bibfnamefont {E.~Y.-Z.}\ \bibnamefont {Tan}},
  \bibinfo {author} {\bibfnamefont {J.~T.-F.}\ \bibnamefont {Khoo}}, \bibinfo
  {author} {\bibfnamefont {S.}~\bibnamefont {Ghorai}},\ and\ \bibinfo {author}
  {\bibfnamefont {C.~C.-W.}\ \bibnamefont {Lim}},\ }\bibfield  {title}
  {\bibinfo {title} {Security of device-independent quantum key distribution
  protocols: a review},\ }\href {https://doi.org/10.22331/q-2023-03-02-932}
  {\bibfield  {journal} {\bibinfo  {journal} {{Quantum}}\ }\textbf {\bibinfo
  {volume} {7}},\ \bibinfo {pages} {932} (\bibinfo {year} {2023})}\BibitemShut
  {NoStop}%
\bibitem [{\citenamefont {Paw{\l}owski}\ and\ \citenamefont
  {Brunner}(2011)}]{pawlowski2011semi}%
  \BibitemOpen
  \bibfield  {author} {\bibinfo {author} {\bibfnamefont {M.}~\bibnamefont
  {Paw{\l}owski}}\ and\ \bibinfo {author} {\bibfnamefont {N.}~\bibnamefont
  {Brunner}},\ }\bibfield  {title} {\bibinfo {title} {Semi-device-independent
  security of one-way quantum key distribution},\ }\href
  {https://doi.org/https://doi.org/10.1103/PhysRevA.84.010302} {\bibfield
  {journal} {\bibinfo  {journal} {\pra}\ }\textbf {\bibinfo {volume} {84}},\
  \bibinfo {pages} {010302} (\bibinfo {year} {2011})}\BibitemShut {NoStop}%
\bibitem [{\citenamefont {Woodhead}\ and\ \citenamefont
  {Pironio}(2015)}]{woodhead2015secrecy}%
  \BibitemOpen
  \bibfield  {author} {\bibinfo {author} {\bibfnamefont {E.}~\bibnamefont
  {Woodhead}}\ and\ \bibinfo {author} {\bibfnamefont {S.}~\bibnamefont
  {Pironio}},\ }\bibfield  {title} {\bibinfo {title} {Secrecy in
  prepare-and-measure clauser-horne-shimony-holt tests with a qubit bound},\
  }\href {https://doi.org/https://doi.org/10.1103/PhysRevLett.115.150501}
  {\bibfield  {journal} {\bibinfo  {journal} {\prl}\ }\textbf {\bibinfo
  {volume} {115}},\ \bibinfo {pages} {150501} (\bibinfo {year}
  {2015})}\BibitemShut {NoStop}%
\bibitem [{\citenamefont {Bell}(1964)}]{Bell}%
  \BibitemOpen
  \bibfield  {author} {\bibinfo {author} {\bibfnamefont {J.~S.}\ \bibnamefont
  {Bell}},\ }\bibfield  {title} {\bibinfo {title} {On the einstein podolsky
  rosen paradox},\ }\href {https://doi.org/10.1103/PhysicsPhysiqueFizika.1.195}
  {\bibfield  {journal} {\bibinfo  {journal} {Physics Physique Fizika}\
  }\textbf {\bibinfo {volume} {1}},\ \bibinfo {pages} {195} (\bibinfo {year}
  {1964})}\BibitemShut {NoStop}%
\bibitem [{\citenamefont {Brunner}\ \emph {et~al.}(2014)\citenamefont
  {Brunner}, \citenamefont {Cavalcanti}, \citenamefont {Pironio}, \citenamefont
  {Scarani},\ and\ \citenamefont {Wehner}}]{non-locality}%
  \BibitemOpen
  \bibfield  {author} {\bibinfo {author} {\bibfnamefont {N.}~\bibnamefont
  {Brunner}}, \bibinfo {author} {\bibfnamefont {D.}~\bibnamefont {Cavalcanti}},
  \bibinfo {author} {\bibfnamefont {S.}~\bibnamefont {Pironio}}, \bibinfo
  {author} {\bibfnamefont {V.}~\bibnamefont {Scarani}},\ and\ \bibinfo {author}
  {\bibfnamefont {S.}~\bibnamefont {Wehner}},\ }\bibfield  {title} {\bibinfo
  {title} {Bell nonlocality},\ }\href
  {https://doi.org/10.1103/RevModPhys.86.419} {\bibfield  {journal} {\bibinfo
  {journal} {Rev. Mod. Phys.}\ }\textbf {\bibinfo {volume} {86}},\ \bibinfo
  {pages} {419} (\bibinfo {year} {2014})}\BibitemShut {NoStop}%
\bibitem [{\citenamefont {Wiseman}\ \emph {et~al.}(2007)\citenamefont
  {Wiseman}, \citenamefont {Jones},\ and\ \citenamefont
  {Doherty}}]{wiseman2007}%
  \BibitemOpen
  \bibfield  {author} {\bibinfo {author} {\bibfnamefont {H.~M.}\ \bibnamefont
  {Wiseman}}, \bibinfo {author} {\bibfnamefont {S.~J.}\ \bibnamefont {Jones}},\
  and\ \bibinfo {author} {\bibfnamefont {A.~C.}\ \bibnamefont {Doherty}},\
  }\bibfield  {title} {\bibinfo {title} {Steering, entanglement, nonlocality,
  and the einstein-podolsky-rosen paradox},\ }\href
  {https://doi.org/10.1103/PhysRevLett.98.140402} {\bibfield  {journal}
  {\bibinfo  {journal} {\prl}\ }\textbf {\bibinfo {volume} {98}},\ \bibinfo
  {pages} {140402} (\bibinfo {year} {2007})}\BibitemShut {NoStop}%
\bibitem [{\citenamefont {Uola}\ \emph {et~al.}(2020)\citenamefont {Uola},
  \citenamefont {Costa}, \citenamefont {Nguyen},\ and\ \citenamefont
  {G\"uhne}}]{steering}%
  \BibitemOpen
  \bibfield  {author} {\bibinfo {author} {\bibfnamefont {R.}~\bibnamefont
  {Uola}}, \bibinfo {author} {\bibfnamefont {A.~C.~S.}\ \bibnamefont {Costa}},
  \bibinfo {author} {\bibfnamefont {H.~C.}\ \bibnamefont {Nguyen}},\ and\
  \bibinfo {author} {\bibfnamefont {O.}~\bibnamefont {G\"uhne}},\ }\bibfield
  {title} {\bibinfo {title} {Quantum steering},\ }\href
  {https://doi.org/10.1103/RevModPhys.92.015001} {\bibfield  {journal}
  {\bibinfo  {journal} {Rev. Mod. Phys.}\ }\textbf {\bibinfo {volume} {92}},\
  \bibinfo {pages} {015001} (\bibinfo {year} {2020})}\BibitemShut {NoStop}%
\bibitem [{\citenamefont {Lydersen}\ \emph {et~al.}(2010)\citenamefont
  {Lydersen}, \citenamefont {Wiechers}, \citenamefont {Wittmann}, \citenamefont
  {Elser}, \citenamefont {Skaar},\ and\ \citenamefont
  {Makarov}}]{lydersen2010hacking}%
  \BibitemOpen
  \bibfield  {author} {\bibinfo {author} {\bibfnamefont {L.}~\bibnamefont
  {Lydersen}}, \bibinfo {author} {\bibfnamefont {C.}~\bibnamefont {Wiechers}},
  \bibinfo {author} {\bibfnamefont {C.}~\bibnamefont {Wittmann}}, \bibinfo
  {author} {\bibfnamefont {D.}~\bibnamefont {Elser}}, \bibinfo {author}
  {\bibfnamefont {J.}~\bibnamefont {Skaar}},\ and\ \bibinfo {author}
  {\bibfnamefont {V.}~\bibnamefont {Makarov}},\ }\bibfield  {title} {\bibinfo
  {title} {Hacking commercial quantum cryptography systems by tailored bright
  illumination},\ }\href
  {https://doi.org/https://doi.org/10.1038/nphoton.2010.214} {\bibfield
  {journal} {\bibinfo  {journal} {Nature Photonics}\ }\textbf {\bibinfo
  {volume} {4}},\ \bibinfo {pages} {686} (\bibinfo {year} {2010})}\BibitemShut
  {NoStop}%
\bibitem [{\citenamefont {Gerhardt}\ \emph {et~al.}(2011)\citenamefont
  {Gerhardt}, \citenamefont {Liu}, \citenamefont {Lamas-Linares}, \citenamefont
  {Skaar}, \citenamefont {Kurtsiefer},\ and\ \citenamefont
  {Makarov}}]{gerhardt2011full}%
  \BibitemOpen
  \bibfield  {author} {\bibinfo {author} {\bibfnamefont {I.}~\bibnamefont
  {Gerhardt}}, \bibinfo {author} {\bibfnamefont {Q.}~\bibnamefont {Liu}},
  \bibinfo {author} {\bibfnamefont {A.}~\bibnamefont {Lamas-Linares}}, \bibinfo
  {author} {\bibfnamefont {J.}~\bibnamefont {Skaar}}, \bibinfo {author}
  {\bibfnamefont {C.}~\bibnamefont {Kurtsiefer}},\ and\ \bibinfo {author}
  {\bibfnamefont {V.}~\bibnamefont {Makarov}},\ }\bibfield  {title} {\bibinfo
  {title} {Full-field implementation of a perfect eavesdropper on a quantum
  cryptography system},\ }\href
  {https://doi.org/https://doi.org/10.1038/ncomms1348} {\bibfield  {journal}
  {\bibinfo  {journal} {Nature Communications}\ }\textbf {\bibinfo {volume}
  {2}},\ \bibinfo {pages} {1} (\bibinfo {year} {2011})}\BibitemShut {NoStop}%
\bibitem [{\citenamefont {Jain}\ \emph {et~al.}(2011)\citenamefont {Jain},
  \citenamefont {Wittmann}, \citenamefont {Lydersen}, \citenamefont {Wiechers},
  \citenamefont {Elser}, \citenamefont {Marquardt}, \citenamefont {Makarov},\
  and\ \citenamefont {Leuchs}}]{jain2011device}%
  \BibitemOpen
  \bibfield  {author} {\bibinfo {author} {\bibfnamefont {N.}~\bibnamefont
  {Jain}}, \bibinfo {author} {\bibfnamefont {C.}~\bibnamefont {Wittmann}},
  \bibinfo {author} {\bibfnamefont {L.}~\bibnamefont {Lydersen}}, \bibinfo
  {author} {\bibfnamefont {C.}~\bibnamefont {Wiechers}}, \bibinfo {author}
  {\bibfnamefont {D.}~\bibnamefont {Elser}}, \bibinfo {author} {\bibfnamefont
  {C.}~\bibnamefont {Marquardt}}, \bibinfo {author} {\bibfnamefont
  {V.}~\bibnamefont {Makarov}},\ and\ \bibinfo {author} {\bibfnamefont
  {G.}~\bibnamefont {Leuchs}},\ }\bibfield  {title} {\bibinfo {title} {Device
  calibration impacts security of quantum key distribution},\ }\href
  {https://doi.org/https://doi.org/10.1103/PhysRevLett.107.110501} {\bibfield
  {journal} {\bibinfo  {journal} {\prl}\ }\textbf {\bibinfo {volume} {107}},\
  \bibinfo {pages} {110501} (\bibinfo {year} {2011})}\BibitemShut {NoStop}%
\bibitem [{\citenamefont {Bugge}\ \emph {et~al.}(2014)\citenamefont {Bugge},
  \citenamefont {Sauge}, \citenamefont {Ghazali}, \citenamefont {Skaar},
  \citenamefont {Lydersen},\ and\ \citenamefont {Makarov}}]{bugge2014laser}%
  \BibitemOpen
  \bibfield  {author} {\bibinfo {author} {\bibfnamefont {A.~N.}\ \bibnamefont
  {Bugge}}, \bibinfo {author} {\bibfnamefont {S.}~\bibnamefont {Sauge}},
  \bibinfo {author} {\bibfnamefont {A.~M.~M.}\ \bibnamefont {Ghazali}},
  \bibinfo {author} {\bibfnamefont {J.}~\bibnamefont {Skaar}}, \bibinfo
  {author} {\bibfnamefont {L.}~\bibnamefont {Lydersen}},\ and\ \bibinfo
  {author} {\bibfnamefont {V.}~\bibnamefont {Makarov}},\ }\bibfield  {title}
  {\bibinfo {title} {Laser damage helps the eavesdropper in quantum
  cryptography},\ }\href
  {https://doi.org/https://doi.org/10.1103/PhysRevLett.112.070503} {\bibfield
  {journal} {\bibinfo  {journal} {\prl}\ }\textbf {\bibinfo {volume} {112}},\
  \bibinfo {pages} {070503} (\bibinfo {year} {2014})}\BibitemShut {NoStop}%
\bibitem [{\citenamefont {Eberhard}(1993)}]{eberhard_efficiency}%
  \BibitemOpen
  \bibfield  {author} {\bibinfo {author} {\bibfnamefont {P.~H.}\ \bibnamefont
  {Eberhard}},\ }\bibfield  {title} {\bibinfo {title} {Background level and
  counter efficiencies required for a loophole-free einstein-podolsky-rosen
  experiment},\ }\href {https://doi.org/10.1103/PhysRevA.47.R747} {\bibfield
  {journal} {\bibinfo  {journal} {\pra}\ }\textbf {\bibinfo {volume} {47}},\
  \bibinfo {pages} {R747} (\bibinfo {year} {1993})}\BibitemShut {NoStop}%
\bibitem [{\citenamefont {Massar}\ and\ \citenamefont
  {Pironio}(2003)}]{massar_violation_2003}%
  \BibitemOpen
  \bibfield  {author} {\bibinfo {author} {\bibfnamefont {S.}~\bibnamefont
  {Massar}}\ and\ \bibinfo {author} {\bibfnamefont {S.}~\bibnamefont
  {Pironio}},\ }\bibfield  {title} {\bibinfo {title} {Violation of local
  realism vs detection efficiency},\ }\href
  {https://doi.org/10.1103/PhysRevA.68.062109} {\bibfield  {journal} {\bibinfo
  {journal} {\pra}\ }\textbf {\bibinfo {volume} {68}},\ \bibinfo {pages}
  {062109} (\bibinfo {year} {2003})},\ \bibinfo {note}
  {arXiv:quant-ph/0210103}\BibitemShut {NoStop}%
\bibitem [{\citenamefont {V\'ertesi}\ \emph {et~al.}(2010)\citenamefont
  {V\'ertesi}, \citenamefont {Pironio},\ and\ \citenamefont
  {Brunner}}]{Vertesi2010}%
  \BibitemOpen
  \bibfield  {author} {\bibinfo {author} {\bibfnamefont {T.}~\bibnamefont
  {V\'ertesi}}, \bibinfo {author} {\bibfnamefont {S.}~\bibnamefont {Pironio}},\
  and\ \bibinfo {author} {\bibfnamefont {N.}~\bibnamefont {Brunner}},\
  }\bibfield  {title} {\bibinfo {title} {Closing the detection loophole in bell
  experiments using qudits},\ }\href
  {https://doi.org/10.1103/PhysRevLett.104.060401} {\bibfield  {journal}
  {\bibinfo  {journal} {\prl}\ }\textbf {\bibinfo {volume} {104}},\ \bibinfo
  {pages} {060401} (\bibinfo {year} {2010})}\BibitemShut {NoStop}%
\bibitem [{\citenamefont {Branciard}(2011)}]{Branciard2011}%
  \BibitemOpen
  \bibfield  {author} {\bibinfo {author} {\bibfnamefont {C.}~\bibnamefont
  {Branciard}},\ }\bibfield  {title} {\bibinfo {title} {Detection loophole in
  bell experiments: How postselection modifies the requirements to observe
  nonlocality},\ }\href {https://doi.org/10.1103/PhysRevA.83.032123} {\bibfield
   {journal} {\bibinfo  {journal} {\pra}\ }\textbf {\bibinfo {volume} {83}},\
  \bibinfo {pages} {032123} (\bibinfo {year} {2011})}\BibitemShut {NoStop}%
\bibitem [{\citenamefont {Larsson}(2014)}]{Larsson2014}%
  \BibitemOpen
  \bibfield  {author} {\bibinfo {author} {\bibfnamefont {J.-{\AA}.}\
  \bibnamefont {Larsson}},\ }\bibfield  {title} {\bibinfo {title} {Loopholes in
  bell inequality tests of local realism},\ }\href
  {https://doi.org/10.1088/1751-8113/47/42/424003} {\bibfield  {journal}
  {\bibinfo  {journal} {\jphysa}\ }\textbf {\bibinfo {volume} {47}},\ \bibinfo
  {pages} {424003} (\bibinfo {year} {2014})}\BibitemShut {NoStop}%
\bibitem [{\citenamefont {Wittmann}\ \emph {et~al.}(2012)\citenamefont
  {Wittmann}, \citenamefont {Ramelow}, \citenamefont {Steinlechner},
  \citenamefont {Langford}, \citenamefont {Brunner}, \citenamefont {Wiseman},
  \citenamefont {Ursin},\ and\ \citenamefont {Zeilinger}}]{Wittmann2012}%
  \BibitemOpen
  \bibfield  {author} {\bibinfo {author} {\bibfnamefont {B.}~\bibnamefont
  {Wittmann}}, \bibinfo {author} {\bibfnamefont {S.}~\bibnamefont {Ramelow}},
  \bibinfo {author} {\bibfnamefont {F.}~\bibnamefont {Steinlechner}}, \bibinfo
  {author} {\bibfnamefont {N.~K.}\ \bibnamefont {Langford}}, \bibinfo {author}
  {\bibfnamefont {N.}~\bibnamefont {Brunner}}, \bibinfo {author} {\bibfnamefont
  {H.~M.}\ \bibnamefont {Wiseman}}, \bibinfo {author} {\bibfnamefont
  {R.}~\bibnamefont {Ursin}},\ and\ \bibinfo {author} {\bibfnamefont
  {A.}~\bibnamefont {Zeilinger}},\ }\bibfield  {title} {\bibinfo {title}
  {Loophole-free einstein{\textendash}podolsky{\textendash}rosen experiment via
  quantum steering},\ }\href {https://doi.org/10.1088/1367-2630/14/5/053030}
  {\bibfield  {journal} {\bibinfo  {journal} {\njp}\ }\textbf {\bibinfo
  {volume} {14}},\ \bibinfo {pages} {053030} (\bibinfo {year}
  {2012})}\BibitemShut {NoStop}%
\bibitem [{\citenamefont {Bennet}\ \emph {et~al.}(2012)\citenamefont {Bennet},
  \citenamefont {Evans}, \citenamefont {Saunders}, \citenamefont {Branciard},
  \citenamefont {Cavalcanti}, \citenamefont {Wiseman},\ and\ \citenamefont
  {Pryde}}]{Bennet2012}%
  \BibitemOpen
  \bibfield  {author} {\bibinfo {author} {\bibfnamefont {A.~J.}\ \bibnamefont
  {Bennet}}, \bibinfo {author} {\bibfnamefont {D.~A.}\ \bibnamefont {Evans}},
  \bibinfo {author} {\bibfnamefont {D.~J.}\ \bibnamefont {Saunders}}, \bibinfo
  {author} {\bibfnamefont {C.}~\bibnamefont {Branciard}}, \bibinfo {author}
  {\bibfnamefont {E.~G.}\ \bibnamefont {Cavalcanti}}, \bibinfo {author}
  {\bibfnamefont {H.~M.}\ \bibnamefont {Wiseman}},\ and\ \bibinfo {author}
  {\bibfnamefont {G.~J.}\ \bibnamefont {Pryde}},\ }\bibfield  {title} {\bibinfo
  {title} {Arbitrarily loss-tolerant einstein-podolsky-rosen steering allowing
  a demonstration over 1 km of optical fiber with no detection loophole},\
  }\href {https://doi.org/10.1103/PhysRevX.2.031003} {\bibfield  {journal}
  {\bibinfo  {journal} {Phys. Rev. X}\ }\textbf {\bibinfo {volume} {2}},\
  \bibinfo {pages} {031003} (\bibinfo {year} {2012})}\BibitemShut {NoStop}%
\bibitem [{\citenamefont {Srivastav}\ \emph {et~al.}(2022)\citenamefont
  {Srivastav}, \citenamefont {Valencia}, \citenamefont {McCutcheon},
  \citenamefont {Leedumrongwatthanakun}, \citenamefont {Designolle},
  \citenamefont {Uola}, \citenamefont {Brunner},\ and\ \citenamefont
  {Malik}}]{Srivastav2022}%
  \BibitemOpen
  \bibfield  {author} {\bibinfo {author} {\bibfnamefont {V.}~\bibnamefont
  {Srivastav}}, \bibinfo {author} {\bibfnamefont {N.~H.}\ \bibnamefont
  {Valencia}}, \bibinfo {author} {\bibfnamefont {W.}~\bibnamefont
  {McCutcheon}}, \bibinfo {author} {\bibfnamefont {S.}~\bibnamefont
  {Leedumrongwatthanakun}}, \bibinfo {author} {\bibfnamefont {S.}~\bibnamefont
  {Designolle}}, \bibinfo {author} {\bibfnamefont {R.}~\bibnamefont {Uola}},
  \bibinfo {author} {\bibfnamefont {N.}~\bibnamefont {Brunner}},\ and\ \bibinfo
  {author} {\bibfnamefont {M.}~\bibnamefont {Malik}},\ }\bibfield  {title}
  {\bibinfo {title} {Quick quantum steering: Overcoming loss and noise with
  qudits},\ }\href {https://doi.org/10.1103/PhysRevX.12.041023} {\bibfield
  {journal} {\bibinfo  {journal} {Phys. Rev. X}\ }\textbf {\bibinfo {volume}
  {12}},\ \bibinfo {pages} {041023} (\bibinfo {year} {2022})}\BibitemShut
  {NoStop}%
\bibitem [{\citenamefont {Ac{\'\i}n}\ \emph {et~al.}(2016)\citenamefont
  {Ac{\'\i}n}, \citenamefont {Cavalcanti}, \citenamefont {Passaro},
  \citenamefont {Pironio},\ and\ \citenamefont
  {Skrzypczyk}}]{acin2016necessary}%
  \BibitemOpen
  \bibfield  {author} {\bibinfo {author} {\bibfnamefont {A.}~\bibnamefont
  {Ac{\'\i}n}}, \bibinfo {author} {\bibfnamefont {D.}~\bibnamefont
  {Cavalcanti}}, \bibinfo {author} {\bibfnamefont {E.}~\bibnamefont {Passaro}},
  \bibinfo {author} {\bibfnamefont {S.}~\bibnamefont {Pironio}},\ and\ \bibinfo
  {author} {\bibfnamefont {P.}~\bibnamefont {Skrzypczyk}},\ }\bibfield  {title}
  {\bibinfo {title} {Necessary detection efficiencies for secure quantum key
  distribution and bound randomness},\ }\href
  {https://doi.org/https://doi.org/10.1103/PhysRevA.93.012319} {\bibfield
  {journal} {\bibinfo  {journal} {\pra}\ }\textbf {\bibinfo {volume} {93}},\
  \bibinfo {pages} {012319} (\bibinfo {year} {2016})}\BibitemShut {NoStop}%
\bibitem [{\citenamefont {Ioannou}\ \emph
  {et~al.}(2022{\natexlab{a}})\citenamefont {Ioannou}, \citenamefont {Pereira},
  \citenamefont {Rusca}, \citenamefont {Gr{\"u}nenfelder}, \citenamefont
  {Boaron}, \citenamefont {Perrenoud}, \citenamefont {Abbott}, \citenamefont
  {Sekatski}, \citenamefont {Bancal}, \citenamefont {Maring} \emph
  {et~al.}}]{ioannou2022receiver}%
  \BibitemOpen
  \bibfield  {author} {\bibinfo {author} {\bibfnamefont {M.}~\bibnamefont
  {Ioannou}}, \bibinfo {author} {\bibfnamefont {M.~A.}\ \bibnamefont
  {Pereira}}, \bibinfo {author} {\bibfnamefont {D.}~\bibnamefont {Rusca}},
  \bibinfo {author} {\bibfnamefont {F.}~\bibnamefont {Gr{\"u}nenfelder}},
  \bibinfo {author} {\bibfnamefont {A.}~\bibnamefont {Boaron}}, \bibinfo
  {author} {\bibfnamefont {M.}~\bibnamefont {Perrenoud}}, \bibinfo {author}
  {\bibfnamefont {A.~A.}\ \bibnamefont {Abbott}}, \bibinfo {author}
  {\bibfnamefont {P.}~\bibnamefont {Sekatski}}, \bibinfo {author}
  {\bibfnamefont {J.-D.}\ \bibnamefont {Bancal}}, \bibinfo {author}
  {\bibfnamefont {N.}~\bibnamefont {Maring}}, \emph {et~al.},\ }\bibfield
  {title} {\bibinfo {title} {Receiver-device-independent quantum key
  distribution},\ }\href
  {https://doi.org/https://doi.org/10.22331/q-2022-05-24-718} {\bibfield
  {journal} {\bibinfo  {journal} {Quantum}\ }\textbf {\bibinfo {volume} {6}},\
  \bibinfo {pages} {718} (\bibinfo {year} {2022}{\natexlab{a}})}\BibitemShut
  {NoStop}%
\bibitem [{\citenamefont {Hensen}\ \emph {et~al.}(2015)\citenamefont {Hensen}
  \emph {et~al.}}]{Hensen2015}%
  \BibitemOpen
  \bibfield  {author} {\bibinfo {author} {\bibfnamefont {B.}~\bibnamefont
  {Hensen}} \emph {et~al.},\ }\bibfield  {title} {\bibinfo {title}
  {Loophole-free bell inequality violation using electron spins separated by
  1.3 kilometres},\ }\href {https://doi.org/10.1038/nature15759} {\bibfield
  {journal} {\bibinfo  {journal} {Nature}\ }\textbf {\bibinfo {volume} {526}},\
  \bibinfo {pages} {682} (\bibinfo {year} {2015})}\BibitemShut {NoStop}%
\bibitem [{\citenamefont {Shalm}\ \emph {et~al.}(2015)\citenamefont {Shalm}
  \emph {et~al.}}]{Shalm2015}%
  \BibitemOpen
  \bibfield  {author} {\bibinfo {author} {\bibfnamefont {L.~K.}\ \bibnamefont
  {Shalm}} \emph {et~al.},\ }\bibfield  {title} {\bibinfo {title} {Strong
  loophole-free test of local realism},\ }\href
  {https://doi.org/10.1103/PhysRevLett.115.250402} {\bibfield  {journal}
  {\bibinfo  {journal} {\prl}\ }\textbf {\bibinfo {volume} {115}},\ \bibinfo
  {pages} {250402} (\bibinfo {year} {2015})}\BibitemShut {NoStop}%
\bibitem [{\citenamefont {Giustina}\ \emph {et~al.}(2015)\citenamefont
  {Giustina} \emph {et~al.}}]{Giustina2015}%
  \BibitemOpen
  \bibfield  {author} {\bibinfo {author} {\bibfnamefont {M.}~\bibnamefont
  {Giustina}} \emph {et~al.},\ }\bibfield  {title} {\bibinfo {title}
  {Significant-loophole-free test of bell's theorem with entangled photons},\
  }\href {https://doi.org/10.1103/PhysRevLett.115.250401} {\bibfield  {journal}
  {\bibinfo  {journal} {\prl}\ }\textbf {\bibinfo {volume} {115}},\ \bibinfo
  {pages} {250401} (\bibinfo {year} {2015})}\BibitemShut {NoStop}%
\bibitem [{\citenamefont {Pironio}\ \emph {et~al.}(2010)\citenamefont
  {Pironio}, \citenamefont {Ac{\'{\i}}n}, \citenamefont {Massar}, \citenamefont
  {de~la Giroday}, \citenamefont {Matsukevich}, \citenamefont {Maunz},
  \citenamefont {Olmschenk}, \citenamefont {Hayes}, \citenamefont {Luo},
  \citenamefont {Manning},\ and\ \citenamefont {Monroe}}]{Pironio2010}%
  \BibitemOpen
  \bibfield  {author} {\bibinfo {author} {\bibfnamefont {S.}~\bibnamefont
  {Pironio}}, \bibinfo {author} {\bibfnamefont {A.}~\bibnamefont
  {Ac{\'{\i}}n}}, \bibinfo {author} {\bibfnamefont {S.}~\bibnamefont {Massar}},
  \bibinfo {author} {\bibfnamefont {A.~B.}\ \bibnamefont {de~la Giroday}},
  \bibinfo {author} {\bibfnamefont {D.~N.}\ \bibnamefont {Matsukevich}},
  \bibinfo {author} {\bibfnamefont {P.}~\bibnamefont {Maunz}}, \bibinfo
  {author} {\bibfnamefont {S.}~\bibnamefont {Olmschenk}}, \bibinfo {author}
  {\bibfnamefont {D.}~\bibnamefont {Hayes}}, \bibinfo {author} {\bibfnamefont
  {L.}~\bibnamefont {Luo}}, \bibinfo {author} {\bibfnamefont {T.~A.}\
  \bibnamefont {Manning}},\ and\ \bibinfo {author} {\bibfnamefont
  {C.}~\bibnamefont {Monroe}},\ }\bibfield  {title} {\bibinfo {title} {Random
  numbers certified by bell's theorem},\ }\href
  {https://doi.org/10.1038/nature09008} {\bibfield  {journal} {\bibinfo
  {journal} {Nature}\ }\textbf {\bibinfo {volume} {464}},\ \bibinfo {pages}
  {1021} (\bibinfo {year} {2010})}\BibitemShut {NoStop}%
\bibitem [{\citenamefont {Nadlinger}\ \emph {et~al.}(2022)\citenamefont
  {Nadlinger}, \citenamefont {Drmota}, \citenamefont {Nichol}, \citenamefont
  {Araneda}, \citenamefont {Main}, \citenamefont {Srinivas}, \citenamefont
  {Lucas}, \citenamefont {Ballance}, \citenamefont {Ivanov}, \citenamefont
  {Tan}, \citenamefont {Sekatski}, \citenamefont {Urbanke}, \citenamefont
  {Renner}, \citenamefont {Sangouard},\ and\ \citenamefont
  {Bancal}}]{Nadlinger2022}%
  \BibitemOpen
  \bibfield  {author} {\bibinfo {author} {\bibfnamefont {D.~P.}\ \bibnamefont
  {Nadlinger}}, \bibinfo {author} {\bibfnamefont {P.}~\bibnamefont {Drmota}},
  \bibinfo {author} {\bibfnamefont {B.~C.}\ \bibnamefont {Nichol}}, \bibinfo
  {author} {\bibfnamefont {G.}~\bibnamefont {Araneda}}, \bibinfo {author}
  {\bibfnamefont {D.}~\bibnamefont {Main}}, \bibinfo {author} {\bibfnamefont
  {R.}~\bibnamefont {Srinivas}}, \bibinfo {author} {\bibfnamefont {D.~M.}\
  \bibnamefont {Lucas}}, \bibinfo {author} {\bibfnamefont {C.~J.}\ \bibnamefont
  {Ballance}}, \bibinfo {author} {\bibfnamefont {K.}~\bibnamefont {Ivanov}},
  \bibinfo {author} {\bibfnamefont {E.~Y.-Z.}\ \bibnamefont {Tan}}, \bibinfo
  {author} {\bibfnamefont {P.}~\bibnamefont {Sekatski}}, \bibinfo {author}
  {\bibfnamefont {R.~L.}\ \bibnamefont {Urbanke}}, \bibinfo {author}
  {\bibfnamefont {R.}~\bibnamefont {Renner}}, \bibinfo {author} {\bibfnamefont
  {N.}~\bibnamefont {Sangouard}},\ and\ \bibinfo {author} {\bibfnamefont
  {J.-D.}\ \bibnamefont {Bancal}},\ }\bibfield  {title} {\bibinfo {title}
  {Experimental quantum key distribution certified by bell{\textquotesingle}s
  theorem},\ }\href {https://doi.org/10.1038/s41586-022-04941-5} {\bibfield
  {journal} {\bibinfo  {journal} {Nature}\ }\textbf {\bibinfo {volume} {607}},\
  \bibinfo {pages} {682} (\bibinfo {year} {2022})}\BibitemShut {NoStop}%
\bibitem [{\citenamefont {Zhang}\ \emph {et~al.}(2022)\citenamefont {Zhang},
  \citenamefont {van Leent}, \citenamefont {Redeker}, \citenamefont {Garthoff},
  \citenamefont {Schwonnek}, \citenamefont {Fertig}, \citenamefont {Eppelt},
  \citenamefont {Rosenfeld}, \citenamefont {Scarani}, \citenamefont {Lim},\
  and\ \citenamefont {Weinfurter}}]{Zhang2022}%
  \BibitemOpen
  \bibfield  {author} {\bibinfo {author} {\bibfnamefont {W.}~\bibnamefont
  {Zhang}}, \bibinfo {author} {\bibfnamefont {T.}~\bibnamefont {van Leent}},
  \bibinfo {author} {\bibfnamefont {K.}~\bibnamefont {Redeker}}, \bibinfo
  {author} {\bibfnamefont {R.}~\bibnamefont {Garthoff}}, \bibinfo {author}
  {\bibfnamefont {R.}~\bibnamefont {Schwonnek}}, \bibinfo {author}
  {\bibfnamefont {F.}~\bibnamefont {Fertig}}, \bibinfo {author} {\bibfnamefont
  {S.}~\bibnamefont {Eppelt}}, \bibinfo {author} {\bibfnamefont
  {W.}~\bibnamefont {Rosenfeld}}, \bibinfo {author} {\bibfnamefont
  {V.}~\bibnamefont {Scarani}}, \bibinfo {author} {\bibfnamefont {C.~C.-W.}\
  \bibnamefont {Lim}},\ and\ \bibinfo {author} {\bibfnamefont {H.}~\bibnamefont
  {Weinfurter}},\ }\bibfield  {title} {\bibinfo {title} {A device-independent
  quantum key distribution system for distant users},\ }\href
  {https://doi.org/10.1038/s41586-022-04891-y} {\bibfield  {journal} {\bibinfo
  {journal} {Nature}\ }\textbf {\bibinfo {volume} {607}},\ \bibinfo {pages}
  {687} (\bibinfo {year} {2022})}\BibitemShut {NoStop}%
\bibitem [{\citenamefont {Heinosaari}\ \emph {et~al.}(2016)\citenamefont
  {Heinosaari}, \citenamefont {Miyadera},\ and\ \citenamefont
  {Ziman}}]{Heinosaari2016}%
  \BibitemOpen
  \bibfield  {author} {\bibinfo {author} {\bibfnamefont {T.}~\bibnamefont
  {Heinosaari}}, \bibinfo {author} {\bibfnamefont {T.}~\bibnamefont
  {Miyadera}},\ and\ \bibinfo {author} {\bibfnamefont {M.}~\bibnamefont
  {Ziman}},\ }\bibfield  {title} {\bibinfo {title} {An invitation to quantum
  incompatibility},\ }\href {https://doi.org/10.1088/1751-8113/49/12/123001}
  {\bibfield  {journal} {\bibinfo  {journal} {\jphysa}\ }\textbf {\bibinfo
  {volume} {49}},\ \bibinfo {pages} {123001} (\bibinfo {year}
  {2016})}\BibitemShut {NoStop}%
\bibitem [{\citenamefont {Busch}\ \emph {et~al.}(2016)\citenamefont {Busch},
  \citenamefont {Lahti}, \citenamefont {Pellonp\"{a}\"{a}},\ and\ \citenamefont
  {Ylinen}}]{Busch2016}%
  \BibitemOpen
  \bibfield  {author} {\bibinfo {author} {\bibfnamefont {P.}~\bibnamefont
  {Busch}}, \bibinfo {author} {\bibfnamefont {P.}~\bibnamefont {Lahti}},
  \bibinfo {author} {\bibfnamefont {J.-P.}\ \bibnamefont {Pellonp\"{a}\"{a}}},\
  and\ \bibinfo {author} {\bibfnamefont {K.}~\bibnamefont {Ylinen}},\ }\href
  {https://doi.org/10.1007/978-3-319-43389-9} {\emph {\bibinfo {title} {Quantum
  Measurement}}}\ (\bibinfo  {publisher} {Springer International Publishing},\
  \bibinfo {year} {2016})\BibitemShut {NoStop}%
\bibitem [{\citenamefont {G\"uhne}\ \emph
  {et~al.}(2023{\natexlab{a}})\citenamefont {G\"uhne}, \citenamefont
  {Haapasalo}, \citenamefont {Kraft}, \citenamefont {Pellonp\"a\"a},\ and\
  \citenamefont {Uola}}]{uola2023}%
  \BibitemOpen
  \bibfield  {author} {\bibinfo {author} {\bibfnamefont {O.}~\bibnamefont
  {G\"uhne}}, \bibinfo {author} {\bibfnamefont {E.}~\bibnamefont {Haapasalo}},
  \bibinfo {author} {\bibfnamefont {T.}~\bibnamefont {Kraft}}, \bibinfo
  {author} {\bibfnamefont {J.-P.}\ \bibnamefont {Pellonp\"a\"a}},\ and\
  \bibinfo {author} {\bibfnamefont {R.}~\bibnamefont {Uola}},\ }\bibfield
  {title} {\bibinfo {title} {Colloquium: Incompatible measurements in quantum
  information science},\ }\href {https://doi.org/10.1103/RevModPhys.95.011003}
  {\bibfield  {journal} {\bibinfo  {journal} {Rev. Mod. Phys.}\ }\textbf
  {\bibinfo {volume} {95}},\ \bibinfo {pages} {011003} (\bibinfo {year}
  {2023}{\natexlab{a}})}\BibitemShut {NoStop}%
\bibitem [{\citenamefont {Gallego}\ \emph {et~al.}(2010)\citenamefont
  {Gallego}, \citenamefont {Brunner}, \citenamefont {Hadley},\ and\
  \citenamefont {Ac{\'{\i}}n}}]{Gallego2010}%
  \BibitemOpen
  \bibfield  {author} {\bibinfo {author} {\bibfnamefont {R.}~\bibnamefont
  {Gallego}}, \bibinfo {author} {\bibfnamefont {N.}~\bibnamefont {Brunner}},
  \bibinfo {author} {\bibfnamefont {C.}~\bibnamefont {Hadley}},\ and\ \bibinfo
  {author} {\bibfnamefont {A.}~\bibnamefont {Ac{\'{\i}}n}},\ }\bibfield
  {title} {\bibinfo {title} {Device-independent tests of classical and quantum
  dimensions},\ }\href
  {https://doi.org/https://doi.org/10.1103/PhysRevLett.105.230501} {\bibfield
  {journal} {\bibinfo  {journal} {\prl}\ }\textbf {\bibinfo {volume} {105}},\
  \bibinfo {pages} {230501} (\bibinfo {year} {2010})}\BibitemShut {NoStop}%
\bibitem [{\citenamefont {Wang}\ \emph {et~al.}(2019)\citenamefont {Wang},
  \citenamefont {Primaatmaja}, \citenamefont {Lavie}, \citenamefont
  {Varvitsiotis},\ and\ \citenamefont {Lim}}]{Wang2019}%
  \BibitemOpen
  \bibfield  {author} {\bibinfo {author} {\bibfnamefont {Y.}~\bibnamefont
  {Wang}}, \bibinfo {author} {\bibfnamefont {I.~W.}\ \bibnamefont
  {Primaatmaja}}, \bibinfo {author} {\bibfnamefont {E.}~\bibnamefont {Lavie}},
  \bibinfo {author} {\bibfnamefont {A.}~\bibnamefont {Varvitsiotis}},\ and\
  \bibinfo {author} {\bibfnamefont {C.~C.~W.}\ \bibnamefont {Lim}},\ }\bibfield
   {title} {\bibinfo {title} {Characterising the correlations of
  prepare-and-measure quantum networks},\ }\bibfield  {journal} {\bibinfo
  {journal} {npj Quantum Information}\ }\textbf {\bibinfo {volume} {5}},\ \href
  {https://doi.org/10.1038/s41534-019-0133-3} {10.1038/s41534-019-0133-3}
  (\bibinfo {year} {2019})\BibitemShut {NoStop}%
\bibitem [{\citenamefont {Quintino}\ \emph {et~al.}(2014)\citenamefont
  {Quintino}, \citenamefont {V\'ertesi},\ and\ \citenamefont
  {Brunner}}]{Quintino2014}%
  \BibitemOpen
  \bibfield  {author} {\bibinfo {author} {\bibfnamefont {M.~T.}\ \bibnamefont
  {Quintino}}, \bibinfo {author} {\bibfnamefont {T.}~\bibnamefont
  {V\'ertesi}},\ and\ \bibinfo {author} {\bibfnamefont {N.}~\bibnamefont
  {Brunner}},\ }\bibfield  {title} {\bibinfo {title} {Joint measurability,
  einstein-podolsky-rosen steering, and bell nonlocality},\ }\href
  {https://doi.org/10.1103/PhysRevLett.113.160402} {\bibfield  {journal}
  {\bibinfo  {journal} {\prl}\ }\textbf {\bibinfo {volume} {113}},\ \bibinfo
  {pages} {160402} (\bibinfo {year} {2014})}\BibitemShut {NoStop}%
\bibitem [{\citenamefont {Uola}\ \emph {et~al.}(2014)\citenamefont {Uola},
  \citenamefont {Moroder},\ and\ \citenamefont {Gühne}}]{uola_joint_2014}%
  \BibitemOpen
  \bibfield  {author} {\bibinfo {author} {\bibfnamefont {R.}~\bibnamefont
  {Uola}}, \bibinfo {author} {\bibfnamefont {T.}~\bibnamefont {Moroder}},\ and\
  \bibinfo {author} {\bibfnamefont {O.}~\bibnamefont {Gühne}},\ }\bibfield
  {title} {\bibinfo {title} {Joint {Measurability} of {Generalized}
  {Measurements} {Implies} {Classicality}},\ }\href
  {https://doi.org/10.1103/PhysRevLett.113.160403} {\bibfield  {journal}
  {\bibinfo  {journal} {\prl}\ }\textbf {\bibinfo {volume} {113}},\ \bibinfo
  {pages} {160403} (\bibinfo {year} {2014})},\ \bibinfo {note} {publisher:
  American Physical Society}\BibitemShut {NoStop}%
\bibitem [{\citenamefont {Heinosaari}\ \emph
  {et~al.}(2015{\natexlab{a}})\citenamefont {Heinosaari}, \citenamefont
  {Kiukas},\ and\ \citenamefont {Reitzner}}]{Heinosaari2015}%
  \BibitemOpen
  \bibfield  {author} {\bibinfo {author} {\bibfnamefont {T.}~\bibnamefont
  {Heinosaari}}, \bibinfo {author} {\bibfnamefont {J.}~\bibnamefont {Kiukas}},\
  and\ \bibinfo {author} {\bibfnamefont {D.}~\bibnamefont {Reitzner}},\
  }\bibfield  {title} {\bibinfo {title} {Noise robustness of the
  incompatibility of quantum measurements},\ }\href
  {https://doi.org/10.1103/PhysRevA.92.022115} {\bibfield  {journal} {\bibinfo
  {journal} {\pra}\ }\textbf {\bibinfo {volume} {92}},\ \bibinfo {pages}
  {022115} (\bibinfo {year} {2015}{\natexlab{a}})}\BibitemShut {NoStop}%
\bibitem [{\citenamefont {Bavaresco}\ \emph {et~al.}(2017)\citenamefont
  {Bavaresco}, \citenamefont {Quintino}, \citenamefont {Guerini}, \citenamefont
  {Maciel}, \citenamefont {Cavalcanti},\ and\ \citenamefont
  {Cunha}}]{Bavaresco2017}%
  \BibitemOpen
  \bibfield  {author} {\bibinfo {author} {\bibfnamefont {J.}~\bibnamefont
  {Bavaresco}}, \bibinfo {author} {\bibfnamefont {M.~T.}\ \bibnamefont
  {Quintino}}, \bibinfo {author} {\bibfnamefont {L.}~\bibnamefont {Guerini}},
  \bibinfo {author} {\bibfnamefont {T.~O.}\ \bibnamefont {Maciel}}, \bibinfo
  {author} {\bibfnamefont {D.}~\bibnamefont {Cavalcanti}},\ and\ \bibinfo
  {author} {\bibfnamefont {M.~T.}\ \bibnamefont {Cunha}},\ }\bibfield  {title}
  {\bibinfo {title} {Most incompatible measurements for robust steering
  tests},\ }\href {https://doi.org/10.1103/PhysRevA.96.022110} {\bibfield
  {journal} {\bibinfo  {journal} {\pra}\ }\textbf {\bibinfo {volume} {96}},\
  \bibinfo {pages} {022110} (\bibinfo {year} {2017})}\BibitemShut {NoStop}%
\bibitem [{\citenamefont {Designolle}\ \emph
  {et~al.}(2019{\natexlab{a}})\citenamefont {Designolle}, \citenamefont
  {Skrzypczyk}, \citenamefont {Fr\"owis},\ and\ \citenamefont
  {Brunner}}]{Designolle2019}%
  \BibitemOpen
  \bibfield  {author} {\bibinfo {author} {\bibfnamefont {S.}~\bibnamefont
  {Designolle}}, \bibinfo {author} {\bibfnamefont {P.}~\bibnamefont
  {Skrzypczyk}}, \bibinfo {author} {\bibfnamefont {F.}~\bibnamefont
  {Fr\"owis}},\ and\ \bibinfo {author} {\bibfnamefont {N.}~\bibnamefont
  {Brunner}},\ }\bibfield  {title} {\bibinfo {title} {Quantifying measurement
  incompatibility of mutually unbiased bases},\ }\href
  {https://doi.org/10.1103/PhysRevLett.122.050402} {\bibfield  {journal}
  {\bibinfo  {journal} {\prl}\ }\textbf {\bibinfo {volume} {122}},\ \bibinfo
  {pages} {050402} (\bibinfo {year} {2019}{\natexlab{a}})}\BibitemShut
  {NoStop}%
\bibitem [{\citenamefont {Designolle}\ \emph
  {et~al.}(2019{\natexlab{b}})\citenamefont {Designolle}, \citenamefont
  {Farkas},\ and\ \citenamefont {Kaniewski}}]{Designolle2019njp}%
  \BibitemOpen
  \bibfield  {author} {\bibinfo {author} {\bibfnamefont {S.}~\bibnamefont
  {Designolle}}, \bibinfo {author} {\bibfnamefont {M.}~\bibnamefont {Farkas}},\
  and\ \bibinfo {author} {\bibfnamefont {J.}~\bibnamefont {Kaniewski}},\
  }\bibfield  {title} {\bibinfo {title} {Incompatibility robustness of quantum
  measurements: a unified framework},\ }\href
  {https://doi.org/10.1088/1367-2630/ab5020} {\bibfield  {journal} {\bibinfo
  {journal} {\njp}\ }\textbf {\bibinfo {volume} {21}},\ \bibinfo {pages}
  {113053} (\bibinfo {year} {2019}{\natexlab{b}})}\BibitemShut {NoStop}%
\bibitem [{\citenamefont {Skrzypczyk}\ and\ \citenamefont
  {Cavalcanti}(2015)}]{Skrzypczyk2015}%
  \BibitemOpen
  \bibfield  {author} {\bibinfo {author} {\bibfnamefont {P.}~\bibnamefont
  {Skrzypczyk}}\ and\ \bibinfo {author} {\bibfnamefont {D.}~\bibnamefont
  {Cavalcanti}},\ }\bibfield  {title} {\bibinfo {title} {Loss-tolerant
  einstein-podolsky-rosen steering for arbitrary-dimensional states: Joint
  measurability and unbounded violations under losses},\ }\href
  {https://doi.org/https://doi.org/10.1103/physreva.92.022354} {\bibfield
  {journal} {\bibinfo  {journal} {\pra}\ }\textbf {\bibinfo {volume} {92}},\
  \bibinfo {pages} {022354} (\bibinfo {year} {2015})}\BibitemShut {NoStop}%
\bibitem [{\citenamefont {Ioannou}\ \emph
  {et~al.}(2022{\natexlab{b}})\citenamefont {Ioannou}, \citenamefont
  {Sekatski}, \citenamefont {Designolle}, \citenamefont {Jones}, \citenamefont
  {Uola},\ and\ \citenamefont {Brunner}}]{Ioannou2022}%
  \BibitemOpen
  \bibfield  {author} {\bibinfo {author} {\bibfnamefont {M.}~\bibnamefont
  {Ioannou}}, \bibinfo {author} {\bibfnamefont {P.}~\bibnamefont {Sekatski}},
  \bibinfo {author} {\bibfnamefont {S.}~\bibnamefont {Designolle}}, \bibinfo
  {author} {\bibfnamefont {B.~D.~M.}\ \bibnamefont {Jones}}, \bibinfo {author}
  {\bibfnamefont {R.}~\bibnamefont {Uola}},\ and\ \bibinfo {author}
  {\bibfnamefont {N.}~\bibnamefont {Brunner}},\ }\bibfield  {title} {\bibinfo
  {title} {Simulability of high-dimensional quantum measurements},\ }\href
  {https://doi.org/10.1103/PhysRevLett.129.190401} {\bibfield  {journal}
  {\bibinfo  {journal} {\prl}\ }\textbf {\bibinfo {volume} {129}},\ \bibinfo
  {pages} {190401} (\bibinfo {year} {2022}{\natexlab{b}})}\BibitemShut
  {NoStop}%
\bibitem [{\citenamefont {Sekatski}\ \emph {et~al.}(2023)\citenamefont
  {Sekatski}, \citenamefont {Giraud}, \citenamefont {Uola},\ and\ \citenamefont
  {Brunner}}]{sekatski2023unlimited}%
  \BibitemOpen
  \bibfield  {author} {\bibinfo {author} {\bibfnamefont {P.}~\bibnamefont
  {Sekatski}}, \bibinfo {author} {\bibfnamefont {F.}~\bibnamefont {Giraud}},
  \bibinfo {author} {\bibfnamefont {R.}~\bibnamefont {Uola}},\ and\ \bibinfo
  {author} {\bibfnamefont {N.}~\bibnamefont {Brunner}},\ }\bibfield  {title}
  {\bibinfo {title} {Unlimited one-way steering},\ }\href
  {https://doi.org/10.1103/PhysRevLett.131.110201} {\bibfield  {journal}
  {\bibinfo  {journal} {Phys. Rev. Lett.}\ }\textbf {\bibinfo {volume} {131}},\
  \bibinfo {pages} {110201} (\bibinfo {year} {2023})}\BibitemShut {NoStop}%
\bibitem [{\citenamefont {Sekatski}(2024)}]{sekatski2023compatibility}%
  \BibitemOpen
  \bibfield  {author} {\bibinfo {author} {\bibfnamefont {P.}~\bibnamefont
  {Sekatski}},\ }\bibfield  {title} {\bibinfo {title} {Compatibility of
  projective measurements subject to white noise and loss},\ }\href
  {https://doi.org/10.1103/PhysRevA.109.022215} {\bibfield  {journal} {\bibinfo
   {journal} {Phys. Rev. A}\ }\textbf {\bibinfo {volume} {109}},\ \bibinfo
  {pages} {022215} (\bibinfo {year} {2024})}\BibitemShut {NoStop}%
\bibitem [{\citenamefont {Lami}\ \emph {et~al.}(2019)\citenamefont {Lami},
  \citenamefont {Khatri}, \citenamefont {Adesso},\ and\ \citenamefont
  {Wilde}}]{lami2019}%
  \BibitemOpen
  \bibfield  {author} {\bibinfo {author} {\bibfnamefont {L.}~\bibnamefont
  {Lami}}, \bibinfo {author} {\bibfnamefont {S.}~\bibnamefont {Khatri}},
  \bibinfo {author} {\bibfnamefont {G.}~\bibnamefont {Adesso}},\ and\ \bibinfo
  {author} {\bibfnamefont {M.~M.}\ \bibnamefont {Wilde}},\ }\bibfield  {title}
  {\bibinfo {title} {Extendibility of bosonic gaussian states},\ }\href
  {https://doi.org/10.1103/PhysRevLett.123.050501} {\bibfield  {journal}
  {\bibinfo  {journal} {\prl}\ }\textbf {\bibinfo {volume} {123}},\ \bibinfo
  {pages} {050501} (\bibinfo {year} {2019})}\BibitemShut {NoStop}%
\bibitem [{\citenamefont {Rahimi-Keshari}\ \emph {et~al.}(2021)\citenamefont
  {Rahimi-Keshari}, \citenamefont {Mehboudi}, \citenamefont {De~Santis},
  \citenamefont {Cavalcanti},\ and\ \citenamefont
  {Acín}}]{rahimi-keshari_verification_2021}%
  \BibitemOpen
  \bibfield  {author} {\bibinfo {author} {\bibfnamefont {S.}~\bibnamefont
  {Rahimi-Keshari}}, \bibinfo {author} {\bibfnamefont {M.}~\bibnamefont
  {Mehboudi}}, \bibinfo {author} {\bibfnamefont {D.}~\bibnamefont {De~Santis}},
  \bibinfo {author} {\bibfnamefont {D.}~\bibnamefont {Cavalcanti}},\ and\
  \bibinfo {author} {\bibfnamefont {A.}~\bibnamefont {Acín}},\ }\bibfield
  {title} {\bibinfo {title} {Verification of joint measurability using
  phase-space quasiprobability distributions},\ }\href
  {https://doi.org/10.1103/PhysRevA.104.042212} {\bibfield  {journal} {\bibinfo
   {journal} {\pra}\ }\textbf {\bibinfo {volume} {104}},\ \bibinfo {pages}
  {042212} (\bibinfo {year} {2021})}\BibitemShut {NoStop}%
\bibitem [{\citenamefont {Curty}\ \emph {et~al.}(2004)\citenamefont {Curty},
  \citenamefont {Lewenstein},\ and\ \citenamefont
  {Lütkenhaus}}]{curty_entanglement_2004}%
  \BibitemOpen
  \bibfield  {author} {\bibinfo {author} {\bibfnamefont {M.}~\bibnamefont
  {Curty}}, \bibinfo {author} {\bibfnamefont {M.}~\bibnamefont {Lewenstein}},\
  and\ \bibinfo {author} {\bibfnamefont {N.}~\bibnamefont {Lütkenhaus}},\
  }\bibfield  {title} {\bibinfo {title} {Entanglement as a {Precondition} for
  {Secure} {Quantum} {Key} {Distribution}},\ }\href
  {https://doi.org/10.1103/PhysRevLett.92.217903} {\bibfield  {journal}
  {\bibinfo  {journal} {\prl}\ }\textbf {\bibinfo {volume} {92}},\ \bibinfo
  {pages} {217903} (\bibinfo {year} {2004})},\ \bibinfo {note} {publisher:
  American Physical Society}\BibitemShut {NoStop}%
\bibitem [{\citenamefont {Lobo}\ \emph {et~al.}(2023)\citenamefont {Lobo},
  \citenamefont {Pauwels},\ and\ \citenamefont {Pironio}}]{lobo2023certifying}%
  \BibitemOpen
  \bibfield  {author} {\bibinfo {author} {\bibfnamefont {E.~P.}\ \bibnamefont
  {Lobo}}, \bibinfo {author} {\bibfnamefont {J.}~\bibnamefont {Pauwels}},\ and\
  \bibinfo {author} {\bibfnamefont {S.}~\bibnamefont {Pironio}},\ }\bibfield
  {title} {\bibinfo {title} {Certifying long-range quantum correlations through
  routed bell tests},\ }\href {https://doi.org/10.48550/arXiv.2310.07484}
  {\bibfield  {journal} {\bibinfo  {journal} {arXiv preprint arXiv:2310.07484}\
  } (\bibinfo {year} {2023})}\BibitemShut {NoStop}%
\bibitem [{\citenamefont {Pusey}(2015)}]{pusey_verifying_2015}%
  \BibitemOpen
  \bibfield  {author} {\bibinfo {author} {\bibfnamefont {M.~F.}\ \bibnamefont
  {Pusey}},\ }\bibfield  {title} {\bibinfo {title} {Verifying the quantumness
  of a channel with an untrusted device},\ }\bibfield  {journal} {\bibinfo
  {journal} {\josab}\ }\href {https://doi.org/10.1364/JOSAB.32.000A56}
  {10.1364/JOSAB.32.000A56} (\bibinfo {year} {2015})\BibitemShut {NoStop}%
\bibitem [{\citenamefont {Holevo}\ \emph {et~al.}(2005)\citenamefont {Holevo},
  \citenamefont {Shirokov},\ and\ \citenamefont
  {Werner}}]{holevo2005separability}%
  \BibitemOpen
  \bibfield  {author} {\bibinfo {author} {\bibfnamefont {A.}~\bibnamefont
  {Holevo}}, \bibinfo {author} {\bibfnamefont {M.}~\bibnamefont {Shirokov}},\
  and\ \bibinfo {author} {\bibfnamefont {R.}~\bibnamefont {Werner}},\
  }\bibfield  {title} {\bibinfo {title} {Separability and entanglement-breaking
  in infinite dimensions},\ }\href
  {https://doi.org/10.48550/arXiv.quant-ph/0504204} {\bibfield  {journal}
  {\bibinfo  {journal} {arXiv preprint quant-ph/0504204}\ } (\bibinfo {year}
  {2005})}\BibitemShut {NoStop}%
\bibitem [{\citenamefont {Heinosaari}\ \emph
  {et~al.}(2015{\natexlab{b}})\citenamefont {Heinosaari}, \citenamefont
  {Kiukas}, \citenamefont {Reitzner},\ and\ \citenamefont
  {Schultz}}]{heinosaari2015incompatibility}%
  \BibitemOpen
  \bibfield  {author} {\bibinfo {author} {\bibfnamefont {T.}~\bibnamefont
  {Heinosaari}}, \bibinfo {author} {\bibfnamefont {J.}~\bibnamefont {Kiukas}},
  \bibinfo {author} {\bibfnamefont {D.}~\bibnamefont {Reitzner}},\ and\
  \bibinfo {author} {\bibfnamefont {J.}~\bibnamefont {Schultz}},\ }\bibfield
  {title} {\bibinfo {title} {Incompatibility breaking quantum channels},\
  }\href {https://doi.org/10.1088/1751-8113/48/43/435301} {\bibfield  {journal}
  {\bibinfo  {journal} {\jphysa}\ }\textbf {\bibinfo {volume} {48}},\ \bibinfo
  {pages} {435301} (\bibinfo {year} {2015}{\natexlab{b}})}\BibitemShut
  {NoStop}%
\bibitem [{\citenamefont {Wolf}\ \emph {et~al.}(2009)\citenamefont {Wolf},
  \citenamefont {Perez-Garcia},\ and\ \citenamefont {Fernandez}}]{Wolf2009}%
  \BibitemOpen
  \bibfield  {author} {\bibinfo {author} {\bibfnamefont {M.~M.}\ \bibnamefont
  {Wolf}}, \bibinfo {author} {\bibfnamefont {D.}~\bibnamefont {Perez-Garcia}},\
  and\ \bibinfo {author} {\bibfnamefont {C.}~\bibnamefont {Fernandez}},\
  }\bibfield  {title} {\bibinfo {title} {Measurements incompatible in quantum
  theory cannot be measured jointly in any other no-signaling theory},\ }\href
  {https://doi.org/https://doi.org/10.1103/physrevlett.103.230402} {\bibfield
  {journal} {\bibinfo  {journal} {\prl}\ }\textbf {\bibinfo {volume} {103}},\
  \bibinfo {pages} {230402} (\bibinfo {year} {2009})}\BibitemShut {NoStop}%
\bibitem [{\citenamefont {Werner}(1998)}]{werner1998optimal}%
  \BibitemOpen
  \bibfield  {author} {\bibinfo {author} {\bibfnamefont {R.~F.}\ \bibnamefont
  {Werner}},\ }\bibfield  {title} {\bibinfo {title} {Optimal cloning of pure
  states},\ }\href@noop {} {\bibfield  {journal} {\bibinfo  {journal} {\pra}\
  }\textbf {\bibinfo {volume} {58}},\ \bibinfo {pages} {1827} (\bibinfo {year}
  {1998})}\BibitemShut {NoStop}%
\bibitem [{\citenamefont {Keyl}\ and\ \citenamefont {Werner}(1999)}]{keyl1999}%
  \BibitemOpen
  \bibfield  {author} {\bibinfo {author} {\bibfnamefont {M.}~\bibnamefont
  {Keyl}}\ and\ \bibinfo {author} {\bibfnamefont {R.~F.}\ \bibnamefont
  {Werner}},\ }\bibfield  {title} {\bibinfo {title} {Optimal cloning of pure
  states, testing single clones},\ }\href {https://doi.org/10.1063/1.532887}
  {\bibfield  {journal} {\bibinfo  {journal} {\jmathphys}\ }\textbf {\bibinfo
  {volume} {40}},\ \bibinfo {pages} {3283} (\bibinfo {year}
  {1999})}\BibitemShut {NoStop}%
\bibitem [{\citenamefont {G\"uhne}\ \emph
  {et~al.}(2023{\natexlab{b}})\citenamefont {G\"uhne}, \citenamefont
  {Haapasalo}, \citenamefont {Kraft}, \citenamefont {Pellonp\"a\"a},\ and\
  \citenamefont {Uola}}]{RevModPhys.95.011003}%
  \BibitemOpen
  \bibfield  {author} {\bibinfo {author} {\bibfnamefont {O.}~\bibnamefont
  {G\"uhne}}, \bibinfo {author} {\bibfnamefont {E.}~\bibnamefont {Haapasalo}},
  \bibinfo {author} {\bibfnamefont {T.}~\bibnamefont {Kraft}}, \bibinfo
  {author} {\bibfnamefont {J.-P.}\ \bibnamefont {Pellonp\"a\"a}},\ and\
  \bibinfo {author} {\bibfnamefont {R.}~\bibnamefont {Uola}},\ }\bibfield
  {title} {\bibinfo {title} {Colloquium: Incompatible measurements in quantum
  information science},\ }\href {https://doi.org/10.1103/RevModPhys.95.011003}
  {\bibfield  {journal} {\bibinfo  {journal} {Rev. Mod. Phys.}\ }\textbf
  {\bibinfo {volume} {95}},\ \bibinfo {pages} {011003} (\bibinfo {year}
  {2023}{\natexlab{b}})}\BibitemShut {NoStop}%
\bibitem [{\citenamefont {Busch}(1986)}]{Busch86}%
  \BibitemOpen
  \bibfield  {author} {\bibinfo {author} {\bibfnamefont {P.}~\bibnamefont
  {Busch}},\ }\bibfield  {title} {\bibinfo {title} {Unsharp reality and joint
  measurements for spin observables},\ }\href
  {https://doi.org/10.1103/PhysRevD.33.2253} {\bibfield  {journal} {\bibinfo
  {journal} {Phys. Rev. D}\ }\textbf {\bibinfo {volume} {33}},\ \bibinfo
  {pages} {2253} (\bibinfo {year} {1986})}\BibitemShut {NoStop}%
\bibitem [{\citenamefont {Pal}\ and\ \citenamefont {Ghosh}(2011)}]{Pal_2011}%
  \BibitemOpen
  \bibfield  {author} {\bibinfo {author} {\bibfnamefont {R.}~\bibnamefont
  {Pal}}\ and\ \bibinfo {author} {\bibfnamefont {S.}~\bibnamefont {Ghosh}},\
  }\bibfield  {title} {\bibinfo {title} {Approximate joint measurement of qubit
  observables through an arthur–kelly model},\ }\href
  {https://doi.org/10.1088/1751-8113/44/48/485303} {\bibfield  {journal}
  {\bibinfo  {journal} {\jphysa}\ }\textbf {\bibinfo {volume} {44}},\ \bibinfo
  {pages} {485303} (\bibinfo {year} {2011})}\BibitemShut {NoStop}%
\bibitem [{\citenamefont {Yu}\ and\ \citenamefont {Oh}(2013)}]{yu2013quantum}%
  \BibitemOpen
  \bibfield  {author} {\bibinfo {author} {\bibfnamefont {S.}~\bibnamefont
  {Yu}}\ and\ \bibinfo {author} {\bibfnamefont {C.}~\bibnamefont {Oh}},\
  }\bibfield  {title} {\bibinfo {title} {Quantum contextuality and joint
  measurement of three observables of a qubit},\ }\href
  {https://doi.org/10.48550/arXiv.1312.6470} {\bibfield  {journal} {\bibinfo
  {journal} {arXiv preprint arXiv:1312.6470}\ } (\bibinfo {year}
  {2013})}\BibitemShut {NoStop}%
\bibitem [{\citenamefont {Garcia-Patron~Sanchez}(2007)}]{garcia2007quantum}%
  \BibitemOpen
  \bibfield  {author} {\bibinfo {author} {\bibfnamefont {R.}~\bibnamefont
  {Garcia-Patron~Sanchez}},\ }\href@noop {} {\bibinfo {title} {Quantum
  information with optical continuous variables: from bell tests to key
  distribution}} (\bibinfo {year} {2007})\BibitemShut {NoStop}%
\bibitem [{\citenamefont {Serafini}(2017)}]{serafini2017quantum}%
  \BibitemOpen
  \bibfield  {author} {\bibinfo {author} {\bibfnamefont {A.}~\bibnamefont
  {Serafini}},\ }\href@noop {} {\emph {\bibinfo {title} {Quantum continuous
  variables: a primer of theoretical methods}}}\ (\bibinfo  {publisher} {CRC
  press},\ \bibinfo {year} {2017})\BibitemShut {NoStop}%
\bibitem [{\citenamefont {Kiukas}\ and\ \citenamefont
  {Schultz}(2013)}]{kiukas2013informationally}%
  \BibitemOpen
  \bibfield  {author} {\bibinfo {author} {\bibfnamefont {J.}~\bibnamefont
  {Kiukas}}\ and\ \bibinfo {author} {\bibfnamefont {J.}~\bibnamefont
  {Schultz}},\ }\bibfield  {title} {\bibinfo {title} {Informationally complete
  sets of gaussian measurements},\ }\href
  {https://doi.org/10.1088/1751-8113/46/48/485303} {\bibfield  {journal}
  {\bibinfo  {journal} {\jphysa}\ }\textbf {\bibinfo {volume} {46}},\ \bibinfo
  {pages} {485303} (\bibinfo {year} {2013})}\BibitemShut {NoStop}%
\bibitem [{\citenamefont {Renner}\ and\ \citenamefont
  {Wolf}(2003)}]{renner2003new}%
  \BibitemOpen
  \bibfield  {author} {\bibinfo {author} {\bibfnamefont {R.}~\bibnamefont
  {Renner}}\ and\ \bibinfo {author} {\bibfnamefont {S.}~\bibnamefont {Wolf}},\
  }\bibfield  {title} {\bibinfo {title} {New bounds in secret-key agreement:
  The gap between formation and secrecy extraction},\ }in\ \href@noop {} {\emph
  {\bibinfo {booktitle} {Advances in Cryptology—EUROCRYPT 2003: International
  Conference on the Theory and Applications of Cryptographic Techniques,
  Warsaw, Poland, May 4--8, 2003 Proceedings 22}}}\ (\bibinfo {organization}
  {Springer},\ \bibinfo {year} {2003})\ pp.\ \bibinfo {pages}
  {562--577}\BibitemShut {NoStop}%
\bibitem [{\citenamefont {Devetak}\ and\ \citenamefont
  {Winter}(2005)}]{devetak2005distillation}%
  \BibitemOpen
  \bibfield  {author} {\bibinfo {author} {\bibfnamefont {I.}~\bibnamefont
  {Devetak}}\ and\ \bibinfo {author} {\bibfnamefont {A.}~\bibnamefont
  {Winter}},\ }\bibfield  {title} {\bibinfo {title} {Distillation of secret key
  and entanglement from quantum states},\ }\href
  {https://doi.org/10.1098/rspa.2004.1372} {\bibfield  {journal} {\bibinfo
  {journal} {\prsa}\ }\textbf {\bibinfo {volume} {461}},\ \bibinfo {pages}
  {207} (\bibinfo {year} {2005})}\BibitemShut {NoStop}%
\bibitem [{\citenamefont {Renner}\ \emph {et~al.}(2005)\citenamefont {Renner},
  \citenamefont {Gisin},\ and\ \citenamefont {Kraus}}]{rennerrate}%
  \BibitemOpen
  \bibfield  {author} {\bibinfo {author} {\bibfnamefont {R.}~\bibnamefont
  {Renner}}, \bibinfo {author} {\bibfnamefont {N.}~\bibnamefont {Gisin}},\ and\
  \bibinfo {author} {\bibfnamefont {B.}~\bibnamefont {Kraus}},\ }\bibfield
  {title} {\bibinfo {title} {Information-theoretic security proof for
  quantum-key-distribution protocols},\ }\href
  {https://doi.org/10.1103/PhysRevA.72.012332} {\bibfield  {journal} {\bibinfo
  {journal} {\pra}\ }\textbf {\bibinfo {volume} {72}},\ \bibinfo {pages}
  {012332} (\bibinfo {year} {2005})}\BibitemShut {NoStop}%
\bibitem [{\citenamefont {Ko{\l}ody{\'n}ski}\ \emph {et~al.}(2020)\citenamefont
  {Ko{\l}ody{\'n}ski}, \citenamefont {M{\'a}ttar}, \citenamefont {Skrzypczyk},
  \citenamefont {Woodhead}, \citenamefont {Cavalcanti}, \citenamefont
  {Banaszek},\ and\ \citenamefont {Ac{\'\i}n}}]{kolodynski2020device}%
  \BibitemOpen
  \bibfield  {author} {\bibinfo {author} {\bibfnamefont {J.}~\bibnamefont
  {Ko{\l}ody{\'n}ski}}, \bibinfo {author} {\bibfnamefont {A.}~\bibnamefont
  {M{\'a}ttar}}, \bibinfo {author} {\bibfnamefont {P.}~\bibnamefont
  {Skrzypczyk}}, \bibinfo {author} {\bibfnamefont {E.}~\bibnamefont
  {Woodhead}}, \bibinfo {author} {\bibfnamefont {D.}~\bibnamefont
  {Cavalcanti}}, \bibinfo {author} {\bibfnamefont {K.}~\bibnamefont
  {Banaszek}},\ and\ \bibinfo {author} {\bibfnamefont {A.}~\bibnamefont
  {Ac{\'\i}n}},\ }\bibfield  {title} {\bibinfo {title} {Device-independent
  quantum key distribution with single-photon sources},\ }\href
  {https://doi.org/https://doi.org/10.22331/q-2020-04-30-260} {\bibfield
  {journal} {\bibinfo  {journal} {Quantum}\ }\textbf {\bibinfo {volume} {4}},\
  \bibinfo {pages} {260} (\bibinfo {year} {2020})}\BibitemShut {NoStop}%
\bibitem [{Note1()}]{Note1}%
  \BibitemOpen
  \bibinfo {note} {In the case of DIQKD, our attacks improve on the ones of
  \cite {kolodynski2020device} because they take into account not only losses
  but also white-noise. However, we can further improve these attacks by
  targeting both CMUs of a DIQKD protocol, see next subsection. We also remark
  that it is important to consider separately the case where a no-click outcome
  is kept as a distinct outcome in Alice and Bob's post-processing and the case
  where it is discarded. In \cite {kolodynski2020device} only the former case
  is analyzed. An example of a DIQKD protocol where binning of the no-click
  outcome leads to a positive key rate for a detection efficiency that is lower
  than the threshold computed in \cite {kolodynski2020device} can be found in
  \cite {Masini2022simplepractical}.}\BibitemShut {Stop}%
\bibitem [{\citenamefont {Tomamichel}\ \emph {et~al.}(2013)\citenamefont
  {Tomamichel}, \citenamefont {Fehr}, \citenamefont {Kaniewski},\ and\
  \citenamefont {Wehner}}]{tomamichel2013one}%
  \BibitemOpen
  \bibfield  {author} {\bibinfo {author} {\bibfnamefont {M.}~\bibnamefont
  {Tomamichel}}, \bibinfo {author} {\bibfnamefont {S.}~\bibnamefont {Fehr}},
  \bibinfo {author} {\bibfnamefont {J.}~\bibnamefont {Kaniewski}},\ and\
  \bibinfo {author} {\bibfnamefont {S.}~\bibnamefont {Wehner}},\ }\bibfield
  {title} {\bibinfo {title} {One-sided device-independent qkd and
  position-based cryptography from monogamy games},\ }in\ \href@noop {} {\emph
  {\bibinfo {booktitle} {Advances in Cryptology--EUROCRYPT 2013: 32nd Annual
  International Conference on the Theory and Applications of Cryptographic
  Techniques, Athens, Greece, May 26-30, 2013. Proceedings 32}}}\ (\bibinfo
  {organization} {Springer},\ \bibinfo {year} {2013})\ pp.\ \bibinfo {pages}
  {609--625}\BibitemShut {NoStop}%
\bibitem [{\citenamefont {Woodhead}(2016)}]{woodhead2016semi}%
  \BibitemOpen
  \bibfield  {author} {\bibinfo {author} {\bibfnamefont {E.}~\bibnamefont
  {Woodhead}},\ }\bibfield  {title} {\bibinfo {title} {Semi device independence
  of the bb84 protocol},\ }\href
  {https://doi.org/10.1088/1367-2630/18/5/055010} {\bibfield  {journal}
  {\bibinfo  {journal} {\njp}\ }\textbf {\bibinfo {volume} {18}},\ \bibinfo
  {pages} {055010} (\bibinfo {year} {2016})}\BibitemShut {NoStop}%
\bibitem [{\citenamefont {Branciard}\ \emph {et~al.}(2012)\citenamefont
  {Branciard}, \citenamefont {Cavalcanti}, \citenamefont {Walborn},
  \citenamefont {Scarani},\ and\ \citenamefont {Wiseman}}]{branciard2012one}%
  \BibitemOpen
  \bibfield  {author} {\bibinfo {author} {\bibfnamefont {C.}~\bibnamefont
  {Branciard}}, \bibinfo {author} {\bibfnamefont {E.~G.}\ \bibnamefont
  {Cavalcanti}}, \bibinfo {author} {\bibfnamefont {S.~P.}\ \bibnamefont
  {Walborn}}, \bibinfo {author} {\bibfnamefont {V.}~\bibnamefont {Scarani}},\
  and\ \bibinfo {author} {\bibfnamefont {H.~M.}\ \bibnamefont {Wiseman}},\
  }\bibfield  {title} {\bibinfo {title} {One-sided device-independent quantum
  key distribution: Security, feasibility, and the connection with steering},\
  }\href {https://doi.org/10.1103/PhysRevA.85.010301} {\bibfield  {journal}
  {\bibinfo  {journal} {\pra}\ }\textbf {\bibinfo {volume} {85}},\ \bibinfo
  {pages} {010301} (\bibinfo {year} {2012})}\BibitemShut {NoStop}%
\bibitem [{\citenamefont {Masini}\ and\ \citenamefont
  {Sarkar}(2024)}]{masini-onesided}%
  \BibitemOpen
  \bibfield  {author} {\bibinfo {author} {\bibfnamefont {M.}~\bibnamefont
  {Masini}}\ and\ \bibinfo {author} {\bibfnamefont {S.}~\bibnamefont
  {Sarkar}},\ }\bibfield  {title} {\bibinfo {title} {One-sided di-qkd secure
  against coherent attacks over long distances},\ }\href
  {https://doi.org/10.48550/arXiv.2403.11850} {\bibfield  {journal} {\bibinfo
  {journal} {arXiv preprint arXiv:2403.11850}\ } (\bibinfo {year}
  {2024})}\BibitemShut {NoStop}%
\bibitem [{\citenamefont {Ac\'{\i}n}\ \emph
  {et~al.}(2007{\natexlab{b}})\citenamefont {Ac\'{\i}n}, \citenamefont
  {Brunner}, \citenamefont {Gisin}, \citenamefont {Massar}, \citenamefont
  {Pironio},\ and\ \citenamefont {Scarani}}]{ref:ab2007}%
  \BibitemOpen
  \bibfield  {author} {\bibinfo {author} {\bibfnamefont {A.}~\bibnamefont
  {Ac\'{\i}n}}, \bibinfo {author} {\bibfnamefont {N.}~\bibnamefont {Brunner}},
  \bibinfo {author} {\bibfnamefont {N.}~\bibnamefont {Gisin}}, \bibinfo
  {author} {\bibfnamefont {S.}~\bibnamefont {Massar}}, \bibinfo {author}
  {\bibfnamefont {S.}~\bibnamefont {Pironio}},\ and\ \bibinfo {author}
  {\bibfnamefont {V.}~\bibnamefont {Scarani}},\ }\bibfield  {title} {\bibinfo
  {title} {Device-independent security of quantum cryptography against
  collective attacks},\ }\href {https://doi.org/10.1103/PhysRevLett.98.230501}
  {\bibfield  {journal} {\bibinfo  {journal} {\prl}\ }\textbf {\bibinfo
  {volume} {98}},\ \bibinfo {pages} {230501} (\bibinfo {year}
  {2007}{\natexlab{b}})}\BibitemShut {NoStop}%
\bibitem [{\citenamefont {Leroy}\ \emph {et~al.}(2024)\citenamefont {Leroy},
  \citenamefont {Peter~Lobo}, \citenamefont {Pauwels},\ and\ \citenamefont
  {Pironio}}]{ref:routed}%
  \BibitemOpen
  \bibfield  {author} {\bibinfo {author} {\bibfnamefont {T.}~\bibnamefont
  {Leroy}}, \bibinfo {author} {\bibfnamefont {E.}~\bibnamefont {Peter~Lobo}},
  \bibinfo {author} {\bibfnamefont {J.}~\bibnamefont {Pauwels}},\ and\ \bibinfo
  {author} {\bibfnamefont {S.}~\bibnamefont {Pironio}},\ }\href
  {https://doi.org/10.48550/arXiv.2404.01202} {\bibinfo {title}
  {Device-independent quantum key distribution based on routed bell
  experiments}} (\bibinfo {year} {2024})\BibitemShut {NoStop}%
\bibitem [{\citenamefont {Ioannou}\ \emph
  {et~al.}(2022{\natexlab{c}})\citenamefont {Ioannou}, \citenamefont
  {Sekatski}, \citenamefont {Abbott}, \citenamefont {Rosset}, \citenamefont
  {Bancal},\ and\ \citenamefont {Brunner}}]{Ioannou2022njp}%
  \BibitemOpen
  \bibfield  {author} {\bibinfo {author} {\bibfnamefont {M.}~\bibnamefont
  {Ioannou}}, \bibinfo {author} {\bibfnamefont {P.}~\bibnamefont {Sekatski}},
  \bibinfo {author} {\bibfnamefont {A.~A.}\ \bibnamefont {Abbott}}, \bibinfo
  {author} {\bibfnamefont {D.}~\bibnamefont {Rosset}}, \bibinfo {author}
  {\bibfnamefont {J.-D.}\ \bibnamefont {Bancal}},\ and\ \bibinfo {author}
  {\bibfnamefont {N.}~\bibnamefont {Brunner}},\ }\bibfield  {title} {\bibinfo
  {title} {Receiver-device-independent quantum key distribution protocols},\
  }\href {https://doi.org/10.1088/1367-2630/ac71bc} {\bibfield  {journal}
  {\bibinfo  {journal} {\njp}\ }\textbf {\bibinfo {volume} {24}},\ \bibinfo
  {pages} {063006} (\bibinfo {year} {2022}{\natexlab{c}})}\BibitemShut
  {NoStop}%
\bibitem [{\citenamefont {Bennett}(1992)}]{b92}%
  \BibitemOpen
  \bibfield  {author} {\bibinfo {author} {\bibfnamefont {C.~H.}\ \bibnamefont
  {Bennett}},\ }\bibfield  {title} {\bibinfo {title} {Quantum cryptography
  using any two nonorthogonal states},\ }\href
  {https://doi.org/10.1103/PhysRevLett.68.3121} {\bibfield  {journal} {\bibinfo
   {journal} {\prl}\ }\textbf {\bibinfo {volume} {68}},\ \bibinfo {pages}
  {3121} (\bibinfo {year} {1992})}\BibitemShut {NoStop}%
\bibitem [{\citenamefont {Farkas}\ \emph {et~al.}(2021)\citenamefont {Farkas},
  \citenamefont {Balanz{\'o}-Juand{\'o}}, \citenamefont {{\L}ukanowski},
  \citenamefont {Ko{\l}ody{\'n}ski},\ and\ \citenamefont
  {Ac{\'\i}n}}]{farkas2021bell}%
  \BibitemOpen
  \bibfield  {author} {\bibinfo {author} {\bibfnamefont {M.}~\bibnamefont
  {Farkas}}, \bibinfo {author} {\bibfnamefont {M.}~\bibnamefont
  {Balanz{\'o}-Juand{\'o}}}, \bibinfo {author} {\bibfnamefont {K.}~\bibnamefont
  {{\L}ukanowski}}, \bibinfo {author} {\bibfnamefont {J.}~\bibnamefont
  {Ko{\l}ody{\'n}ski}},\ and\ \bibinfo {author} {\bibfnamefont
  {A.}~\bibnamefont {Ac{\'\i}n}},\ }\bibfield  {title} {\bibinfo {title} {Bell
  nonlocality is not sufficient for the security of standard device-independent
  quantum key distribution protocols},\ }\href
  {https://doi.org/10.1103/PhysRevLett.127.050503} {\bibfield  {journal}
  {\bibinfo  {journal} {\prl}\ }\textbf {\bibinfo {volume} {127}},\ \bibinfo
  {pages} {050503} (\bibinfo {year} {2021})}\BibitemShut {NoStop}%
\bibitem [{\citenamefont {{\L}ukanowski}\ \emph {et~al.}(2023)\citenamefont
  {{\L}ukanowski}, \citenamefont {Balanz{\'o}-Juand{\'o}}, \citenamefont
  {Farkas}, \citenamefont {Ac{\'\i}n},\ and\ \citenamefont
  {Ko{\l}ody{\'n}ski}}]{lukanowski2022upper}%
  \BibitemOpen
  \bibfield  {author} {\bibinfo {author} {\bibfnamefont {K.}~\bibnamefont
  {{\L}ukanowski}}, \bibinfo {author} {\bibfnamefont {M.}~\bibnamefont
  {Balanz{\'o}-Juand{\'o}}}, \bibinfo {author} {\bibfnamefont {M.}~\bibnamefont
  {Farkas}}, \bibinfo {author} {\bibfnamefont {A.}~\bibnamefont {Ac{\'\i}n}},\
  and\ \bibinfo {author} {\bibfnamefont {J.}~\bibnamefont
  {Ko{\l}ody{\'n}ski}},\ }\bibfield  {title} {\bibinfo {title} {Upper bounds on
  key rates in device-independent quantum key distribution based on
  convex-combination attacks},\ }\href
  {https://doi.org/10.22331/q-2023-12-06-1199} {\bibfield  {journal} {\bibinfo
  {journal} {Quantum}\ }\textbf {\bibinfo {volume} {7}},\ \bibinfo {pages}
  {1199} (\bibinfo {year} {2023})}\BibitemShut {NoStop}%
\bibitem [{\citenamefont {Acin}\ \emph {et~al.}(2006)\citenamefont {Acin},
  \citenamefont {Massar},\ and\ \citenamefont {Pironio}}]{acin2006efficient}%
  \BibitemOpen
  \bibfield  {author} {\bibinfo {author} {\bibfnamefont {A.}~\bibnamefont
  {Acin}}, \bibinfo {author} {\bibfnamefont {S.}~\bibnamefont {Massar}},\ and\
  \bibinfo {author} {\bibfnamefont {S.}~\bibnamefont {Pironio}},\ }\bibfield
  {title} {\bibinfo {title} {Efficient quantum key distribution secure against
  no-signalling eavesdroppers},\ }\href
  {https://doi.org/10.1088/1367-2630/8/8/126} {\bibfield  {journal} {\bibinfo
  {journal} {\njp}\ }\textbf {\bibinfo {volume} {8}},\ \bibinfo {pages} {126}
  (\bibinfo {year} {2006})}\BibitemShut {NoStop}%
\bibitem [{\citenamefont {Nelder}\ and\ \citenamefont
  {Mead}(1965)}]{nelder-mead}%
  \BibitemOpen
  \bibfield  {author} {\bibinfo {author} {\bibfnamefont {J.~A.}\ \bibnamefont
  {Nelder}}\ and\ \bibinfo {author} {\bibfnamefont {R.}~\bibnamefont {Mead}},\
  }\bibfield  {title} {\bibinfo {title} {{A Simplex Method for Function
  Minimization}},\ }\href {https://doi.org/10.1093/comjnl/7.4.308} {\bibfield
  {journal} {\bibinfo  {journal} {The Computer Journal}\ }\textbf {\bibinfo
  {volume} {7}},\ \bibinfo {pages} {308} (\bibinfo {year} {1965})}\BibitemShut
  {NoStop}%
\bibitem [{\citenamefont {Heinosaari}\ \emph {et~al.}(2008)\citenamefont
  {Heinosaari}, \citenamefont {Reitzner},\ and\ \citenamefont
  {Stano}}]{Heinosaari2008}%
  \BibitemOpen
  \bibfield  {author} {\bibinfo {author} {\bibfnamefont {T.}~\bibnamefont
  {Heinosaari}}, \bibinfo {author} {\bibfnamefont {D.}~\bibnamefont
  {Reitzner}},\ and\ \bibinfo {author} {\bibfnamefont {P.}~\bibnamefont
  {Stano}},\ }\bibfield  {title} {\bibinfo {title} {Notes on joint
  measurability of quantum observables},\ }\href
  {https://doi.org/10.1007/s10701-008-9256-7} {\bibfield  {journal} {\bibinfo
  {journal} {Foundations of Physics}\ }\textbf {\bibinfo {volume} {38}},\
  \bibinfo {pages} {1133} (\bibinfo {year} {2008})}\BibitemShut {NoStop}%
\bibitem [{\citenamefont {Liang}\ \emph {et~al.}(2011)\citenamefont {Liang},
  \citenamefont {Spekkens},\ and\ \citenamefont {Wiseman}}]{Liang20111}%
  \BibitemOpen
  \bibfield  {author} {\bibinfo {author} {\bibfnamefont {Y.-C.}\ \bibnamefont
  {Liang}}, \bibinfo {author} {\bibfnamefont {R.~W.}\ \bibnamefont
  {Spekkens}},\ and\ \bibinfo {author} {\bibfnamefont {H.~M.}\ \bibnamefont
  {Wiseman}},\ }\bibfield  {title} {\bibinfo {title} {Specker’s parable of
  the overprotective seer: A road to contextuality, nonlocality and
  complementarity},\ }\href
  {https://doi.org/https://doi.org/10.1016/j.physrep.2011.05.001} {\bibfield
  {journal} {\bibinfo  {journal} {Physics Reports}\ }\textbf {\bibinfo {volume}
  {506}},\ \bibinfo {pages} {1} (\bibinfo {year} {2011})}\BibitemShut {NoStop}%
\bibitem [{\citenamefont {Buscemi}\ \emph {et~al.}(2023)\citenamefont
  {Buscemi}, \citenamefont {Kobayashi}, \citenamefont {Minagawa}, \citenamefont
  {Perinotti},\ and\ \citenamefont {Tosini}}]{Buscemi2023unifyingdifferent}%
  \BibitemOpen
  \bibfield  {author} {\bibinfo {author} {\bibfnamefont {F.}~\bibnamefont
  {Buscemi}}, \bibinfo {author} {\bibfnamefont {K.}~\bibnamefont {Kobayashi}},
  \bibinfo {author} {\bibfnamefont {S.}~\bibnamefont {Minagawa}}, \bibinfo
  {author} {\bibfnamefont {P.}~\bibnamefont {Perinotti}},\ and\ \bibinfo
  {author} {\bibfnamefont {A.}~\bibnamefont {Tosini}},\ }\bibfield  {title}
  {\bibinfo {title} {Unifying different notions of quantum incompatibility into
  a strict hierarchy of resource theories of communication},\ }\href
  {https://doi.org/10.22331/q-2023-06-07-1035} {\bibfield  {journal} {\bibinfo
  {journal} {{Quantum}}\ }\textbf {\bibinfo {volume} {7}},\ \bibinfo {pages}
  {1035} (\bibinfo {year} {2023})}\BibitemShut {NoStop}%
\bibitem [{\citenamefont {D’Ariano}\ \emph {et~al.}(2022)\citenamefont
  {D’Ariano}, \citenamefont {Perinotti},\ and\ \citenamefont
  {Tosini}}]{DAriano_2022}%
  \BibitemOpen
  \bibfield  {author} {\bibinfo {author} {\bibfnamefont {G.~M.}\ \bibnamefont
  {D’Ariano}}, \bibinfo {author} {\bibfnamefont {P.}~\bibnamefont
  {Perinotti}},\ and\ \bibinfo {author} {\bibfnamefont {A.}~\bibnamefont
  {Tosini}},\ }\bibfield  {title} {\bibinfo {title} {Incompatibility of
  observables, channels and instruments in information theories},\ }\href
  {https://doi.org/10.1088/1751-8121/ac88a7} {\bibfield  {journal} {\bibinfo
  {journal} {\jphysa}\ }\textbf {\bibinfo {volume} {55}},\ \bibinfo {pages}
  {394006} (\bibinfo {year} {2022})}\BibitemShut {NoStop}%
\bibitem [{\citenamefont {Masini}\ \emph {et~al.}(2022)\citenamefont {Masini},
  \citenamefont {Pironio},\ and\ \citenamefont
  {Woodhead}}]{Masini2022simplepractical}%
  \BibitemOpen
  \bibfield  {author} {\bibinfo {author} {\bibfnamefont {M.}~\bibnamefont
  {Masini}}, \bibinfo {author} {\bibfnamefont {S.}~\bibnamefont {Pironio}},\
  and\ \bibinfo {author} {\bibfnamefont {E.}~\bibnamefont {Woodhead}},\
  }\bibfield  {title} {\bibinfo {title} {Simple and practical {DIQKD} security
  analysis via {BB}84-type uncertainty relations and {P}auli correlation
  constraints},\ }\href {https://doi.org/10.22331/q-2022-10-20-843} {\bibfield
  {journal} {\bibinfo  {journal} {{Quantum}}\ }\textbf {\bibinfo {volume}
  {6}},\ \bibinfo {pages} {843} (\bibinfo {year} {2022})}\BibitemShut {NoStop}%
\end{thebibliography}%

\end{document}